\theoremstyle{definition}
\newtheorem{definition}{Definition}[section]
\theoremstyle{theorem}
\newtheorem{theorem}{Theorem}[section]
\theoremstyle{assumption}
\newtheorem{assumption}{Assumption}[section]
\theoremstyle{proposition}
\newtheorem{proposition}{Proposition}[section]
\theoremstyle{lemma}
\newtheorem{lemma}{Lemma}[section]
\newtheorem{corollary}{Corollary}[section]
\DeclareMathOperator*{\argmax}{arg\,max}
\DeclareMathOperator*{\argmin}{arg\,min}
\title{Robustness of Online Proportional Response in Stochastic Online Fisher Markets: a Decentralized Approach}
\author{Yongge Yang\thanks{Economics and Management School, Wuhan University, Wuhan, Hubei, People's Republic of China. yonggeyang@whu.edu.cn} \and Yu-Ching Lee\thanks{Department of Industrial Engineering, National Tsing Hua University. 101 Sec 2, Kuang-Fu Rd, Hsinchu City 300, Taiwan. yclee@ie.nthu.edu.tw}\and Po-An Chen\thanks{Institute of Information Management, National Yang Ming Chiao Tung University. 1001 University Rd, Hsinchu City 300, Taiwan. poanchen@nycu.edu.tw} \and Chuang-Chieh Lin\thanks{Department of Computer Science and Information Engineering
Tamkang University. 151 Yingzhuan Rd, New Taipei City 251301, Taiwan}}
\begin{document}

\maketitle
\begin{abstract}
        \textbf{Problem Definition:} This study is focused on periodic Fisher markets where items with time-dependent and stochastic values are regularly replenished and buyers aim to maximize their utilities by spending budgets on these items. Traditional approaches of finding a market equilibrium in the single-period Fisher market rely on complete information about buyers' utility functions and budgets. However, it is impractical to consistently enforce buyers to disclose this private information in a periodic setting.
    \textbf{Methodology/results}: We introduce a distributed auction algorithm, \emph{online proportional response}, wherein buyers update bids solely based on the randomly fluctuating values of items in each period. The market then allocates items based on the bids provided by the buyers. We show connections between the online proportional response and the online mirror descent algorithm. %Our algorithm attains a constant regret upper bound in expectation under independent and identically distributed inputs. This bound is contingent on the number of items and buyers. Here, our regret measure is the \textcolor{red}{optimality gap in the objective of the Shmyrev convex program (this only referes to fairness regret? not individual regret)}, a variant of the Eisenberg-Gale convex program producing market equilibrium. This gap is measured between our online algorithm and an offline oracle equipped with full information about buyers' utility functions and budgets. Additionally, we conduct analysis of regret under various non-stationary stochastic input models to demonstrate the algorithm's efficiency across diverse scenarios. 
    Utilizing the known Shmyrev convex program, a variant of the Eisenberg-Gale convex program that establishes market equilibrium of a Fisher market, two performance metrics are proposed: the fairness regret is the cumulative difference in the objective value of a stochastic Shmyrev convex program between an online algorithm and an offline optimum, and the individual buyer's regret %whose upper bound \textcolor{red}{is related to the fairness regret} %
    gauges the deviation in terms of utility for each buyer between the online algorithm and the offline optimum. Our algorithm attains a {problem-dependent upper bound in fairness regret} under stationary inputs. This bound is contingent on the number of items and buyers. Additionally, we conduct analysis of regret under various non-stationary stochastic input models to demonstrate the algorithm's efficiency across diverse scenarios.
    \textbf{Managerial implications}: The online proportional response algorithm addresses privacy concerns by allowing buyers to update bids without revealing sensitive information and ensures decentralized decision-making, fostering autonomy and potential improvements in buyer satisfaction. Furthermore, our algorithm is universally applicable to many worlds and shows the robust performance guarantees.
    %The online proportional response is 
    %easy to implement and provably-item for buyers, and the market supply is aligned with the demand for each period. 
    %The implementation of online proportional response is beneficial to the market because all buyers will have exhausted their budgets by the end of the campaign and the market supply is aligned with the demand for each period; it is also beneficial to the buyers because everyone does not have to disclose their private information.
\end{abstract}
  \section{Introduction}
Market equilibrium, a central concept in economics, plays a crucial role in understanding how markets function and achieve efficiency. In a Fisher market, which is one of the classical mathematical market models, there are $N$ buyers who start with money they have no desire to keep and 1 supplier who has $M$ items to be sold in exchange for money. A market equilibrium is characterized by a set of prices and allocations of items, ensuring that each buyer can maximize their utility within their budget constraint while simultaneously achieving market clearance where demand matches supply. Notably, the existence of market equilibrium of the Arrow–Debreu market was established in \cite{arrow1954existence} under mild conditions on the utility function of buyers; \cite{eisenberg1959consensus} proposed a centralized convex program to maximize the Nash social welfare (the geometric mean of the buyers' utilities), where the corresponding solution is the market equilibrium of the Fisher market.

Most of the existing works on market equilibrium have predominantly focused on static markets, wherein the characteristics of items, supplies, buyers, valuations, and budgets remain unchanged over time. However, in practical scenarios, markets exhibit high dynamics, wherein these properties may actually vary over time. To be more precise, we consider a scenario where the buyers' valuations toward items change over time while the types of items and buyers remain constant. {We now list some examples to describe the randomness of buyers' valuations:}

\begin{enumerate}
\item Internet advertising: Multiple advertisers compete for multiple ad placements on the website or search engine. When a user visits a website or performs a search, the advertising platform needs to allocate the ad slots to the advertisers based on factors like bid amount, relevance, and user preferences. Generally, these factors determine the advertisers' valuations on the ad slots and the heterogeneous users. 

\item Cloud computing: Multiple users submit jobs that require computational resources. The platform needs to allocate the resources to the incoming jobs efficiently, considering factors like resource requirements, deadlines, and user priorities.

\item Stockpile allocation: A government agency maintains a stockpile of emergency supplies, such as medical equipment, food, and water, and it needs to dynamically assess the demand of each region and allocate the available supplies accordingly. The valuation of emergency supplies in each region can change due to factors such as population density, vulnerability to specific types of disasters, and random ongoing emergency situations. Each region provides the list of  emergency supplies they want the most, which can be seen as a valuation, and the government distributes the emergency supplies. %(\textcolor{red}{I have revised this example based on your explanation. However, I am still a bit worried about this example. If there is no bid, would this example contradicts the sentence of ``The market determines allocations based on buyers' bids rather than their valuations, and therefore, buyers have no chance to misreport their valuations to obtain better allocations;''} \textcolor{blue}{This example is to explain that buyers's valuations change over time.}) \textcolor{red}{Motivating examples means the examples that motivate our research. If our algorithm does not fit this example, that will be a bit confusing. How about either eliminating "motivating" or "describing this example in a way that our algorithm can be applied"? }
%\textcolor{red}{(Will each region pay? Yang: I think it should be that the government allocate budget to each region. Then region did not do anything similar to bidding? How can this application be justified?) Yang: Each region provides the goods they want most, which can be seen as a valuation, and the government distributes the goods. This is like the EG convex program (centralized model).. How does it work? How does a region provide the goods it want most? Similarly, region A tells the government that it needs medicine and water urgently, while region B tells the government that it needs food urgently.  Lee: If the region does not pay or bid, they may exaggerate. In the previous two examples, company who submit the bids will need to pay. } 
\end{enumerate}

Although computing the market equilibrium via the Eisenberg and Gale (EG) convex program is polynomial-time solvable (\cite{jain2007polynomial}), the nature of centralized optimization requires the market to have complete information about the buyers' valuations of items and their budgets. However, the buyers can misreport valuations to obtain better allocations, see~\cite{adsul2010nash}, \cite{branzei2014fisher}, and~\cite{branzei2017nash}. It is impractical to enforce buyers to consistently report their true valuations, thereby ensuring adherence to the market equilibrium in each period. 

Motivated by practical considerations, this paper delves into the exploration of an online Fisher market setting. In this setting, items continually emerge for sale in each period, while the buyers' valuations of these items undergo random fluctuations based on factors specific to the current period, such as quality, quantity\footnote{The quantity is usually normalized to one, so it affects the valuation.}, market trends, and other relevant influences. Furthermore, we assume that the items are perishable, meaning that allocation decisions must be made immediately, or else the current item will be lost. 
For instance, in online advertising, multiple ad placements or URLs are available when a user visits a website, advertisers need to bid for and secure ad placements in real-time to reach their target audience effectively. The similar settings have been studied in \cite{manshadi2021fair}, \cite{calmon2021revenue}, \cite{sinclair2022sequential} etc.

We focus on exploring distributed approaches tailored for the aforementioned setting. In this context, the buyers are not required to disclose their private valuations to the market. Instead, the buyers update their expenditures/bids solely based on the randomly changing value of the items in each period. The market then allocates the items based on the expenditures/bids provided by the buyers. In light of this, we develop the bidding approach from the perspective of the buyers and aim to investigate the feasibility and effectiveness of achieving an efficient resource allocation and market equilibrium in an online environment without the need of revealing the buyers' valuations. 

% This scenario often happens in online resource allocation problems with fairness consideration, i.e., food bank,\cite{aleksandrov2015online}, resource allocation in computing system \cite{parkes2015beyond}, online advertising \cite{vazirani2010spending}.  The objective of this paper is to design a distributed algorithm that attains asymptotic item performance relative to the market equilibrium in hindsight. The buyers 

\subsection{Main Contributions}
We summarize our contributions as follows:\\
\textbf{Decentralized approach for online Fisher Markets.} We present an online scheme, termed \emph{online proportional response}, in which each buyer dynamically adjusts their bids based on the allocation they received in the previous period and the stochastic value of the goods observed in the current period. The online proportional response serves as a practical and intuitive bidding mechanism, and ensures several desirable properties: 
\begin{enumerate}
    \item  buyers are not required to disclose their private information in competition. The market determines allocations based on buyers' bids rather than their valuations, and therefore, buyers have no chance to misreport their valuations to obtain better allocations;
    \item the market supplies consistently match the demands in each period, ensuring the continuous clearance of the market; 
    \item buyers' budget remains constant % Yang: each buyer's budget $B_{i,t}=1/(N)$}
    in each period, resulting in the depletion of their total budgets by the conclusion end of the planning horizon. 
\end{enumerate}

\noindent\textbf{Connections to the online mirror descent.} We establish equivalence between the online proportional response method and the online mirror descent algorithm. Initially, we break down the Shmyrev (SH) convex program (\cite{shmyrev2009algorithm}), a variant of the EG convex program that also generates market equilibrium, from the perspective of online mirror descent (Lemma~\ref{lemma3.1}). Since the objective function of the SH convex program is not Lipschitz continuous, the standard online mirror descent approach is not applicable in this context. We address this challenge by selecting the Bregman divergence function accordingly and implementing a constant step-size (Proposition~\ref{pp3.1}). This results in a distributed structure for the algorithm, and the updates have a closed-form expression that aligns with the online proportional response (Lemma~\ref{lemma3.2}).

\noindent\textbf{Robust
performance guarantees in different stochastic worlds.}
We evaluate the performance of the online proportional response in terms of the \emph{fairness regret}, %which is the difference in the Nash social welfare between that of the online proportional response and that of the offline optimum 
which is the cumulative difference in the objective value of a stochastic Shmyrev convex program between an online algorithm and an offline optimum. The fairness regret is developed from the definition of the weighted proportional fairness that have been considered in \cite{kelly1998rate, bertsimas2011price, bateni2022fair}, and aligns with the notion of regret commonly utilized in online decision-making contexts, providing a clear assessment of performance. In addition, we qualify the \emph{individual regret}, which is defined as the difference in total utility experienced by a buyer between the online proportional response and the offline market equilibrium, where the latter achieves the optimal Nash social welfare in the offline setting. 

We analyze performances in a view of the online mirror descent algorithm and assume that the buyers have access to the noisy gradient (true gradient with a random noise), which is a typical setting in the literature of online learning (\cite{beck2003mirror, nedic2014stochastic}, etc.). We solve a more general class of online convex optimization problems \emph{without relying on the Lipschitz continuous condition but with merely ``relative smoothness"}, which is possibly of {independent interest}. %(particular interest?)} 
Suppose that the online proportional response is employed over a total of $T$ time periods. In each period, the quantities of each available item and the total expenditures of all buyers are normalized to one. We present the performances under various input (noise) models without the need to know in advance which input model we are facing and exemplify the robustness of proportional response under many stochastic worlds.

When the input (noise) is independently and identically drawn from a distribution that is unknown to the buyers and market, the online proportional response achieves an upper bound of $\log MN$ on the fairness regret, which depends specifically on the problem size and logarithmically scales with the number of items and buyers but \emph{not} increasing with the time horizon $T$. Regarding the individual regret, it has an upper bound of $\mathcal{O}(\sqrt{T})$, and the bounded growth rate indicates that individual regret grows sublinearly with respect to $T$. 

In order to reflect real-world scenarios more accurately, we also consider various \emph{nonstationary} input models. 
\begin{itemize}
\item Firstly, we analyze the performance in a scenario where an adversary introduces corruption in each period, but the input remains independent across periods. Remarkably, we establish that the online proportional response can recover the results achieved under i.i.d. input \emph{with small corruptions} (Theorem~\ref{nonstationary_independent}). 
\item Additionally, we investigate the performance in situations where the input exhibits dependence over time: we consider the \emph{ergodic} input model, where the dependency between data diminishes gradually as time progresses (Theorem~\ref{nonstationary_Ergodic}). 
\item Moreover, we examine the \emph{periodic} input model, where the data displays seasonality or cyclic patterns (Theorem \ref{nonstationary_Periodic}). 
\end{itemize}

\subsection{Related Work}
\paragraph{Distributed algorithms for Fisher markets.}
 One of the most popular distributed approaches in computing market equilibrium is the tatonnement process, which is a distributed price update process introduced by \cite{walras1900elements}. This process involves updating prices based on the dynamics of supply and demand. When the demand from buyers exceeds the supply of an item, its price increases; if the demand is insufficient, the price decreases. Another popular distributed algorithm proposed in \cite{zhang2011proportional} is the proportional response (PR) dynamics. In the PR dynamics, buyers update their expenditures in proportion to the contribution of each item to their current utility. \cite{birnbaum2011distributed} established a connection between the PR dynamics and the mirror descent algorithms using Bregman divergence. Based on this significant advancement in techniques, the authors of \cite{birnbaum2011distributed} further showed a linear convergence rate for linear Fisher markets. Convergence results for buyers with Constant Elasticity of Substitution (CES) utilities were derived in a subsequent work \cite{cheung2018dynamics}. 

 However, both the tatonnement process and PR dynamics, along with other existing distributed algorithms, require repeated interaction among buyers and the  supplier to achieve market equilibrium in a static market. In the case of the daily arrival of items without interruption, buyers do not engage in repeated interactions in the market. Instead, buyers tend to make daily purchase decisions based on ongoing factors and considerations. This unique characteristic of daily decision-making highlights the need for innovative approaches that accommodate the dynamics of online markets and the evolving preferences of buyers.

\paragraph{Online Fisher markets.} Online Fisher markets represent a type of online resource allocation problem that operates in a competitive environment. In a recent study (\cite{gao2021online}), the authors examined an online Fisher market where a single unit item arrives in each period. Their work focused on the dual formulation of the Eisenberg-Gale (EG) convex program. The allocation of the items for each period was determined using a first-price auction, where the buyer with the highest bid would receive the entire item. The authors adopted the convergence theorems of dual averaging and analyzed the individual buyer's regret and envy in a stationary setting. The nonstationary setting was subsequently explored in \cite{liao2022nonstationary}, which focused on analyzing the performance of dual averaging in such dynamic environments. Moreover, for the adversarial setting of online Fisher markets, where items arrive sequentially, the competitive analysis (\cite{borodin1998online}) of the proposed algorithms was investigated in studies \cite{azar2016allocate} and \cite{banerjee2022online}. These works provided insights into the competitive performance of algorithms under adversarial conditions. A different online setting, where buyers arrive sequentially in the market, was considered in studies \cite{sinclair2022sequential} and  \cite{manshadi2021fair}. These studies focused on designing policies that minimize fairness objectives. However, these works assumed that the preferences of the arriving buyers were known to the market. The issue of privacy preservation in such scenarios was addressed in \cite{jalota2022online}. \cite{kolumbus2023asynchronous} considered a scenario where an adversary selects a subset of the buyers who will update bids in ever period such that these buyers act asynchronously.

In our study, we explore a different setting within the realm of online Fisher markets. The sequential arrival of items has been studied in works such as \cite{azar2016allocate}, \cite{banerjee2022online}, \cite{gao2021online}, and \cite{liao2022nonstationary}. Similarly, the sequential arrival of buyers has been explored in \cite{sinclair2022sequential} and \cite{manshadi2021fair}. However, our focus is on scenarios where the same items are frequently restocked. Specifically, the same items will be sold out at the end of each period and restocked at the beginning of the next period. 
Due to the items being frequently restocked, it necessitates novel approaches to achieving market equilibrium and optimizing allocation efficiency.

\paragraph{Online resource allocation problems.} %\textcolor{red}{Online allocation with resource constraints. (I don't think it is a nice statement. sequential decision making under uncertainty is more like stochastic programming and dynamic programming.  Yang: Some works have used this sentence. I still think this sentence did not tell actually introduce the characteritics of online allocation. There is no difference with stochastic programming and dynamic programming in this case.)} 
%\textcolor{red}{In this class of problem, items or resources become available one after the other and the decisions regarding items allocation or assignment must be made. In the online resource allocation problem, requests arrive sequentially, and decision-makers need to make decisions upon the arrival of each request, consuming a certain amount of resources and obtaining rewards. (Can these two sentences be modified into smoother explanations about items and the requests?)} 
In the online resource allocation problem, customer requests for resources arrive sequentially, and decision-makers need to make allocation decisions upon the arrival of each request, consuming resources and obtaining rewards. Applications being studied include internet advertising (\cite{mehta2007adwords}) and online bidding with budget constraints (\cite{balseiro2019learning}). Recently, a study (\cite{balseiro2022best}) investigated online resource allocation problems within a data-driven setting. The authors proposed the dual mirror descent algorithm, which can serve as a pacing strategy to adjust multipliers periodically. A similar problem was examined in \cite{li2022online}, where the authors assumed the linearity of the objective function while acknowledging the potential non-concavity of the former. 

While minimizing regret, there is an issue of constraints violation associated with the aforementioned algorithms for the online allocation problems. Constraints violation occurs when the algorithm's decision leads to the unavailability of resources for allocation before the end of the planning horizon. In our work, the proposed algorithm can mitigate this issue by adopting the proportional response approach, which shares similarities with the \emph{multiplicative weights update} algorithm used in online learning. This ensures that the total bids in each period remain constant, preventing the buyers from depleting their budgets, i.e., the resource of buyers, before the end of the planning horizon.

\paragraph{Fair allocation.}
The issue of fairness in resource allocation problems has been extensively explored in welfare economics. In \cite{bertsimas2011price}, the authors delved into two prevalent notions of fairness: max-min fairness and proportional fairness. These criteria are widely used, and the study examined the loss in utilitarian efficiency resulting from ensuring a certain level of fairness. The authors also showed the connections between proportional fairness and the EG convex program. The seminal work \cite{varian1974equity} pointed out finding a fair allocation is the \emph{competitive equilibrium from equal incomes} solution which maximizes the Nash social welfare objective. The Nash social welfare objective provides a balance between fairness and efficiency and can be maximized by the solution of EG convex program. Thus, we analyze the Nash social welfare objective to demonstrate the proportional fairness, a perspective shared by various works, including  \cite{banerjee2022online},\cite{manshadi2021fair},\cite{sinclair2022sequential} etc.

\paragraph{Online learning algorithms.}
From a technical view, our designed algorithm lies in the context of the online learning algorithm. Most of the existing work builds on a general assumption that the gradient of the objective function is Lipschitz continuous, see \cite{nedic2014stochastic}. As far as we are aware, without this requirement, a deterministic setting was first introduced in \cite{birnbaum2011distributed} and offered a linear convergence rate by using the mirror descent with Bregman divergence. In \cite{lu2018relatively}, the authors generalized the required conditions for obtaining linear convergence rates under non-Lipschitz continuity. In this work, we provide regret-like results for online mirror descent with Bregman divergence in the adversarial setting. The premise of Lipschitz continuous gradient plays a crucial role in analyzing the regret in online learning algorithms. When the objective function is convex and satisfies the Lipschitz condition, the regret of an algorithm exhibits an order of $\mathcal{O}(\sqrt{T})$, where $T$ represents the number of periods or rounds. This indicates that the algorithm's performance improves with more iterations. On the other hand, when the objective function is strongly convex, the regret has a tighter bound and decreases logarithmically with~$T$, yielding an order of~$\mathcal{O}(\log T)$. This suggests that strongly convex objective functions enable faster convergence and reduced regret over time. It is worth noting that the theoretical lower bounds of regret have also been established in the literature, providing fundamental limits on the performance that can be achieved. Refer to the comprehensive introduction by Hazan et al. (~\cite{hazan2016introduction}) for further insights and in-depth understanding of these concepts. 

In addition to regret, in multi-agent learning, the last iterate convergence also attracts considerable attention. \cite{mertikopoulos2019learning} examined that in a strongly monotone game, the players' actions would converge to the Nash equilibrium of the underlying game if all players update their actions according to the dual averaging algorithm in a distributed manner. Lately, \cite{zhou2021robust} proved the last-iterate convergence rate in a strongly monotone game. Due to the special structure of the Shmyrev convex program, which is not a strongly monotone game, \textcolor{blue}{it is hard to give the results of the last-iterate convergence rate for the bids.}

%we are only able to offer the ergodic convergence rate for the bid price\textcolor{red}{, i.e., the time-averaged sequence $\hat{\mathbf{b}} =\sum_{t=1}^T \eta_t \mathbf{b}_t /\sum_{t=1}^T\eta_t$. (I don't see the equivalence between the phrase before "i.e." and the phrase after "i.e.". Can you revise either the phrase before "i.e." or the phrase after "i.e."? Meanwhile, if notations are really needed, how about putting it in the (), and also mention the future section number where $\hat{\mathbf{b}}$ is going to be defined. Otherwise, don't put notations before they are defined. I also found in the appendix about $\hat{\mathbf{b}}$ whose definition does not acutually involve $\eta_t$. It states$\hat{\mathbf{b}} =\sum_{t=1}^T \mathbf{b}_t /T$. Help correct the inconsistency, if any!)} \textcolor{blue}{I made a mistake here. the ergodic rate cannot be obtained.} \textcolor{red}{oh, should this entire paragraph be deleted or rewritten?}

\subsection{Notations}
In the sequel, we use $\mathcal{O}(\cdot)$ notation to suppress constants and the natural logarithm (base $e$), the lowercase letters to denote scalars, such as $x\in \mathbb{R}$, the boldface letters to denote vectors, such as $\mathbf{x} \in \mathbb{R}^n$, and $\mathbb{R}_+$ to denote the set of nonnegative reals. We denote by $\langle \mathbf{x},\mathbf{y} \rangle =\mathbf{x}^{\top}\mathbf{y}=\sum_{i =1}^n x_iy_i$ %\textcolor{red}{($\mathcal{N}$ is later defined as $\{1,...,N\}$, but I feel that most of the inner products are summing up from 1 to $n$ and $M$ (I believe $n$ is a typo there), which is the size of $\mathbf{b}$ and $\mathbf{b}$ . I was wondering if $\mathcal{N}$ can be deleted or changed. ) }
the inner product between vectors $\mathbf{x}$ and $\mathbf{y}$, where $x_i$ is the $i$-th coordinate of $\mathbf{x}$. Given a norm $\Vert \cdot \Vert$ of vectors,  the dual norm is defined as $\Vert \mathbf{y}\Vert_\ast=\max_{\Vert \mathbf{x} \Vert \leq 1 } \langle  \mathbf{x},\mathbf{y} \rangle$. For the standard Euclidean norm ($\ell_2$-norm), $\Vert \mathbf{x} \Vert=\Vert \mathbf{x} \Vert_2=\sqrt{\langle \mathbf{x},\mathbf{x}\rangle}$ and $\Vert \mathbf{x} \Vert_\ast =\Vert \mathbf{x} \Vert_2$. {For the {$\ell_1$-norm}, $\Vert \mathbf{x} \Vert_1=\sum_{i=1}^n \vert x_i \vert$ and $\Vert \mathbf{x} \Vert_{1\ast} =\Vert \mathbf{x} \Vert_\infty =\max_{1\leq i\leq n} \vert x_i \vert$.} 
%\textcolor{orange}{We denote the set $\{1,2,\ldots, N\}$ by $[N]$ when the context is clear.}

\section{Preliminaries}
We first briefly review the basic components in the static linear Fisher markets and the related convex programs for characterizing the market equilibrium. Then, we introduce the online setting for Fisher markets. %finally, we restate the work  \cite{birnbaum2011distributed}, which proposed a distributed algorithm to find the market equilibrium of linear Fisher markets via mirror descents. 
\subsection{Static (Offline) Linear Fisher Markets}
In a linear Fisher markets model, there are $N$ buyers and $M$ perfectly divisible items. We use $i \in \mathcal{N}:=\left\lbrace 1,2,\ldots,N \right\rbrace$ to index the buyers and $j \in \mathcal{J}: =\left\lbrace 1,2,\ldots,M \right\rbrace$ to index the items. Each  item $j \in \mathcal{M}$ has a supply that is normalized to one unit, and the value of each buyer $i \in \mathcal{N}$ towards one unit of each item $j$ is $v_{ij}\in \mathbb{R}_+$. 
Each {buyer~$i \in \mathcal{N}$} has a budget $B_i \in \mathbb{R}_+$.  Given any price vector {$\mathbf{p}=(p_1,\ldots,p_M) \in \mathbb{R}_+^M$}, the demand (or allocation) of each buyer~$i$ is a bundle of items $\mathbf{x}_i=(x_{i1},\ldots,x_{iM}) \in \mathbb{R}_+^M $ that maximizes her linear additive utility function $u_i:\mathbb{R}_+^M \mapsto  \mathbb{R}_+$ under the budget permission:
\begin{align}\label{single}
    \mathbf{x}_i:= \argmax\limits_{\mathbf{x}^\prime_i} \left\lbrace u_i(\mathbf{x}^\prime_i)=\sum_{j \in \mathcal{M}} v_{ij}x^\prime_{ij}: {\mathbf{p}^{\top} \mathbf{x}^\prime_i} \leq B_i, x^\prime_{ij} \geq 0, j \in \mathcal{M}\right\rbrace.
\end{align} 
%\ast
{A collection of demands $\mathbf{x}=(\mathbf{x}_1,\ldots,\mathbf{x}_N) \in \mathbb{R}^{M \times N}$.} 
\begin{definition}[Market equilibrium]
A market equilibrium  $(\mathbf{x}^\ast,\mathbf{p}^\ast)$ satisfies the following conditions:
\begin{enumerate}
    \item \textit{Buyer optimality}: The allocation of each buyer $i \in \mathcal{N}$ maximizes her utility subject to the budget constraint, i.e., for each buyer $i  \in \mathcal{N}$, $\mathbf{x}_i^\ast$ is the solution of (\ref{single}).
    \item \textit{Supply feasibility}: The market demand of item $j \in \mathcal{M}$ {does not} exceed its supply, i.e., for each item $j \in \mathcal{M}$, $\sum_{i \in \mathcal{N}} x_{ij}^\ast\leq 1$. 
    \item \textit{Market clearance}: If the market demand of item $j$ is equal to its supply, i.e., $\sum_{i \in \mathcal{N}} x_{ij}^\ast= 1$, then the item $j$ has a price $p_j^\ast>0$; otherwise, $p_j^\ast=0$ if $\sum_{i \in \mathcal{N}} x_{ij}^\ast\leq 1$ %\textcolor{green}{(The equality is necessary?) Yongge: Yes}.
\end{enumerate}
\end{definition}
Under equilibrium prices, all items are sold, buyers {deplete} their budgets, and they have no incentive to change their expenditures on the items. Each buyer only spends money on the items with maximum \emph{bang-per-buck} ratio at price $\mathbf{p}$, i.e., purchases the item {$j:= \argmax_{j'} \left\lbrace \frac{v_{ij'}}{p_{j'}}: j'\in \mathcal{M}\right\rbrace$}. Note that $p_j^\ast=0$ implies the bang-per-buck ratio tends to infinity if $v_{ij}>0$, which contradicts $\sum_{i \in \mathcal{N}} x_{ij}^\ast \leq 1$ because all buyers prefer this item $j$ to the other items. Hence, if each item has a potential buyer ($v_{ij}>0$ for any $j$), then the equilibrium prices are positive and all items {must be} fully sold out.
The equilibrium allocations can be characterized via the Eisenberg-Gale (EG) convex program (\cite{eisenberg1959consensus}) as below:
{\begin{align}\label{EG}
    \max\limits_{\mathbf{x}\in\mathbb{R}^{M\times N}} &\sum_{i \in \mathcal{N}} B_i\log u_i, \\\notag
     \mbox{subject to } & u_i=\sum_{j \in \mathcal{M}} v_{ij}x_{ij},\,
      i \in \mathcal{N},\\\notag
     & \sum_{i \in \mathcal{N}} x_{ij} \leq 1,\,  j \in \mathcal{M},\\\notag
       & x_{ij}\geq 0, \,  i \in \mathcal{N}, j \in \mathcal{M}.\notag
\end{align}
}
The set of optimal solutions of (\ref{EG}) is equal to the set of equilibrium allocations $\mathbf{x}^\ast$ and the utilities $u_i^\ast$; the equilibrium price $\mathbf{p}^\ast$ are the dual variables (or multipliers) of the supply constraints $\sum_{i \in \mathcal{N}} x_{ij}^\ast \leq 1, {\forall j \in \mathcal{M}}$. Moreover, the equilibrium utilities $u_i^\ast$ and prices $\mathbf{p}^\ast$ are unique (\cite{eisenberg1959consensus}).  The objective of (\ref{EG}) is the budget-weighted geometric mean of buyers' utilities and related to the Nash social welfare (\cite{cole2017convex},\cite{branzei2017nash}). %which is defined as $(\prod_{i=1}^N u_i^{B_i})^{1/\sum_{i \in \mathcal{N}}B_i}$. 
The equilibrium allocations $\mathbf{x}^\ast$ satisfy the following desirable properties:
\begin{enumerate}
    \item \textit{Pareto optimality}: For any allocation $\mathbf{x}^\prime \neq \mathbf{x}$ which makes the utility of some buyer $i$ increase, i.e., \[\sum_{j \in \mathcal{M}} v_{ij}x_{ij}^\prime > \sum_{j \in \mathcal{M}} v_{ij}x_{ij},\] it must have the utility of some other buyer $k$ decrease, i.e., \[\sum_{j \in \mathcal{M}} v_{kj}x_{kj}^\prime < {\sum_{j \in \mathcal{M}}} v_{kj}x_{kj}.\] 
    \item \textit{Envy-freeness}: For any pair of buyers $(i,k)$, buyer $i$ prefers her allocations to buyer $k$'s allocations, i.e.,  \[\frac{\sum_{j \in \mathcal{M}} v_{ij}x_{ij}}{B_i}\geq \frac{\sum_{j \in \mathcal{M}} v_{ij}x_{kj}}{B_k}.\]
    \item \textit{Proportionality}: For any buyer $i$, she prefers her allocations to the uniform allocation, i.e., \[\sum_{j \in \mathcal{M}} v_{ij}x_{ij} \geq \sum_{j \in \mathcal{M}} \frac{v_{ij}}{N}. \] %\textcolor{red}{($n$ should be replaced by $M$, right?)}\textcolor{blue}{uniform allocation}
\end{enumerate}
%However, the EG convex program is a centralized problem, and solving this program needs complete information on buyers' private value of items.  

 \subsection{Online Linear Fisher Markets}
We consider an online linear Fisher market that operates over $T$ distinct periods.  The market involves a fixed population of $N$ heterogeneous buyers, and each buyer $i$ is endowed with {an initial total budget $B_{i,1} \in \mathbb{R}_+ $} at $t=1$. In each period $t$, the market is supplied with $M$ items, where the quantity of each item is normalized as one unit, and this supply is replenished at the beginning of each period. The market provides buyers with information about the items, which heterogeneously influences their valuations based on their specific {targeting criteria (such as quality, market trends)}. We denote the valuation of item $j$ privately observed by buyer $i$ in each period $t$ as  $v_{ij,t} \in \mathbb{R}_+$. In each period $t$, the market determines $x_{ij,t}$ the allocation of item $j$ to buyer $i$, and the item price $p_{j,t}$. We denote by {$u_{i}=\sum_{t=1}^T\sum_{j \in \mathcal{M}} v_{ij,t}x_{ij,t}$} the cumulative utility of buyer~$i$ over total~$T$ periods. In particular, we assume that the market has no information about the valuation of buyers and that each buyer does not know the number of competitors in the market and their valuations.

In an ideal scenario where the market has access to complete information on buyers' private item values and the target expenditures in each period, it can compute the market equilibrium and achieves many desirable properties. In practice,  buyers are often reluctant to reveal their private information. Additionally, buyers may exhibit strategic behavior by misreporting their  values to obtain more favorable allocations, leading to unfair allocations (\cite{branzei2014fisher}). While achieving market equilibrium is possible through certain distributed algorithms, these algorithms typically require constant interactions among the buyers and the market, and buyers who are often reluctant to engage in frequent interactions with the market and prefer to minimize their involvement. %As a result, 
Buyers would possibly like to participate in a bidding {scheme}, where they only have to submit bid prices for items and the market decide the corresponding {item prices} and allocations. 

We now introduce the {online bidding scheme}. In each period $t$, each buyer $i$ submits an \emph{irrecoverable} bid price $b_{ij,t}$ for each item $j$ to the markets, and the sum of bids is equal to the averaged budget { $B_i$ per period, i.e., $B_i = \sum_{j \in \mathcal{M}}b_{ij,t}=B_{i,1}/T$. }
%the total bids should be limited by the available budget, i.e., $\sum_{j \in \mathcal{M}}b_{ij,t} \leq B_i-\sum_{\tau=1}^{t-1} \sum_{j \in \mathcal{M}}b_{ij\tau}$. 
The market uses the trading post mechanism, which is first presented in \cite{shapley1977trade}, to set prices and allocations. Specifically, the market {sets} prices as $p_{j,t}= \sum_{i \in \mathcal{N}} b_{ij,t}$ for all item $j$ after receiving the bids from buyers, and then determines the allocation as $x_{ij,t}=b_{ij,t}/p_{j,t}$ for all $i,j$. Buyers do not need to observe other buyers' bids but can actually derive the prices or compute prices after receiving allocations, i.e., $p_{j,t}=b_{ij,t}/x_{ij,t}$. 
{We assume that the length of $T$ is known for all buyers and the summation of all buyers' budgets is normalized as $T$\footnote{The normalization does not affect the buyers' received utilities.}. }

%Given any price vector $\mathbf{p}_t=(p_{1t},\ldots,p_{Mt}) \in \mathbb{R}_+^M$ , the demand (or allocation) of each buyer $i$ is a bundle of items $\mathbf{x}_{it}=(x_{i1t},\ldots,x_{iMt}) \in \mathbb{R}_+^M $ that maximizes her linear additive utility function at period $t$  under the budget permission:
%\begin{align}\label{single}
 %   \mathbf{x}_{it}:= \arg\max \left\lbrace u_{it}(\mathbf{x}_{it})=\sum_{j \in \mathcal{M}} v_{ijt}x_{ijt}:\sum_{j \in \mathcal{M}} p_{jt} x_{ijt} \leq B_i-\sum_{\tau=1}^{t-1}B_{i\tau}, x_{ijt} \geq 0, \forall j\right\rbrace,
%\end{align} 
%where $B_{i\tau}$ denotes the budget consumption of buyer $i$ at period $\tau$ for any $\tau=1,\ldots,t-1$. 
Define the  history available at period $t$ to each buyer $i$ as 
$$\mathcal{H}_{i,t}:= \left\lbrace \left(\left(v_{ij,\tau},x_{ij,\tau},b_{ij,\tau},p_{j,\tau},\right)_{j \in \mathcal{M}}\right)_{\tau=1}^{t-1}, \left(v_{ij,t}\right)_{j\in \mathcal{M}} \right\rbrace,$$
for any $t\geq 2$, and $\mathcal{H}_{i,1}=\left\lbrace  (v_{ij,1})_{j\in \mathcal{M}} \right\rbrace$ for $t=1$. A bidding strategy for buyer $i$ generates a sequence of functions $\pi_i =\left\lbrace\pi_{i1},\ldots,\pi_{iT}) \right\rbrace$, where {$\pi_{i1}: \mathbb{R}^M_+ \mapsto \mathbb{R}_+^M$ and $\pi_{i,t}: \mathbb{R}^{4M(t-1)+{M}}_+ \mapsto \mathbb{R}_+^M$ for $t\geq 2$}, that maps available histories to bid price, that is
\begin{align*}
    &\mathbf{b}_{i,1}=\pi_{i1}(\mathcal{H}_{i,1}), t=1\\
    &\mathbf{b}_{i,t}=\pi_{it}(\mathcal{H}_{i,t}),  t=2,\ldots,T.
\end{align*}
A strategy is \emph{admissible} if it only depends on the information available to  buyers, or to say, the bidding strategy $\pi_i$ is measurable  with respect to the filtration $\mathcal{H}_{i,t}$.

\section{Proportional Response for Online Fisher Markets}
In this section, we first give a review of the mirror descent algorithm, which has demonstrated effectiveness in optimizing convex objectives in various optimization problems. Subsequently, we introduce the online version of the proportional response and establish its relationship with the mirror descent algorithm. Towards the end of this section, we define the performance metrics that will be used to evaluate the algorithm and outline the key assumptions underlying our analysis.

\subsection{Warm Up: Mirror Descent}
Consider the problem of minimizing a convex differentiable function $\varphi: \mathbb{R}^n \mapsto \mathbb{R}$ over the convex and compact constraint set $K \subset \mathbb{R}^n$, 
\begin{equation*}
     \min\;\varphi(\mathbf{p}), \mbox{ subject to } \mathbf{p} \in K.
\end{equation*}
We consider the mirror descent algorithm for solving the above problem. 
Define the first order approximation of $\varphi(\mathbf{p})$ at $\mathbf{q} \in K$ as
$$l_\varphi(\mathbf{p};\mathbf{q}):=\varphi(\mathbf{q})+\langle\nabla \varphi(\mathbf{q}),\mathbf{p}-\mathbf{q}\rangle,$$
where $\nabla \varphi(\mathbf{q})$ denotes the {gradient} of $\varphi(\cdot)$ at $\mathbf{q}$. A gradient $\nabla \varphi(\mathbf{q})$ is usually assumed to exist for $\mathbf{q} \in K$. Let $h:\mathbb{R}^n \mapsto \mathbb{R}$ be a continuously differentiable and strongly convex function over the set $K$ with a scalar $\mu_h>0$, i.e., $h(\mathbf{p}) \geq h(\mathbf{q})+\langle \nabla h(\mathbf{q}), \mathbf{p}-\mathbf{q}\rangle + \frac{\mu_h}{2} \Vert \mathbf{p}- \mathbf{q} \Vert^2 $. 
Given any $ \mathbf{p},\mathbf{q} \in K$, we define the \emph{Bregman divergence (distance)} function as
$$D_h(\mathbf{p},\mathbf{q}):=h(\mathbf{p})-h(\mathbf{q})-\langle \nabla h(\mathbf{q}),\mathbf{p}-\mathbf{q} \rangle.$$ 
Note that the Bregman divergence is not symmetric, i.e., $D_h(\mathbf{p},\mathbf{q})\neq D_h(\mathbf{q},\mathbf{p})$, but it has that $D_h(\mathbf{p},\mathbf{q}) \geq \frac{\mu_h}{2} \Vert \mathbf{p}- \mathbf{q} \Vert^2$ and  $D_h(\mathbf{p},\mathbf{q})=0$ if and only if $\mathbf{p}=\mathbf{q}$. 
The Bregman divergence satisfies the following relation:
%$$D_h(\mathbf{z},\mathbf{p})+D_h(\mathbf{p},\mathbf{q})-D_h(\mathbf{z},\mathbf{q})=\langle\nabla h(\mathbf{p})-\nabla h(\mathbf{z}),\textcolor{red}{\mathbf{q}-\mathbf{p}}\rangle, \mathbf{p},\mathbf{q},\mathbf{z} \in K,$$
%\textcolor{blue}
{$$D_h(\mathbf{z},\mathbf{p})+D_h(\mathbf{p},\mathbf{q})-D_h(\mathbf{z},\mathbf{q})=\langle\nabla h(\mathbf{p})-\nabla h(\mathbf{q}),{\mathbf{p}-\mathbf{z}}\rangle, \mathbf{p},\mathbf{q},\mathbf{z} \in K,$$
}which {is} also called {the \emph{three-point identity}} and is widely used in literature  (\cite{beck2003mirror} \cite{nedic2014stochastic}).

\textbf{Mirror descents work iteratively.} It starts with an initial point~$\mathbf{p}_0 \in K$, sets step-size $\eta_t>0$, and apply the update rule as follows. 
\begin{equation}
    \mathbf{p}_{t+1}:=\argmin_{\mathbf{p} \in K} \left\lbrace  \eta_t l_\varphi(\mathbf{p};\mathbf{p}_t)+D_h(\mathbf{p},\mathbf{p}_t)\right\rbrace=\argmin_{\mathbf{p} \in K} \left\lbrace  \eta_t \langle\nabla \varphi(\mathbf{p}_t),\mathbf{p}-\mathbf{p}_t\rangle+D_h(\mathbf{p},\mathbf{p}_t)\right\rbrace, \forall t\geq 0. \label{standardMD}
\end{equation}

In particular, when $h(\mathbf{p})=\frac{1}{2} \Vert \mathbf{p}\Vert_2^2$ is strongly convex over $\Omega=\mathbb{R}^n$ with respect to the $\ell_2$-norm, the Bregman divergence is 
$D_h(\mathbf{p},\mathbf{q})=\frac{1}{2} \Vert \mathbf{p}-\mathbf{q}\Vert_2^2$. In this case, the update rule~(\ref{standardMD}) is equal to the {projected} gradient descent method, that is,
$$\mathbf{p}_{t+1}:=\Pi_{K} \left(\mathbf{p}_t-\eta_t\nabla \varphi(\mathbf{p}_t)\right),$$
where {$\Pi_K$} denotes the projection of $\mathbf{p}$ to the set $K$; if $h(\mathbf{p})=\sum_{i=1}^n(p_i\log p_i -p_i)$ is strongly convex over the probabilistic simplex $\Omega=\left\lbrace \mathbf{p} \in \mathbb{R}^n: \Vert \mathbf{p} \Vert_1=1 \right\rbrace$ with respect to the $\ell_1$-norm,  the Bregman divergence is $D_h(\mathbf{p}, \mathbf{q})=\sum_{i=1}^np_{i}\log(\frac{p_i}{q_i})$, that is the \emph{Kullback-Leibler (KL)} divergence between the discrete distributions $\mathbf{p}$ and $\mathbf{q}$. In this case,  the update rule (\ref{standardMD}) recovers the multiplicative weights update (\cite{arora2012multiplicative}):
$$\mathbf{p}_{t+1}:=\mathbf{p}_t {\odot} \exp(-\eta_t\nabla \varphi(\mathbf{p}_t)),$$
where $x \odot y = (x_jy_j)_{j=1}^n $ is the Hadamard  product of vectors~$x,y \in \mathbb{R}^n$. 

The online mirror descent  algorithm serves as an efficient algorithm to solve online convex optimization problems. The update rule is: in  period $t$, a decision maker receives a first-order feedback $\nabla \varphi_t(\mathbf{p}_t)$ after choosing $\mathbf{p}_t$, where $\nabla \varphi_t(\cdot)$ is chosen by an adversary and could be used in a mirror descent update (\ref{standardMD}) to choose $\mathbf{p}_{t+1}$ for next period $t+1$.

\subsection{Online Proportional Response}
Now we investigate the connection between the online mirror descent and the proportional response in our ``online" Fisher markets. We first introduce an alternative convex program proposed by Shmyrev (\cite{shmyrev2009algorithm}) that also captures market equilibrium of static linear Fisher markets, that is:
{
 \begin{align}\label{shmyrev}
    \min &\sum_{i \in \mathcal{N}}\sum_{j \in \mathcal{M}} -b_{ij} \log v_{ij}+\sum_{j \in \mathcal{M}}p_j\log p_j, \\\notag
  \mbox{subject to } & \sum_{i \in \mathcal{N}}b_{ij}=p_j,  j \in \mathcal{M},\\\notag
 & \sum_{j \in \mathcal{M}}b_{ij}=B_i,  i \in \mathcal{N},\\\notag
  & b_{ij}\geq 0,  i \in \mathcal{N}, j \in \mathcal{M}, \notag
 \end{align}}where the corresponding allocation $x_{ij}$ is given by $b_{ij}/p_j$. One advantage of this program is that it solely revolves around the bid price vector $\mathbf{b}$, which is determined by the buyers themselves. This buyer-oriented nature of the program serves as a motivation for our study of the bidding mechanism in online Fisher markets. By focusing on the bidding decisions made by individual buyers, we aim to design a bidding algorithm that enables buyers to express their preferences and values for the available items effectively.

We denote by $\mathbf{b}_t $ the vector of bid price $b_{ij,t}$,  and $\mathbf{p}_t$ the vector of $p_{j,t}$. 
For each buyer $i$ at period $t$, the total bid is equal to the averaged budget per period {(i.e., $B_i=B_{i,1}/T$)}.  %the total bid price should be limited by the available budget,  %$\sum_{j \in \mathcal{M}}b_{ij,t} \leq B_i-\sum_{\tau=1}^{t-1} \sum_{j \in \mathcal{M}}b_{ij,\tau}$. 
We assume that each buyer updates  bid price $\mathbf{b}_{i,t}$ based on the available history information $\mathcal{H}_{i,t}$ as we defined previously in Section 2.2. Most importantly, \emph{buyers do not need to share their information or bid prices with each other}. Consequently,  each buyer can only observe the ex-post price of the items, which is represented by the price vector $\mathbf{p}_{t-1}$,  available at period $t$. These practical considerations of limited information sharing and reliance on ex-post price information align with real-world scenarios where buyers make bid price adjustments based on their individual perspectives and available data.

To describe the market randomness in each period $t$, we  define the objective of Shmyrev convex program as
$$\varphi(\mathbf{b},\mathbf{\varepsilon}_t):=-\sum_{i \in \mathcal{N}}\sum_{j \in \mathcal{M}} b_{ij} \log v_{ij,t}+\sum_{j \in \mathcal{M}} p_{j}\log p_{j},$$
the price $p_j=\sum_{i \in \mathcal{N}} b_{ij}$ for $j \in \mathcal{M}$, the constraint set as $K_i:= \left\lbrace \mathbf{b}_i: \sum_{j \in \mathcal{M}} b_{ij}=B_i, b_{ij}\geq 0, \forall i\in\mathcal{N} \right\rbrace$, the union set as $K:= \left\lbrace {\bigcup_{i=1}^N K_i} \right\rbrace$, and the gradient  $\nabla \varphi(\mathbf{b},\varepsilon_t)=(1-\log (v_{ij,t}/p_{j}))_{ij}$ for $i \in \mathcal{N}, j\in \mathcal{M}$. Here, $\varepsilon_t \in \mathbb{R}^{M \times N}$ is the noise of the gradient, representing the market randomness that affects $v_{t} = \tilde{v}_{t}(\epsilon_t)$, where $\tilde{v}_t(\epsilon_t): \mathbb{R}^{M\times N} \mapsto \mathbb{R}^{M\times N}$ is considered a valuation function depending on stochastic input. Given any $\mathbf{b}_t \in K$, the first order approximation of $\varphi(\mathbf{b}_t,\varepsilon_t)$ at $\mathbf{b}_{t-1} \in K$ is 
$$\ell_\varphi(\mathbf{b}_t,\mathbf{b}_{t-1},\varepsilon_t):=\varphi(\mathbf{b}_{t-1},\varepsilon_t)+\langle\nabla \varphi(\mathbf{b}_{t-1},\varepsilon_t),\mathbf{b}_t-\mathbf{b}_{t-1}\rangle.$$

The next lemma  establishes the connection between the mirror descent update (\ref{standardMD}) and the objective function $\varphi(\mathbf{b}_t,\varepsilon_t)$.  
\begin{lemma}\label{lemma3.1}
Let $h(\mathbf{p})=\sum_{j \in \mathcal{M}}(p_j\log p_j -p_j)$, and its Bregman divergence is  $D_h(\mathbf{p}, \mathbf{q})=\sum_{j \in \mathcal{M}}p_{j}\log(\frac{p_j}{q_j})$, then for any $\mathbf{b}_t,\mathbf{b}_{t-1} \in K$, it has that
$\varphi(\mathbf{b}_t,\varepsilon_t)=\ell_\varphi(\mathbf{b}_t,\mathbf{b}_{t-1},\varepsilon_t)+D_h(\mathbf{p}_t,\mathbf{p}_{t-1}).$
\end{lemma}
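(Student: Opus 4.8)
The plan is to exploit the special additive structure of the Shmyrev objective: it is the sum of a term that is \emph{linear} in the bid vector $\mathbf{b}$ and a term that depends on $\mathbf{b}$ only through the aggregate prices $p_j=\sum_{i\in\mathcal{N}}b_{ij}$. I would write $\varphi(\mathbf{b},\varepsilon_t)=L(\mathbf{b})+\Phi(\mathbf{p})$, where $L(\mathbf{b})=-\sum_{i,j}b_{ij}\log v_{ij,t}$ and $\Phi(\mathbf{p})=\sum_{j}p_j\log p_j$. Since the first-order (Taylor) approximation of an affine function is exact, $L$ contributes nothing to the gap between $\varphi(\mathbf{b}_t,\varepsilon_t)$ and its linearization $\ell_\varphi(\mathbf{b}_t,\mathbf{b}_{t-1},\varepsilon_t)$; the whole discrepancy comes from the convex part $\Phi$. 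Subtracting the definitions therefore reduces the claim to showing $\Phi(\mathbf{p}_t)-\Phi(\mathbf{p}_{t-1})-\langle\nabla_{\mathbf{b}}\Phi(\mathbf{p}_{t-1}),\mathbf{b}_t-\mathbf{b}_{t-1}\rangle=D_h(\mathbf{p}_t,\mathbf{p}_{t-1})$.

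The central step is a chain-rule computation that pushes the bid-space gradient down to price space. Since $p_j=\sum_i b_{ij}$ gives $\partial p_j/\partial b_{ij}=1$, one computes $\partial\Phi/\partial b_{ij}=\log p_j+1$, so the inner product $\langle\nabla_{\mathbf{b}}\Phi(\mathbf{p}_{t-1}),\mathbf{b}_t-\mathbf{b}_{t-1}\rangle$ collapses the sum over buyers $i$ into the aggregate prices and equals $\sum_j(\log p_{t-1,j}+1)(p_{t,j}-p_{t-1,j})=\langle\nabla\Phi(\mathbf{p}_{t-1}),\mathbf{p}_t-\mathbf{p}_{t-1}\rangle$. (This is also the computation that certifies the gradient formula $\nabla\varphi=(1-\log(v_{ij,t}/p_j))_{ij}$ quoted in the text.) Substituting, the residual becomes exactly $D_\Phi(\mathbf{p}_t,\mathbf{p}_{t-1})$, the Bregman divergence generated by $\Phi$ itself, now living in the lower-dimensional price variables.

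The final step identifies $D_\Phi$ with the claimed $D_h$. Because $h(\mathbf{p})=\Phi(\mathbf{p})-\sum_j p_j$ differs from $\Phi$ only by a \emph{linear} term, and Bregman divergences are invariant under adding an affine function to the generator, $D_\Phi(\mathbf{p}_t,\mathbf{p}_{t-1})=D_h(\mathbf{p}_t,\mathbf{p}_{t-1})$. Expanding either gives $\sum_j[\,p_{t,j}\log(p_{t,j}/p_{t-1,j})-p_{t,j}+p_{t-1,j}\,]$; invoking feasibility $\mathbf{b}_t,\mathbf{b}_{t-1}\in K$, which forces $\sum_j p_{t,j}=\sum_j p_{t-1,j}=\sum_i B_i$ (normalized to one), the trailing terms cancel and one recovers the generalized-KL expression $\sum_j p_{t,j}\log(p_{t,j}/p_{t-1,j})$ asserted in the lemma.

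I expect the main obstacle to be bookkeeping the transition from bid space to price space cleanly: the generating function $h$ is defined on the $M$-dimensional price simplex, whereas $\varphi$ and its linearization live on the $M\times N$-dimensional bid polytope. One must track carefully that the chain rule (which aggregates over $i$) \emph{together with} the constant-sum constraint $\sum_j p_j=1$ is precisely what makes the extra $-p_j+q_j$ terms of the true Bregman divergence vanish and reproduce the stated KL form. Everything else is a routine expansion.
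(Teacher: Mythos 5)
Your proof is correct, and it reaches the identity by a more structural route than the paper. The paper's proof is a single monolithic computation: it writes $\varphi(\mathbf{b}_t,\varepsilon_t)=-\sum_{i,j}b_{ij,t}\log(v_{ij,t}/p_{j,t})$, expands $\varphi(\mathbf{b}_t,\varepsilon_t)-\varphi(\mathbf{b}_{t-1},\varepsilon_t)-\langle\nabla\varphi(\mathbf{b}_{t-1},\varepsilon_t),\mathbf{b}_t-\mathbf{b}_{t-1}\rangle$ term by term, and cancels, using $\sum_{i,j}b_{ij,t}=\sum_{i,j}b_{ij,t-1}$ to kill the terms generated by the constant $1$ in the gradient. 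You instead isolate three structural facts: (i) the valuation term is affine in $\mathbf{b}$, so it contributes nothing to the linearization gap; (ii) by the chain rule the bid-space inner product collapses over $i$ into price space (the coefficient $\log p_{j,t-1}+1$ is independent of $i$, so summing over $i$ produces $p_{j,t}-p_{j,t-1}$), hence the gap equals the Bregman divergence of the entropy $\sum_j p_j\log p_j$ in the price variables; (iii) Bregman divergences are invariant under affine shifts of the generator, and the constant-sum property $\sum_j p_{j,t}=\sum_j p_{j,t-1}$ turns the generalized-KL expression (with its $-p_j+q_j$ terms) into the plain KL form stated in the lemma. These are exactly the cancellations the paper performs implicitly---your step (iii) is literally the paper's last line---but your packaging makes transparent \emph{why} the identity holds, and it correctly flags the subtlety that the stated formula $D_h(\mathbf{p},\mathbf{q})=\sum_j p_j\log(p_j/q_j)$ agrees with the true Bregman divergence of $h$ only on the constant-sum slice, a point the paper glosses over. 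One cosmetic caution if this were merged into the text: your symbol $\Phi$ for the entropy part collides with the paper's $\Phi(\mathbf{b})=\mathbb{E}_{\varepsilon\sim\mathcal{P}}[\varphi(\mathbf{b},\varepsilon)]$, so it should be renamed.
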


Clearly, if the step-size $\eta_t$ in standard mirror descent update rule (\ref{standardMD}) is set as $1$ for any $t$, then minimizing $\ell_\varphi(\mathbf{b}_t,\mathbf{b}_{t-1},\varepsilon_t)+D_h(\mathbf{p}_t,\mathbf{p}_{t-1})$ is equivalent to minimize $\varphi(\mathbf{b}_t,\varepsilon_t)$.

An important observation is that the gradient $\nabla \varphi(\mathbf{b}_{t-1},\varepsilon_t) = (1-\log(v_{ij,t}/p_{j,t-1}))_{ij}$ incorporates a logarithmic function. The potential issue arises when $p_{j,t-1}$ equals zero, rendering the logarithm undefined, which results in a unbounded gradient and implies that $\varphi(\mathbf{b},\varepsilon)$ is \emph{not} Lipschitz continuous. Convergence guarantees for the mirror descent heavily rely on the assumption of Lipschitz continuity, as extensively discussed in \cite{beck2003mirror,nedic2014stochastic} and other related works. To address the issue of non-Lipschitz continuity, we first introduce a condition called \emph{relative smoothness}, which is defined as follows:  
\begin{definition}[Relative smoothness] For any $\varepsilon$, $\varphi(\cdot,\varepsilon)$ is relative smooth to $h(\cdot)$ on $K$ if for any $\mathbf{b},\mathbf{b}^\prime \in  K$, there is a positive constant $L$ for which
$$\varphi(\mathbf{b},\varepsilon) \leq \ell_\varphi(\mathbf{b},\mathbf{b}^\prime,\varepsilon)+L\cdot D_h(\mathbf{b},\mathbf{b}^\prime).$$
\end{definition}
This condition, having been explored in \cite{bauschke2017descent,lu2018relatively}, allows for the establishment of a sublinear $\mathcal{O}(1/T)$ rate of convergence in the deterministic setting ($\varepsilon=0$) if the update rule is designed based on this condition. To this end, using the fact that $D_h(\mathbf{p}_t,\mathbf{p}_{t-1}) \leq D_h(\mathbf{b}_t,\mathbf{b}_{t-1})$ (see Lemma 7 in \cite{birnbaum2011distributed}), we show the fulfillment of the relative smoothness condition as follows:
\begin{equation}\label{relatives_smoothness}
\varphi(\mathbf{b}_t,\varepsilon_t)=\ell_\varphi(\mathbf{b}_t,\mathbf{b}_{t-1},\varepsilon_t)+D_h(\mathbf{p}_t,\mathbf{p}_{t-1}) \leq \ell_\varphi(\mathbf{b}_t,\mathbf{b}_{t-1},\varepsilon_t)+D_h(\mathbf{b}_t,\mathbf{b}_{t-1}),  \mathbf{b}_t, \mathbf{b}_{t-1} \in K.
\end{equation}
It is worth noting that prior works concerning the relative smoothness condition have mainly been conducted in the deterministic setting, while our context pertains to the online (stochastic) setting. 
%By leveraging the relative smoothness condition, the following proposition examines the performance of the mirror descent algorithm without relying on Lipschitz continuity in the online setting.
The following proposition establishes the update rule of online mirror descent algorithm without relying on Lipschitz continuity.

\begin{proposition}\label{pp3.1}
If $\eta_{i,t}=1$ for any $t \geq 2$  and buyer $i \in \mathcal{N}$, we have  
\begin{align}\label{updaterule}
    \mathbf{b}_t : &=\mathop{\argmin}_{\mathbf{b} \in K}  \left\lbrace \ell_\varphi(\mathbf{b},\mathbf{b}_{t-1},\varepsilon_t)+ D_h(\mathbf{b},\mathbf{b}_{t-1})\right\rbrace \\ \notag
    &=\mathop{\argmin}_{\mathbf{b} \in K}  \left\lbrace \langle\nabla \varphi(\mathbf{b}_{t-1},\varepsilon_t),\mathbf{b}-\mathbf{b}_{t-1}\rangle+ D_h(\mathbf{b},\mathbf{b}_{t-1})\right\rbrace \\ \notag
    &=\mathop{\argmin}_{\mathbf{b} \in K} \sum_{i \in \mathcal{N}}\sum_{j \in \mathcal{M}}(1-\log(v_{ij,t}/p_{j,t-1}))(b_{ij}-b_{ij,t-1})+\sum_{i \in \mathcal{N}}\sum_{j \in \mathcal{M}} b_{ij} \log \left(\frac{b_{ij}}{b_{ij,t-1}}\right).
\end{align}
\end{proposition}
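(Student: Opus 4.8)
The plan is to read the first line of the statement as the instantiation of the generic mirror-descent update (\ref{standardMD}) with constant step-size $\eta_{i,t}=1$ and with the negative-entropy mirror map $h(\mathbf{b})=\sum_{i\in\mathcal{N}}\sum_{j\in\mathcal{M}}(b_{ij}\log b_{ij}-b_{ij})$ acting on the bid variables, and then to reduce it to the claimed explicit form by two successive substitutions. Setting $\eta_t=1$ in (\ref{standardMD}) and renaming $\mathbf{p}\mapsto\mathbf{b}$, $\mathbf{p}_t\mapsto\mathbf{b}_{t-1}$ reproduces the first line verbatim, so there is nothing to prove there beyond recording the definition.

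For the first equality I would expand the first-order surrogate by its definition,
$$\ell_\varphi(\mathbf{b},\mathbf{b}_{t-1},\varepsilon_t)=\varphi(\mathbf{b}_{t-1},\varepsilon_t)+\langle\nabla\varphi(\mathbf{b}_{t-1},\varepsilon_t),\mathbf{b}-\mathbf{b}_{t-1}\rangle,$$
and observe that the leading term $\varphi(\mathbf{b}_{t-1},\varepsilon_t)$ is independent of the optimization variable $\mathbf{b}$. Being an additive constant inside the braces, it does not affect the minimizer and can be dropped, which yields the second line.

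The second equality is where the actual bookkeeping sits, and I would handle the two terms separately. For the inner product I would substitute the closed form of the gradient already recorded before the proposition, $\nabla\varphi(\mathbf{b}_{t-1},\varepsilon_t)_{ij}=1-\log(v_{ij,t}/p_{j,t-1})$ (obtained by differentiating $\varphi$ and using $p_{j,t-1}=\sum_{i}b_{ij,t-1}$ in the chain rule), turning $\langle\nabla\varphi(\mathbf{b}_{t-1},\varepsilon_t),\mathbf{b}-\mathbf{b}_{t-1}\rangle$ into $\sum_{i}\sum_{j}(1-\log(v_{ij,t}/p_{j,t-1}))(b_{ij}-b_{ij,t-1})$. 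For the divergence I would compute $D_h$ from its definition using $\nabla h(\mathbf{b}_{t-1})_{ij}=\log b_{ij,t-1}$, which gives $D_h(\mathbf{b},\mathbf{b}_{t-1})=\sum_{i}\sum_{j}\big[b_{ij}\log(b_{ij}/b_{ij,t-1})-b_{ij}+b_{ij,t-1}\big]$.

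The hard part --- really the only step that is not purely mechanical --- will be showing that the residual linear terms $-b_{ij}+b_{ij,t-1}$ sum to zero, so that $D_h(\mathbf{b},\mathbf{b}_{t-1})$ collapses to the cross-entropy $\sum_{i}\sum_{j}b_{ij}\log(b_{ij}/b_{ij,t-1})$ that appears in the statement. For this I would use that both $\mathbf{b}$ and $\mathbf{b}_{t-1}$ are feasible in $K$: the budget constraint $\sum_{j}b_{ij}=B_i$ gives $\sum_{i}\sum_{j}b_{ij}=\sum_{i}B_i=\sum_{i}\sum_{j}b_{ij,t-1}$, so $\sum_{i}\sum_{j}(-b_{ij}+b_{ij,t-1})=0$ over the feasible set and the leftover terms cancel. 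Substituting both simplified pieces back into the objective then produces the third line and finishes the proof.
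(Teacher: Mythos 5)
Your proposal is correct and follows essentially the same route as the paper, which simply declares the proposition trivial and points to the gradient formula $\nabla\varphi(\mathbf{b}_{t-1},\varepsilon_t)_{ij}=1-\log(v_{ij,t}/p_{j,t-1})$; your steps (dropping the constant $\varphi(\mathbf{b}_{t-1},\varepsilon_t)$, substituting the gradient, and expanding $D_h$) are exactly the substitutions the paper leaves implicit. The one detail you handle more carefully than the paper --- the cancellation of the residual terms $-b_{ij}+b_{ij,t-1}$ via feasibility in $K$ --- is a worthwhile addition, since the paper's Lemma~\ref{lemma3.1} silently writes the entropy Bregman divergence in its KL form, which is only valid on the budget simplex.
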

%Proposition \ref{pp3.1} indicates that  the cumulative gap between $\varphi(\mathbf{b}_t,\varepsilon_t)$ and $\varphi(\mathbf{b}^\ast,\varepsilon_t)$ can be bounded by a constant $D_h(\mathbf{b}^\ast,\mathbf{b}_{1})$ for any feasible $\mathbf{b}^\ast \in K$, and this result holds even for any arbitrary $\varepsilon_t$ (but the gradients should be well-defined). 
The proof of the above proposition is trivial, the last equation comes from the fact that the gradient $\nabla \varphi(\mathbf{b}_{t-1},\varepsilon_t) $ is $(1-\log(v_{ij,t}/p_{j,t-1}))_{ij}$ for $i \in \mathcal{N}, j\in \mathcal{M}$.
%\textcolor{orange}{we have} that 
%$$\mathbf{b}_t:= \mathop{\argmin}_{\mathbf{b} \in K} \sum_{i \in \mathcal{N}}\sum_{j \in \mathcal{M}}(1-\log(v_{ij,t}/p_{j,t-1}))(b_{ij}-b_{ij,t-1})+\sum_{i \in \mathcal{N}}\sum_{j \in \mathcal{M}} b_{ij} \log \left(\frac{b_{ij}}{b_{ij,t-1}}\right).$$
 Proposition \ref{pp3.1} leads to a distributed bid price update mechanism, that is, each buyer can follow the update rule 
\begin{align} \label{singleupdaterule}
    \mathbf{b}_{i,t} :&=\mathop{\argmin}_{\mathbf{b}_i \in K_i} \sum_{j \in \mathcal{M}}(1-\log(v_{ij,t}/p_{j,t-1}))(b_{ij}-b_{ij,t-1})+\sum_{j \in \mathcal{M}} b_{ij} \log \left(\frac{b_{ij}}{b_{ij,t-1}}\right) \notag \\
    &=\mathop{\argmin}_{\mathbf{b}_i \in K_i}  \left\lbrace \langle\nabla \varphi(\mathbf{b}_{i,t-1},\varepsilon_{i,t}),\mathbf{b}_{i}-\mathbf{b}_{i,t-1}\rangle+ D_h(\mathbf{b}_i,\mathbf{b}_{i,t-1}) \right\rbrace,
\end{align}
%Thus, if all buyer update bid prices following (\ref{singleupdaterule}), this actually updates bid price according to (\ref{updaterule}), 
where $\nabla \varphi(\mathbf{b}_{i,t-1},\varepsilon_{i,t})$ is the vector that consists of $(\nabla \varphi(\mathbf{b}_{t-1},\varepsilon_{t}))_{ij}$ for $j \in \mathcal{M}$. We observe that buyer $i$'s bids produced by the update rule (\ref{singleupdaterule}) depend on the prices and bids that made on the previous periods and coincides with our information structure, that is, $\mathbf{b}_{i,t}$ is based on the available history information $\mathcal{H}_{i,t}$. In other words,
the distributed structure ensures that each buyer can adopt the online mirror descent to update the bid price $\mathbf{b}_{i,t}$ {in} each period without disclosing the valuations. The formal descriptions of the algorithm for each buyer $i$ are given {in ALGORITHM~\ref{alg:one}}.

%Consider the following stochastic program
%\begin{align}\label{SHHindsight}
 %    \min \;& \Phi(\mathbf{b})=\mathbb{E}\left[-\sum_{j \in \mathcal{M}} b_{ij} \log v_{ij}+ \sum_{j \in \mathcal{M}} p_j\log p_j \right] \\\notag
  %   \mbox{s.t.} & \sum_{i \in \mathcal{N}} b_{ij}=p_j, \forall j\\\notag
   %  & \sum_{j \in \mathcal{M}} b_{ij}=\frac{1}{N}, \forall i, \notag
%\end{align} 
%\textcolor{red}{Not like the typical setting in online learning algorithms where a decision maker always makes decisions for the next period, while in our case,  buyers submit bids for the current period. Therefore, the relation (\ref{relative_lip_fisher}) becomes 
%\begin{equation}\label{relative_lip_fisher_online}
%\varphi(\mathbf{b}^{t},\varepsilon^t)=\ell_\varphi(\mathbf{b}^{t};\mathbf{b}^{t-1})+D_h(\mathbf{p}^{t}, \mathbf{p}^{t-1})\leq \ell_\varphi(\mathbf{b}^{t};\mathbf{b}^{t-1})+D_h(\mathbf{b}^{t}, \mathbf{b}^{t-1}).
%\end{equation}
%This ensures that (\ref{objective_diff}) measures the difference between the realized SH objective value obtained by $\mathbf{b}^t$ and the SH objective value obtained by $\mathbf{b}^\ast$ at period $t$, while the right-hand side of (\ref{relative_lip_fisher}) would be $\varphi(\mathbf{b}^{t+1},\varepsilon^t)$ that is not the realized SH objective value.}

\begin{algorithm}
\SetKwInOut{Initial}{initial}
	%\SetAlgoNoLine
	\Initial{Each buyer $i$ sets initial bid price $b_{ij,1}=\frac{B_i}{MN}$, then receives $x_{ij,1}$ and computes the price $p_{j,1}=\frac{ b_{ij,1}}{x_{ij,1}}$ for all item $j\in \mathcal{M}$ .}
		\For{$t=2,\ldots, T$}{ 
            \For{each buyer $i \in \mathcal{N}$}{
             1. observes $v_{ij,t}$, computes the gradients $\nabla \varphi(\mathbf{b}_{t-1},\varepsilon_t)$, where each component is:
            $$(\nabla \varphi(\mathbf{b}_{t-1},\varepsilon_t))_{ij}=1-\log\frac{v_{ij,t}}{p_{j,t-1}}, j\in \mathcal{M};$$
            2. updates her bid price vector:
            \begin{align}
              \mathbf{b}_{i,t}=\mathop{\argmin}_{\mathbf{b}_i \in K_i} \left\lbrace  \langle \nabla \varphi(\mathbf{b}_{i,t-1},\varepsilon_{i,t}),\mathbf{b}_{i} {- \mathbf{b}_{i,t-1}} 
              \rangle+\sum_{j \in \mathcal{M}} b_{ij}\log \left(\frac{b_{ij}}{b_{ij,t-1}}\right)\right\rbrace. \label{OMDupdate}
            \end{align}
           3. receives allocation $x_{ij,t}$ and computes the price $p_{j,t}=\frac{b_{ij,t} }{x_{ij,t}}$ for $j \in \mathcal{M}$.}   
        }
    	\caption{Online mirror descent for online linear Fisher markets}
	\label{alg:one}
\end{algorithm}
%As we have scaled the budget previously that $\sum_{i \in \mathcal{N}}B_{i,1}=T$ for any buyer $i$, the algorithm restricts that each buyer {depletes} the budget $B_{i}$ for any $t$. {Moreover}, $B_{i,t}=1/N$ for any $i$ and $t$. 
This algorithm differs from traditional online learning processes, where buyers typically place bids for period $t+1$ in the end of period $t$. However, we believe it is more reasonable for buyers to bid after observing the values of the items.

%In this case, we {derive} that the KL divergence $D_h(\mathbf{b}^\ast,\mathbf{b}_1)$ is upper bounded by a problem-dependent factor $\log MN$ (Lemma~\ref{D(b,b1)} in Appendix). 

Now, we derive a closed-form expression for updating the bid price.
\begin{lemma}\label{lemma3.2}
Suppose that each buyer $i$ use the algorithm to update bid price $\mathbf{b}_{i,t}$ for any $t=1,\ldots,T$, then 
\begin{equation} \label{onlinePR}
     b_{ij,t}=B_{i}\frac{v_{ij,t}x_{ij,t-1}}{\sum_{j \in \mathcal{M}} v_{ij,t}x_{ij,t-1}}. \quad \mbox{(Online proportional response)}  
\end{equation}
\end{lemma}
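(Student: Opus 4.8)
The plan is to establish the closed form by directly solving the strictly convex subproblem that defines the mirror-descent update (\ref{OMDupdate}), exploiting the fact that both the objective and the constraint set $K=\bigcup_i K_i$ decouple across buyers. Fix a buyer $i$ and a period $t\geq 2$. Using the gradient component $(\nabla\varphi(\mathbf{b}_{t-1},\varepsilon_t))_{ij}=1-\log(v_{ij,t}/p_{j,t-1})$, the update minimizes over $\mathbf{b}_i\in K_i=\{\mathbf{b}_i:\sum_{j}b_{ij}=B_i,\ b_{ij}\geq 0\}$ the function
\[
f(\mathbf{b}_i)=\sum_{j\in\mathcal{M}}\Big(1-\log\tfrac{v_{ij,t}}{p_{j,t-1}}\Big)(b_{ij}-b_{ij,t-1})+\sum_{j\in\mathcal{M}}b_{ij}\log\tfrac{b_{ij}}{b_{ij,t-1}}.
\]
The first reduction is bookkeeping: I would discard every additive term that is constant on $K_i$. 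The piece $-(1-\log(v_{ij,t}/p_{j,t-1}))\,b_{ij,t-1}$ is a pure constant and drops; the term $b_{ij}\log(b_{ij}/b_{ij,t-1})=b_{ij}\log b_{ij}-b_{ij}\log b_{ij,t-1}$ splits into the negative-entropy term and a term linear in $b_{ij}$ that must be kept. Crucially, the coefficient ``$1$'' contributes $\sum_j b_{ij}=B_i$, which is constant on $K_i$ and may also be dropped.

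After these simplifications the problem reduces to minimizing $\sum_j(-a_j)b_{ij}+\sum_j b_{ij}\log b_{ij}$ subject to $\sum_j b_{ij}=B_i$ and $b_{ij}\geq 0$, where I collect the linear coefficients into
\[
a_j:=\log\tfrac{v_{ij,t}}{p_{j,t-1}}+\log b_{ij,t-1}=\log\!\big(v_{ij,t}\,x_{ij,t-1}\big),
\]
the last equality being the key identity obtained from $x_{ij,t-1}=b_{ij,t-1}/p_{j,t-1}$. This is an entropy-regularized linear program over the scaled simplex, so I would introduce a single Lagrange multiplier $\lambda$ for the budget constraint and write the stationarity condition $-a_j+\log b_{ij}+1+\lambda=0$. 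Solving gives $b_{ij}\propto\exp(a_j)=v_{ij,t}\,x_{ij,t-1}$, and renormalizing to satisfy $\sum_j b_{ij}=B_i$ yields exactly the online proportional response (\ref{onlinePR}).

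The computation is routine, so the only points requiring genuine care—and hence what I would treat as the ``hard'' part of the write-up—are the justifications that make the Lagrangian argument rigorous. First, the negative entropy is strictly convex, so $f$ has a unique minimizer and the KKT stationarity condition is both necessary and sufficient; I would invoke this to avoid any appeal to second-order conditions. Second, the exponential-form candidate is automatically strictly positive whenever $v_{ij,t}>0$ and $x_{ij,t-1}>0$, so the nonnegativity constraints are inactive and need not enter the stationarity equations—this is where the implicit positivity assumptions on valuations and previous allocations are used, and I would state them explicitly. Finally, I would note that the formula presumes $t\geq 2$ since it refers to $x_{ij,t-1}$; the base period $t=1$ is fixed by the initialization in ALGORITHM~\ref{alg:one}, so the induction is seeded there and the closed form then propagates for all subsequent periods.
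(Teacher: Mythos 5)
Your proposal is correct and follows essentially the same route as the paper's proof: both solve the per-buyer mirror-descent subproblem via the Lagrangian/KKT stationarity condition for the budget constraint, obtain the exponential form $b_{ij}\propto v_{ij,t}\,x_{ij,t-1}$ using $x_{ij,t-1}=b_{ij,t-1}/p_{j,t-1}$, and normalize by $\sum_{j}b_{ij}=B_i$. The only differences are cosmetic (you discard constants and set $\eta_{i,t}=1$ up front, while the paper carries a general step-size and sets it to $1$ at the end), and your explicit remarks on strict convexity, inactivity of the nonnegativity constraints, and the $t\geq 2$ seeding are slightly more careful than the paper's treatment.
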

This above lemma implies that each buyer can update the bid price $\mathbf{b}_{i,t}$ by directly following the closed-form expression above instead of solving a constrained optimization problem per period. Therefore, each buyer can directly update the bid price according to~(\ref{onlinePR}), which is an online variant of the proportional response in a deterministic linear Fisher market~(\cite{zhang2011proportional}), and we {call the update rule in (\ref{onlinePR})} \emph{online proportional response}. Moreover, it can be easily seen that the budget consumption of each buyer $\sum_{j \in \mathcal{M}}b_{ij,t}=B_i$ is constant for any~$t$, so the budget of each buyer would be used up at period~$T$. {Hence, the \emph{constraints violation} issue~(see \cite{balseiro2019learning, balseiro2022best, li2022online}, etc), which describes the loss due to resources
being used up before the~$T$ period, no longer exists here.}

\subsection{Assumptions and Performance Metrics}
Now we introduce the performance metric of the online proportional response. At first, let $\mathbf{b}^\ast$ the optimal bids that minimizes the following program with $\Phi (\mathbf{b}):=\mathbb{E}_{\varepsilon\sim\mathcal{P}}\left[\varphi(\mathbf{b},\varepsilon) \right]$
\begin{align}\label{SHHindsight}
     \min \;&\Phi (\mathbf{b})  \\\notag
     \mbox{subject to} & \sum_{i \in \mathcal{N}} b_{ij}=p_j,  j \in \mathcal{M}\\\notag
     & \sum_{j \in \mathcal{M}} b_{ij}=B_i,  i \in \mathcal{N}, \notag
\end{align}
where the expectation is taken with respect to $\varepsilon$, which is drawn from a distribution $\mathcal{P}$. {We term (\ref{SHHindsight}) as the stochastic Shmyrev convex program.} %\textcolor{green}{(Why not the expected Shmyrev convex program since it is the expectation over $\varepsilon$?)}
Therefore, we characterize $\mathbf{b}^\ast$ the optimal bids  and the corresponding pair $(\mathbf{x}^\ast, \mathbf{p}^\ast)$ made by $\mathbf{b}^\ast$ constitutes a market equilibrium. Consequently, the market equilibrium allocation $\mathbf{x}^\ast$ guarantees a Pareto optimal, envy-free, and proportional allocation.
\noindent\paragraph{Proportional fairness.} Before introducing performance metrics, we begin with a brief introduction to the definition of the \emph{weighted proportional fairness}. The allocation $\mathbf{x}$ is \emph{weighted proportional fair} if the aggregate proportional change is less than or equal to $0$ compared to the utility of any other feasible allocation $\mathbf{x}^\prime$, that is 
$$\sum_{i \in \mathcal{N}} w_i\frac{u_i(\mathbf{x}_i^\prime)-u_i(\mathbf{x}_i)}{u_i(\mathbf{x}_i)}\leq 0. $$
Note that $w_i$ is a weight related to the budget of buyer $i$. As the budgets of each buyer are scaled as equal, $w_i$ can be considered as 1 for all $i$. This fairness notion has been widely adopted in a large body of literature and it is known that finding a proportionally fair allocation is   equivalent to finding a solution that maximizes the Nash social welfare objective (see \cite{kelly1998rate}, \cite{bertsimas2011price}, etc):
$$\max \prod_{i=1}^N u_i(\mathbf{x}_i)^{w_i}.$$
Naturally, a market equilibrium allocation is a proportional fair allocation.
Moreover, in the case of equal budgets (the weights are equal), maximizing the Nash social welfare objective yields a market equilibrium which is known in economics as \emph{competitive equilibrium from equal incomes} (\cite{varian1974equity}), and our problem also lies in this context. 

Since the optimal solution of the EG convex program also maximizes the Nash social welfare, the difference between the objective value of the EG convex program at two different allocations evaluates the fairness. 
Define $\Psi (\mathbf{b})=\mathbb{E}\left[-\sum_{i \in \mathcal{N}} B_i\log \left(\sum_{j \in \mathcal{M}}\frac{{v_{ij}b_{ij}}}{p_j}\right) \right]$, the opposite of the expectation of a random EG convex objective function, and then we have the following lemma.
\begin{lemma} \label{lemma3.3}
   %$$\mathbb{E}[\psi(\mathbf{b},\varepsilon)]-\mathbb{E}[\psi(\mathbf{b}^\ast,\varepsilon)] \leq \mathbb{E}[\varphi(\mathbf{b},\varepsilon)]-\mathbb{E}[\varphi(\mathbf{b}^\ast,\varepsilon)]. $$
   Suppose $\mathbf{b}^\ast$ minimizes (\ref{SHHindsight}). Then for any {$\mathbf{b} \in K$}, it has that
   $$\Psi (\mathbf{b}) -\Psi (\mathbf{b}^\ast) \leq \Phi (\mathbf{b}) -\Phi (\mathbf{b}^\ast). $$
\end{lemma}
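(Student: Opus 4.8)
The plan is to establish a pointwise (per-realization) comparison between the Shmyrev integrand $\varphi(\mathbf{b},\varepsilon)$ and the EG-type integrand, and then to specialize that comparison at the optimizer $\mathbf{b}^\ast$. First I would rewrite the Shmyrev objective in a form that exposes the bang-per-buck ratios: using $p_j=\sum_{i\in\mathcal{N}}b_{ij}$ we have $\sum_{j\in\mathcal{M}}p_j\log p_j=\sum_{i\in\mathcal{N}}\sum_{j\in\mathcal{M}}b_{ij}\log p_j$, so that $\varphi(\mathbf{b},\varepsilon)=-\sum_{i\in\mathcal{N}}\sum_{j\in\mathcal{M}}b_{ij}\log(v_{ij}/p_j)$. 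This is the key reduction, and everything afterwards becomes a buyer-by-buyer estimate.

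Next I would apply Jensen's inequality separately for each buyer. For fixed $i$, the budget constraint $\sum_{j}b_{ij}=B_i$ means that $(b_{ij}/B_i)_{j}$ is a probability vector, and since $\log$ is concave, $\sum_{j}\frac{b_{ij}}{B_i}\log\frac{v_{ij}}{p_j}\le\log\bigl(\sum_{j}\frac{b_{ij}}{B_i}\frac{v_{ij}}{p_j}\bigr)=\log\frac{u_i}{B_i}$, where $u_i=\sum_{j}v_{ij}b_{ij}/p_j$ is exactly the utility appearing inside $\Psi$. Multiplying by $B_i$, summing over $i$, negating, and taking expectations over $\varepsilon\sim\mathcal{P}$ yields the one-sided bound $\Phi(\mathbf{b})\ge\Psi(\mathbf{b})+C$ for every $\mathbf{b}\in K$, with the constant $C=\sum_{i}B_i\log B_i$ independent of $\mathbf{b}$.

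The final step is to show that this inequality is tight at $\mathbf{b}^\ast$, i.e. $\Phi(\mathbf{b}^\ast)=\Psi(\mathbf{b}^\ast)+C$; subtracting the two relations then gives $\Phi(\mathbf{b})-\Phi(\mathbf{b}^\ast)\ge\Psi(\mathbf{b})-\Psi(\mathbf{b}^\ast)$, which is the claim. Jensen is tight for buyer $i$ precisely when $v_{ij}/p_j$ is constant across the support of $\mathbf{b}_i$, which is the market-equilibrium condition that each buyer spends only on maximum bang-per-buck goods. I would extract this support structure from the first-order optimality (KKT) conditions of the stochastic program (\ref{SHHindsight}): stationarity forces $\log p_j-\mathbb{E}[\log v_{ij}]$ to be equalized over the goods that receive positive bids from buyer $i$.

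I expect this tightness step to be the main obstacle, and the place where the stochastic nature of the problem must be handled with care. The optimality conditions of (\ref{SHHindsight}) equalize the \emph{expected} log-ratio $\mathbb{E}[\log(v_{ij}/p_j)]$ over a buyer's support, whereas exact tightness of the per-buyer Jensen step for $\Psi$ requires the \emph{realized} ratio $v_{ij}/p_j$ to be constant there. Reconciling the two --- either by reducing the support to the degenerate single-maximal-good case, by reading $\Psi$ through the geometric-mean surrogate valuations $\bar v_{ij}=\exp(\mathbb{E}[\log v_{ij}])$, or by arguing directly that the residual discrepancy $\Phi-\Psi-C=\mathbb{E}[\sum_i B_i D_{\mathrm{KL}}(\cdot\,\|\,\cdot)]$ is minimized at $\mathbf{b}^\ast$ --- is the crux of the argument and deserves the most scrutiny.
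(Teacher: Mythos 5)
Your first two steps are correct, and they are in substance the same route the paper takes: the rewriting $\varphi(\mathbf{b},\varepsilon)=-\sum_{i,j}b_{ij}\log(v_{ij}/p_j)$ together with the per-buyer Jensen step gives $\Phi(\mathbf{b})\ge\Psi(\mathbf{b})+C$ with $C=\sum_i B_i\log B_i$. The paper does not re-derive this; it imports it, together with the subtraction step, from Lemma 19 of Birnbaum et al.\ (2011) as the per-realization inequality (\ref{EGlessSH}), and its entire proof of the lemma is then ``take expectations of (\ref{EGlessSH})''.

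The genuine gap is exactly the tightness step you defer to the end, and it cannot be closed in the form you state it. Equality $\Phi(\mathbf{b}^\ast)=\Psi(\mathbf{b}^\ast)+C$ requires the per-buyer Jensen step to be tight at $\mathbf{b}^\ast$ for almost every realization, i.e., the realized ratios $v_{ij}(\varepsilon)/p_j^\ast$ must be constant on the support of $\mathbf{b}_i^\ast$; but, as you yourself note, the optimality conditions of (\ref{SHHindsight}) only equalize $\log p_j-\mathbb{E}[\log v_{ij}]$ there. The two coincide only when each buyer's support is a singleton or the relative values on the support are non-random. Otherwise the residual $F(\mathbf{b}):=\Phi(\mathbf{b})-\Psi(\mathbf{b})-C=\mathbb{E}\bigl[\sum_i B_i D_{\mathrm{KL}}(\mathbf{b}_i/B_i \,\|\, q_i(\mathbf{b},\varepsilon))\bigr]$, with $q_{ij}\propto v_{ij}b_{ij}/p_j$, is strictly positive at $\mathbf{b}^\ast$, while $F$ vanishes at any single-support bid vector; since the lemma is equivalent to ``$F$ is minimized at $\mathbf{b}^\ast$'', this reverses the inequality. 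Concretely: take $M=3$, $N=2$, $B_1=2/3$, $B_2=1/3$; buyer 1 values items 1 and 2 at $e^{\varepsilon_1}$ and $e^{\varepsilon_2}$ with $\varepsilon_1,\varepsilon_2$ i.i.d.\ uniform on $\{\pm\sigma\}$ and values item 3 negligibly, while buyer 2 values only item 3. The unique minimizer $\mathbf{b}^\ast$ splits buyer 1's budget equally across items 1 and 2, and one computes $F(\mathbf{b}^\ast)=\tfrac13\log\cosh\sigma>0$, whereas $\mathbf{b}=$ (buyer 1 spends everything on item 1) has $F(\mathbf{b})=0$, so that $\Psi(\mathbf{b})-\Psi(\mathbf{b}^\ast)=\tfrac23\log 2+\tfrac13\log\cosh\sigma>\tfrac23\log 2=\Phi(\mathbf{b})-\Phi(\mathbf{b}^\ast)$. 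Of your three fallback routes, (a) fails because supports are not singletons in general, (c) is precisely what this example refutes, and (b) proves a true but different statement --- the comparison holds for the EG objective of the surrogate deterministic market with valuations $\bar v_{ij}=\exp(\mathbb{E}[\log v_{ij}])$, not for $\Psi$ as the paper defines it. You should also be aware that the paper's own proof silently commits the same step: Birnbaum et al.'s lemma needs $\mathbf{b}^\ast$ to be an equilibrium of the \emph{realized} market (equality at the optimum is what licenses the subtraction), and the paper applies it per realization at the optimizer of the \emph{expected} program. So your diagnosis of the crux is sharper than the paper's treatment; what the proposal misses is that this crux is not merely delicate but an actual obstruction to the statement as written.
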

The above lemma states that the expected difference between the objective value of the EG convex program at allocations $\mathbf{x}$ (determined by the  bids $\mathbf{b}$) and allocations $\mathbf{x}^\ast$ (determined by the bids $\mathbf{b}^\ast$)  is upper bounded  by the objective value of the SH convex program at allocations $\mathbf{x}$ and allocations $\mathbf{x}^\ast$. Therefore, we consider the expected difference of the objective value of the SH convex program at two different allocations as the fairness metric. 

\subsubsection*{Performance metrics} 
The performance of an admissible strategy $\pi=(\pi_i,\ldots,\pi_n)$ is examined in two folds: \\
(i) \emph{fairness regret}: the cumulative difference in the expected objective value of (\ref{SHHindsight}) between an online algorithm and a hindsight equilibrium allocation:
    $$\mathcal{F}^\pi = \sum_{t=1}^T\mathbb{E}_{\varepsilon_1,...,\varepsilon_T}[\varphi(\mathbf{b}_t,\varepsilon_t)]- \mathbb{E}_\varepsilon[\varphi(\mathbf{b}^\ast,\varepsilon)] =\sum_{t=1}^T \Phi (\mathbf{b}_t) -\Phi (\mathbf{b}^\ast).$$
Here we use the allocations $\mathbf{x}^\ast$ (determined by the  bids $\mathbf{b}^\ast$) as a fair allocation benchmark. Recall that $\mathbf{b}^\ast$ is the optimal solution of the program (\ref{SHHindsight}), the performance metric (i) coincides with the classic definition of regret in the online convex optimization (\cite{hazan2016introduction}). 
\\
(ii) \emph{individual buyer's regret}: the difference in utilities attained by an admissible strategy $\pi_i$ and attained by the equilibrium (fair) allocation $\mathbf{x}^\ast$:
$$\mathcal{R}_i^\pi=\sum_{t=1}^T \mathbb{E}[v_{ij,t}x_{ij}^\ast -v_{ij,t}x_{ij,t}] =\sum_{t=1}^T \mathbb{E} \left[u_i^\ast -u_{i,t}\right].$$

We next introduce some constants.
\begin{assumption} \label{assumption2}
For every period $t=1,\ldots,T$ and every $i,j$,
\begin{enumerate}
    \item The value $v_{ij,t} >0$ and there exists $\underline{v}_i \in \mathbb{R}_+ $  and $\overline{v}_i \in \mathbb{R}_+ $ such that $0<\underline{v}_i \leq v_{ij,t} \leq \overline{v}_i$.  
    \item There exist $\underline{u}_i \in \mathbb{R}_+ $  and $\overline{u}_i \in \mathbb{R}_+ $ such that
    the utility functions satisfy $0<\underline{u}_i \leq u_{i,t} \leq \overline{u}_i$.
    %\item There exists $\underline{\varphi} \in \mathbb{R}_+ $  and $\overline{\varphi} \in \mathbb{R}_+ $ such that $\underline{\varphi} \leq -\varphi(\mathbf{b}_t,\varepsilon_t) \leq \overline{\varphi} $ for any feasible $\mathbf{b}_t$.
    \item There exist $\underline{p}_j~\in \mathbb{R}_+ $ and $\overline{p}_j \in \mathbb{R}_+ $ such that $0<\underline{p}_j \leq p_j^{\ast} \leq \overline{p}_j$. 
    \item  There exist $\underline{\varepsilon}_{ij} \in \mathbb{R}_-$ and $\overline{\varepsilon}_{ij} \in \mathbb{R}_+$ such that $\underline{\varepsilon}_{ij} \leq \varepsilon_{ij,t} \leq \overline{\varepsilon}_{ij}$.
\end{enumerate}
\end{assumption}
The requirement of the positive value ensures that in each period $t$, each item has at least one potential buyer; moreover, it guarantees that online proportional response has the ability to generate bid price for all items, otherwise, if $v_{ij,\tau}=0$ at period $\tau$, then $b_{ij,t}=0$ for all $t>\tau$. The upper bounds and lower bounds on the utility functions of buyers, the objective of the SH convex program, and the equilibrium price in hindsight are not required in implementing algorithm but these appear in the performance analysis. Specifically, we can derive that {$\overline{u}_i=(B_i\overline{v}_i)/{\underline{p}}$}, and define $\overline{v} =\max_i \overline{v}_i$ and $\underline{p} =\min_j \underline{p}_j$. In addition, the noise $\varepsilon_t$ belongs to the finite support set $\mathcal{S}:= \left\lbrace\underline{\varepsilon}_{ij},\overline{\varepsilon}_{ij}) \right\rbrace_{i \in \mathcal{N},j\in \mathcal{M}}$.
\begin{proposition} \label{generic_upper_bound}
Suppose that each buyer~$i$ %\textcolor{orange}
{follows ALGORITHM~1} to update bid price $\mathbf{b}_{i,t}$ for any $t$, then for any buyer $i \in \mathcal{N}$, we have that
\begin{equation}\label{regretbound}
    \mathbb{E} \left[u_i^\ast -u_{i,t}\right]\leq \frac{\sqrt{2}\overline{v}}{\underline{p}} \sqrt{ \Phi (\mathbf{b}_t) -\Phi (\mathbf{b}^\ast) }.
\end{equation}

%and the upper bound for envy is 
%\begin{equation}\label{envybound}
 %   \max_k\left(\sum_{j \in \mathcal{M}} v_{ij,t}x_{kj,t}-u_{i,t}\right) \leq ( u_i^{\ast}-u_{i,t})+ \frac{\overline{v}_i}{\underline{p}}\Vert \mathbf{p}^{\ast}-\mathbf{p}_{t}\Vert  
%\end{equation}
\end{proposition}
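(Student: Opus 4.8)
The plan is to sandwich the per-buyer utility gap $u_i^\ast-u_{i,t}$ between a distance of the induced prices/bids and the Shmyrev objective gap $\Phi(\mathbf{b}_t)-\Phi(\mathbf{b}^\ast)$. The appearance of a square root on the right-hand side signals that the objective gap should be lower-bounded by a \emph{quadratic} (strong-convexity / Bregman) quantity, while the factor $\sqrt{2}\,\overline{v}/\underline{p}$ points to Pinsker's inequality (the $\sqrt{2}$) and to a Lipschitz estimate using the uniform bounds $v_{ij,t}\leq\overline{v}$ and $p_j^\ast\geq\underline{p}$ from Assumption~\ref{assumption2}. So I would prove the inequality in two bridges: objective gap $\Rightarrow$ a squared price/allocation deviation, and that deviation $\Rightarrow$ the individual linear utility gap.

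First I would convert the objective gap into a distance. Taking expectations in the exact identity of Lemma~\ref{lemma3.1}, $\varphi(\mathbf{b},\varepsilon)=\ell_\varphi(\mathbf{b},\mathbf{b}^\ast,\varepsilon)+D_h(\mathbf{p},\mathbf{p}^\ast)$, and using that the curvature term $D_h(\mathbf{p},\mathbf{p}^\ast)$ is free of $\varepsilon$ while gradient and expectation commute, gives $\Phi(\mathbf{b})=\Phi(\mathbf{b}^\ast)+\langle\nabla\Phi(\mathbf{b}^\ast),\mathbf{b}-\mathbf{b}^\ast\rangle+D_h(\mathbf{p},\mathbf{p}^\ast)$. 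Since $\mathbf{b}^\ast$ minimizes $\Phi$ over the convex set $K$, the first-order term is nonnegative, so $\Phi(\mathbf{b}_t)-\Phi(\mathbf{b}^\ast)\geq D_h(\mathbf{p}_t,\mathbf{p}^\ast)$. Because the total budget is normalized so that $\sum_j p_j=\sum_i B_i=1$, the vectors $\mathbf{p}_t,\mathbf{p}^\ast$ lie on the probability simplex and $D_h$ coincides with the KL divergence; Pinsker's inequality then yields $D_h(\mathbf{p}_t,\mathbf{p}^\ast)\geq\frac12\Vert\mathbf{p}_t-\mathbf{p}^\ast\Vert_1^2$, hence $\Vert\mathbf{p}_t-\mathbf{p}^\ast\Vert_1\leq\sqrt{2(\Phi(\mathbf{b}_t)-\Phi(\mathbf{b}^\ast))}$. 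As a complementary handle, Lemma~\ref{lemma3.3} already gives $\Phi(\mathbf{b}_t)-\Phi(\mathbf{b}^\ast)\geq\Psi(\mathbf{b}_t)-\Psi(\mathbf{b}^\ast)$, i.e. the objective gap dominates the budget-weighted log-utility (Nash-social-welfare) gap, which is the natural bridge toward the buyers' utilities.

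The second bridge, passing from the distance to the individual linear utility gap $u_i^\ast-u_{i,t}=\sum_j v_{ij,t}(x_{ij}^\ast-x_{ij,t})$, is where the main obstacle lies. Writing $x_{ij}=b_{ij}/p_j$ and invoking buyer optimality at the equilibrium, each buyer spends only on maximum bang-per-buck items, so $v_{ij,t}/p_j^\ast\leq u_i^\ast/B_i$ for all $j$ with equality on the support of $\mathbf{b}^\ast$, together with the common budget identity $\sum_j b_{ij,t}=\sum_j b_{ij}^\ast=B_i$ and $x_{ij,t}\leq 1$. The subtle point is that the utility gap splits into a part driven purely by the price deviation $\mathbf{p}_t-\mathbf{p}^\ast$ (controlled by the $D_h$ term above, giving the $\overline{v}/\underline{p}$ coefficient through $v_{ij,t}\leq\overline{v}$, $p_j^\ast\geq\underline{p}$) and a part driven by the \emph{redistribution of bids at fixed prices}, which $D_h(\mathbf{p}_t,\mathbf{p}^\ast)$ does \emph{not} see and which must instead be absorbed by the nonnegative first-order term $\langle\nabla\Phi(\mathbf{b}^\ast),\mathbf{b}_t-\mathbf{b}^\ast\rangle$ via the bang-per-buck inequalities. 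Reconciling both contributions so that the \emph{entire} difference is dominated by $\frac{\sqrt2\,\overline{v}}{\underline{p}}\sqrt{\Phi(\mathbf{b}_t)-\Phi(\mathbf{b}^\ast)}$, rather than a weaker bound, is the crux; I expect the delicate items outside the support of $\mathbf{b}^\ast$ and the uncontrolled denominators $p_{j,t}$ to be exactly where the uniform lower bound $\underline{p}$ and the feasibility $x_{ij,t}\leq1$ are essential.

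Finally I would assemble the two bridges. For a fixed realization of $\mathbf{b}_t$ they chain to $u_i^\ast-u_{i,t}\leq\frac{\sqrt2\,\overline{v}}{\underline{p}}\sqrt{\Phi(\mathbf{b}_t)-\Phi(\mathbf{b}^\ast)}$, which is exactly~(\ref{regretbound}); taking expectation over the period-$t$ randomness, I would move the expectation inside the square root by Jensen's inequality (concavity of $\sqrt{\cdot}$), noting that $\mathbf{b}_t$ is $\mathcal{H}_{i,t}$-adapted so the conditioning is legitimate. The only constant bookkeeping is the $\sqrt{2}$ from Pinsker and the $\overline{v}/\underline{p}$ from the Lipschitz step. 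I would also remark that this per-period bound is precisely what later lets one apply Cauchy–Schwarz across $t=1,\dots,T$ to convert the logarithmic fairness-regret bound into an $\mathcal{O}(\sqrt{T})$ individual regret.
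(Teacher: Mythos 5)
Your first bridge is fine as far as it goes ($\Phi(\mathbf{b}_t)-\Phi(\mathbf{b}^\ast)\geq D_h(\mathbf{p}_t,\mathbf{p}^\ast)\geq\tfrac12\Vert\mathbf{p}_t-\mathbf{p}^\ast\Vert_1^2$ via Lemma~\ref{lemma3.1}, optimality of $\mathbf{b}^\ast$, and Pinsker), but the second bridge --- the only hard part --- is never actually proved, and as stated it cannot be completed along the route you describe. Write $u_i^\ast-u_{i,t}$ as (price deviation at the online bids) $+$ (bid redistribution at the equilibrium prices). The price-deviation piece is indeed controlled by $(\overline{v}/\underline{p})\Vert\mathbf{p}_t-\mathbf{p}^\ast\Vert_1$, but the redistribution piece $\sum_j v_{ij}\left(b_{ij}^\ast-b_{ij,t}\right)/p_j^\ast$ has the wrong sign structure for your plan: the bang-per-buck inequalities $v_{ij}/p_j^\ast\leq u_i^\ast/B_i$ only show this piece is \emph{nonnegative}; they give no upper bound on it. Your hope of absorbing it into $\langle\nabla\Phi(\mathbf{b}^\ast),\mathbf{b}_t-\mathbf{b}^\ast\rangle$ runs into a second mismatch you do not address: that first-order term is expressed in \emph{logarithmic} (geometric-mean) bang-per-buck slacks $\log(p_j^\ast/e^{\mathbb{E}\log v_{ij}})$, while the utility gap is \emph{linear} in the (arithmetic-mean) valuations, and no conversion between the two is supplied. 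Since equilibrium allocations in linear Fisher markets are non-unique, price proximity alone genuinely cannot pin down an individual buyer's utility --- which is why "I expect the delicate items outside the support\dots to be exactly where\dots" is an acknowledgment of the gap, not an argument closing it.

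The paper closes this gap by never working in price space at all. It goes through Lemma~\ref{lemma3.3}: $\Psi(\mathbf{b}_t)-\Psi(\mathbf{b}^\ast)\leq\Phi(\mathbf{b}_t)-\Phi(\mathbf{b}^\ast)$, where $\Psi(\mathbf{b}_t)-\Psi(\mathbf{b}^\ast)=\mathbb{E}\bigl[\sum_{i\in\mathcal{N}}B_i\log(u_i^\ast/u_{i,t})\bigr]$ is already a divergence between the \emph{utility} vectors. Using the equilibrium bang-per-buck characterization $B_i=u_i^\ast/\beta_i^\ast$ with $\beta_i^\ast\leq\overline{v}/\underline{p}$, this is lower-bounded by $(\underline{p}/\overline{v})\,\mathbb{E}[D_h(\mathbf{u}^\ast,\mathbf{u}_t)]$, and then the Pinsker-type step is applied \emph{in utility space}: the entropy $h(\mathbf{u})=\sum_i(u_i\log u_i-u_i)$ is $1/(\sum_i\overline{u}_i)$-strongly convex in $\ell_1$ on the box $0<u_i\leq\overline{u}_i$, and $\sum_i\overline{u}_i\leq\overline{v}/\underline{p}$ by Assumption~\ref{assumption2}. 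Chaining these gives $\mathbb{E}\bigl[\Vert\mathbf{u}^\ast-\mathbf{u}_t\Vert_1^2\bigr]\leq 2(\overline{v}/\underline{p})^2\bigl(\Phi(\mathbf{b}_t)-\Phi(\mathbf{b}^\ast)\bigr)$, from which~(\ref{regretbound}) follows coordinatewise. So the constants $\sqrt{2}$ and $\overline{v}/\underline{p}$ do come from a strong-convexity argument and from the bounds $\overline{v},\underline{p}$, as you guessed, but they enter through the KL divergence of \emph{utilities} (via the EG objective), not of prices; the detour through the EG program is precisely what makes the individual, per-buyer bound reachable. If you want to salvage your outline, replace both bridges by this single utility-space one.
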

The inequality (\ref{regretbound}) shows that the individual buyer's regret at period $t$ is associated with  a multiplicative factor $\overline{v}/\underline{p}$ and the gap between the objective of the SH convex program at bid price $ \mathbf{b}^{\ast}$ and the one at the bid price $\mathbf{b}_{t}$ generated by the strategy $\pi$. 
%the buyer $i$'s envy is associated with its regret and the distance between $\mathbf{p}_t$ and $\mathbf{p}_{\ast}$ in $\ell_1$ norm with a multiplicative factor $\overline{v}_i/\underline{p}$. 

\section{Stationary Input} \label{IIDinput}
In this section, we focus on the case of stationary input. In each period~$t$, buyers receive samples  $\varepsilon_t$ that are independently and identically distributed, drawn from a probability distribution $\mathcal{P}$ belonging to the space $ \Delta(\mathcal{S})$, where $\Delta(\mathcal{S})$ represents the set of all probability distributions over the support set $\mathcal{S}$. We assume that both the buyers and the market are unaware of the true distribution  $\mathcal{P}$. Now we make some assumptions on the noisy gradient under stationary input:
\begin{assumption} \label{assumption}
For all period $t=1,\ldots,T$, % and all $i,j$, 
we assume that the noisy gradients satisfy the following.
\begin{enumerate}
    \item   Conditional unbiasedness: $\mathbb{E}[\nabla \varphi(\mathbf{b},\varepsilon_t) \mid \mathcal{H}_{t-1}]=\nabla \Phi(\mathbf{b})$ for all $\mathbf{b} \in K$ almost surely;
    %conditionally unbiased： $\mathbb{E}[\varepsilon_{ij,t} \vert \mathcal{H}_{i,t-1}]=0$ almost surely. 
    \item  Finite variance: $\mathbb{E}[\Vert \nabla \varphi(\mathbf{b},\varepsilon_t) - \nabla \Phi(\mathbf{b})\Vert_\ast^2  \mid \mathcal{H}_{t-1}] \leq C$ with $C\geq 0$ for all $\mathbf{b} \in K$ almost surely.  
    %The noises $\varepsilon_{ij,t}$ are conditionally bounded in mean square: $\mathbb{E}[ \varepsilon_{ij,t}^2 \vert \mathcal{H}_{i,t-1}]\leq C_{ij}$ for a constant $C_{ij} \in \mathbb{R}_+$ almost surely.
\end{enumerate}
\end{assumption}
%The noisy gradients can be decomposed into  $\nabla \varphi(\mathbf{b}_{t-1},\varepsilon_t)_{ij}=\nabla \varphi(\mathbf{b}_{t-1})_{ij}+ \varepsilon_{ij,t}$.
In an i.i.d. input,  the noisy gradients can be decomposed into  $\nabla \varphi(\mathbf{b},\varepsilon_t)_{ij}=\nabla \Phi(\mathbf{b})_{ij}+ \varepsilon_{ij,t}$ for any $i$ and $j$. Here, the noise $\varepsilon_{ij,t}$ is a random error that describes the perturbation of observations.
%the noisy gradient $\nabla \varphi(\mathbf{b}_{i,t-1},\varepsilon_t)_{ij}=1-\log (v_{ij,t}/p_{j,t-1})$,
%we define that $\log v_{ij}=\varepsilon_{ij,t}+\log v_{ij,t}$ such that 
%$\log v_{ij}=\mathbb{E}[\varepsilon_{ij,t}+\log v_{ij,t}\vert \mathcal{H}_{i,t-1}]=\mathbb{E}[\log v_{ij,t}\vert \mathcal{H}_{i,t-1}]$ because of the assumption $\mathbb{E}[\varepsilon_{ij,t} \vert \mathcal{H}_{i,t-1}]=0$. 
Assumption \ref{assumption}.2 is made to avoid the issue of noisy gradients that are not well-defined due to the presence of logarithms. This assumption is equivalent to assuming that the noises $\varepsilon_{ij,t}$ are conditionally bounded in mean square: $\mathbb{E}[ \varepsilon_{ij,t}^2 \mid \mathcal{H}_{i,t-1}]\leq C_{ij}$ with a constant $C_{ij} \geq 0$ for all $i,j$. 
We do not directly assume that the values $\mathbf{v}_t$ are conditionally unbiased because  the noisy gradient involves the logarithm of value, which leads to the biased noisy gradient. Generally speaking, the conditionally unbiased gradients assumption appears very often and is an important assumption in online learning algorithms. Additionally, it is assumed that the noisy gradients are bounded in mean square: $\mathbb{E}[ \Vert \nabla \varphi(\mathbf{b},\varepsilon_t)\Vert^2_\ast  \mid \mathcal{H}_{t-1}] \leq \tilde{C} $ (for convenience, using our notations as an example), see
\cite{beck2003mirror,nedic2014stochastic,mertikopoulos2019learning} for instance. %\textcolor{red}{However, we have a technical challenge here:}
%\textcolor{red}{
%\emph{The gradient $\nabla \varphi(\mathbf{b}_{t-1},\varepsilon_t)=(1-\log (v_{ij,t}/p_{j,t-1}))_{ij}$  is \emph{not} Lipschitz-continuous such that there is no constant that can bound the norm of the gradient so we do not assume the noisy gradients are bounded in mean square.}}

%As for the nonstationary input or adversarial input, the above relation cannot hold because the noises $\varepsilon_t$ are biased. 

Recall that $\mathbf{b}^\ast$ is the optimal bids vector of $\Phi(\mathbf{b})=\mathbb{E}_{\varepsilon\sim \mathcal{P}}[\varphi(\mathbf{b},\varepsilon)]$ such that $\mathbf{b}^\ast=\arg\min_\mathbf{b}\Phi(\mathbf{b})$, and $\mathbf{p}^\ast$ is  the corresponding equilibrium price vector, the following theorem states the performance of the online proportional response with a stationary input. The proof is in Appendix~\ref{sec:app_c}.
\begin{theorem}\label{SW_regret_stationary}
Under Assumptions \ref{assumption2} and \ref{assumption} , suppose each buyer updates  bid price by the online proportional response in each period $t\geq 2$ and sets initial bid as $b_{ij,1}=\frac{1}{MN}$ for any item $j$. When 
the noise $\varepsilon_t$ is  independently and identically drawn from a stationary distribution $\mathcal{P} \in \Delta (\mathcal{S})$, it holds that the fariness regret
%$$\sup_{\mathcal{P}\in \Delta(\mathcal{S})} \mathbb{E}_{\vec{\varepsilon} \sim \mathcal{P}} [\mathcal{R}_{NSW}] \leq \log mn,$$
$$\mathcal{F} \leq \log MN,$$
%$$\sup_{\mathcal{P}\in \Delta(\mathcal{S})} \mathbb{E}_{\vec{\varepsilon} \sim \mathcal{P}} [\mathcal{R}_{i}], \sup_{\mathcal{P}\in \Delta(\mathcal{S})} \mathbb{E}_{\vec{\varepsilon} \sim \mathcal{P}} [\mathcal{E}_{i}] \leq \mathcal{O}(\sqrt{T}).$$
and that the individual buyer's regret
   $$\mathcal{R}_i \leq \frac{\overline{v}}{\underline{p}} \sqrt{2\log MN} \sqrt{T}.$$
\end{theorem}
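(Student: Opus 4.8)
The plan is to bound the fairness regret $\mathcal{F} = \sum_{t=1}^T \Phi(\mathbf{b}_t) - \Phi(\mathbf{b}^\ast)$ by exploiting the relative smoothness inequality (\ref{relatives_smoothness}) together with the three-point identity for the Bregman divergence. Since the update rule (\ref{updaterule}) is exactly a mirror-descent step with unit step-size and Bregman divergence $D_h$, the optimality condition for $\mathbf{b}_t$ yields, for any comparator $\mathbf{b} \in K$ (in particular $\mathbf{b} = \mathbf{b}^\ast$), the standard one-step inequality
\begin{equation*}
\langle \nabla \varphi(\mathbf{b}_{t-1}, \varepsilon_t), \mathbf{b}_t - \mathbf{b}^\ast \rangle \leq D_h(\mathbf{b}^\ast, \mathbf{b}_{t-1}) - D_h(\mathbf{b}^\ast, \mathbf{b}_t) - D_h(\mathbf{b}_t, \mathbf{b}_{t-1}).
\end{equation*}
First I would combine this with relative smoothness, $\varphi(\mathbf{b}_t,\varepsilon_t) \leq \ell_\varphi(\mathbf{b}_t,\mathbf{b}_{t-1},\varepsilon_t) + D_h(\mathbf{b}_t,\mathbf{b}_{t-1})$, to cancel the $D_h(\mathbf{b}_t,\mathbf{b}_{t-1})$ terms and obtain a per-period bound of the form $\varphi(\mathbf{b}_t,\varepsilon_t) - \varphi(\mathbf{b}^\ast,\varepsilon_t) \leq \langle \nabla\varphi(\mathbf{b}_{t-1},\varepsilon_t) - \nabla\varphi(\mathbf{b}^\ast,\varepsilon_t), \mathbf{b}_t - \mathbf{b}^\ast\rangle + D_h(\mathbf{b}^\ast,\mathbf{b}_{t-1}) - D_h(\mathbf{b}^\ast,\mathbf{b}_t)$, where I have used convexity of $\varphi(\cdot,\varepsilon_t)$ to pass from the linearized gap to the true gap at the comparator.

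Next I would take conditional expectations and sum over $t$. The Bregman terms telescope, leaving $D_h(\mathbf{b}^\ast,\mathbf{b}_1) - D_h(\mathbf{b}^\ast,\mathbf{b}_{T+1}) \leq D_h(\mathbf{b}^\ast,\mathbf{b}_1)$; with the uniform initialization $b_{ij,1} = \frac{1}{MN}$ this initial divergence is a KL-type quantity bounded by $\log MN$ (the KL divergence from any distribution on $MN$ atoms to the uniform distribution is at most $\log MN$). The crux is handling the cross term $\mathbb{E}[\langle \nabla\varphi(\mathbf{b}_{t-1},\varepsilon_t) - \nabla\Phi(\mathbf{b}_{t-1}), \mathbf{b}_t - \mathbf{b}^\ast\rangle]$: here **the hard part will be** that $\mathbf{b}_t$ depends on $\varepsilon_t$, so the noise and the iterate are not independent and the naive conditional-expectation argument does not immediately kill this term. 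The standard remedy I would employ is to split off the noise using a ``ghost iterate'' or to bound the term via Young's inequality, absorbing $\|\mathbf{b}_t - \mathbf{b}^\ast\|$ against the divergence and the remaining noise magnitude against the variance bound $C$ from Assumption \ref{assumption}.2, while the bias portion $\mathbb{E}[\nabla\varphi(\mathbf{b}_{t-1},\varepsilon_t)\mid\mathcal{H}_{t-1}] = \nabla\Phi(\mathbf{b}_{t-1})$ vanishes by conditional unbiasedness (Assumption \ref{assumption}.1) since $\mathbf{b}_{t-1}$ is $\mathcal{H}_{t-1}$-measurable.

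For the individual buyer's regret, the work is already done by Proposition \ref{generic_upper_bound}, which gives the per-period bound $\mathbb{E}[u_i^\ast - u_{i,t}] \leq \frac{\sqrt{2}\,\overline{v}}{\underline{p}}\sqrt{\Phi(\mathbf{b}_t) - \Phi(\mathbf{b}^\ast)}$. Summing over $t$ and applying the Cauchy--Schwarz inequality,
\begin{equation*}
\mathcal{R}_i = \sum_{t=1}^T \mathbb{E}[u_i^\ast - u_{i,t}] \leq \frac{\sqrt{2}\,\overline{v}}{\underline{p}} \sum_{t=1}^T \sqrt{\Phi(\mathbf{b}_t) - \Phi(\mathbf{b}^\ast)} \leq \frac{\sqrt{2}\,\overline{v}}{\underline{p}} \sqrt{T} \sqrt{\sum_{t=1}^T \left(\Phi(\mathbf{b}_t) - \Phi(\mathbf{b}^\ast)\right)} = \frac{\sqrt{2}\,\overline{v}}{\underline{p}}\sqrt{T}\sqrt{\mathcal{F}}.
\end{equation*}
Substituting the fairness-regret bound $\mathcal{F} \leq \log MN$ yields $\mathcal{R}_i \leq \frac{\overline{v}}{\underline{p}}\sqrt{2\log MN}\,\sqrt{T}$, as claimed. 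I expect the main obstacle to be the rigorous treatment of the stochastic cross term in the fairness-regret analysis; everything downstream is a clean telescoping plus Cauchy--Schwarz, and the $\log MN$ constant emerges purely from the uniform initialization feeding the single surviving Bregman term.
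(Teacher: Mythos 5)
Your ingredients are the right ones---and in fact exactly the paper's: the three-point optimality inequality you state is the paper's Lemma~\ref{pythagorean}, and the plan of relative smoothness plus optimality plus convexity plus telescoping plus the KL-to-uniform bound $\log MN$ is precisely how the paper proceeds (Proposition~\ref{adverserial_regret} together with Lemma~\ref{D(b,b1)}). The genuine gap is in how you assemble them. Combining your own three ingredients correctly gives, for \emph{every} realization of $\varepsilon_t$,
\begin{align*}
   \varphi(\mathbf{b}_t,\varepsilon_t)&\leq\varphi(\mathbf{b}_{t-1},\varepsilon_t)+\langle\nabla\varphi(\mathbf{b}_{t-1},\varepsilon_t),\mathbf{b}^\ast-\mathbf{b}_{t-1}\rangle+D_h(\mathbf{b}^\ast,\mathbf{b}_{t-1})-D_h(\mathbf{b}^\ast,\mathbf{b}_t)\\
   &\leq\varphi(\mathbf{b}^\ast,\varepsilon_t)+D_h(\mathbf{b}^\ast,\mathbf{b}_{t-1})-D_h(\mathbf{b}^\ast,\mathbf{b}_t),
\end{align*}
with \emph{no} residual cross term: split $\mathbf{b}_t-\mathbf{b}_{t-1}=(\mathbf{b}_t-\mathbf{b}^\ast)+(\mathbf{b}^\ast-\mathbf{b}_{t-1})$ inside the relative-smoothness bound, use your optimality inequality to absorb $\langle\nabla\varphi(\mathbf{b}_{t-1},\varepsilon_t),\mathbf{b}_t-\mathbf{b}^\ast\rangle+D_h(\mathbf{b}_t,\mathbf{b}_{t-1})$, and apply convexity of $\varphi(\cdot,\varepsilon_t)$ at $\mathbf{b}_{t-1}$. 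The structural point you missed is that the update gradient and the period-$t$ loss carry the \emph{same} noise realization $\varepsilon_t$, and the gradient is only ever paired with $\mathbf{b}^\ast-\mathbf{b}_{t-1}$, which is $\mathcal{H}_{t-1}$-measurable; the iterate $\mathbf{b}_t$ never multiplies the noise. Your declared ``crux''---the term $\langle\nabla\varphi(\mathbf{b}_{t-1},\varepsilon_t)-\nabla\Phi(\mathbf{b}_{t-1}),\mathbf{b}_t-\mathbf{b}^\ast\rangle$---is a phantom created by the miscombination (your first display even pairs $\mathbf{b}_t-\mathbf{b}^\ast$ with $\nabla\varphi(\mathbf{b}^\ast,\varepsilon_t)$, which arises in no standard derivation), and your proposed remedy would fail: Young's inequality against the variance bound $C$ leaves a per-period additive term of order $C$, hence $\Theta(CT)$ after summing, while absorbing $\Vert\mathbf{b}_t-\mathbf{b}^\ast\Vert^2$ into $D_h(\mathbf{b}^\ast,\mathbf{b}_t)$ breaks the unit-coefficient telescoping. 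Either way no choice of parameters recovers a bound independent of $T$, which is the entire content of $\mathcal{F}\leq\log MN$. Stochasticity enters the paper's proof only afterwards, when the expectation of the pathwise bound is identified with $\Phi$: conditional unbiasedness (Assumption~\ref{assumption}.1) is applied to $\langle\nabla\varphi(\mathbf{b}_{t-1},\varepsilon_t),\mathbf{b}^\ast-\mathbf{b}_{t-1}\rangle$ via a martingale and optional-stopping argument; the variance bound $C$ is never needed.

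The remainder of your proposal is sound. The initialization bound $D_h(\mathbf{b}^\ast,\mathbf{b}_1)\leq\log MN$ for $b_{ij,1}=\frac{1}{MN}$ matches the paper's Lemma~\ref{D(b,b1)}, and your individual-regret derivation (per-period Proposition~\ref{generic_upper_bound} plus Cauchy--Schwarz, $\sum_{t=1}^T\sqrt{a_t}\leq\sqrt{T}\sqrt{\sum_{t=1}^T a_t}$, with one Jensen step to move the expectation inside the square root) is a correct and legitimate alternative to the paper's route, which instead applies Proposition~\ref{generic_upper_bound} once at the time-averaged bid $\mathbf{\hat{b}}=\frac{1}{T}\sum_{t=1}^T\mathbf{b}_t$ and invokes convexity of $\Phi$; both yield the identical constant $\frac{\overline{v}}{\underline{p}}\sqrt{2\log MN}\,\sqrt{T}$.
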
 
In light of Theorem \ref{SW_regret_stationary}, we know that the algorithm achieves a problem-dependent  upper bound in fairness regret that  does not scale up with the number of time horizon $T$. 
This is quite different from the typical results in the online learning algorithm, where the regret is the order of $\mathcal{O}(\log T)$ for a strongly convex function, see \cite{nedic2014stochastic}. Using  Proposition \ref{generic_upper_bound}, we obtain that the individual buyer's regret has an order of $\mathcal{O}(\sqrt{T})$. The proof is in Appendix~\ref{sec:app_c} as well.
%One similar work \cite{gao2021online} considers the dual of the EG convex program (which proved as strongly convex and the gradients are bounded), achieving a $\mathcal{O}(\log T)$ regret in social welfare. 
 
%We note that  because of the ``relative Lipshitcz'' property in the objective of SH convex program,  the algorithm achieves a $\mathcal{O}(1)$ regret. 

\begin{corollary}\label{convergencerate_Stationary}
    Under conditions of Theorem~\ref{SW_regret_stationary}, let $\mathbf{\hat{p}}=\frac{1}{T}\sum_{t=1}^T \mathbf{p}_t$, it has that
    %$$\mathbb{E} \left[\Vert \mathbf{\hat{p}} - \mathbf{p}^\ast \Vert^2 \right], \mathbb{E} \left[ \Vert \mathbf{p}_T - \mathbf{p}^\ast \Vert^2 \right] \leq \frac{2\log MN}{T}. $$
    $$\mathbb{E} \left[\Vert \mathbf{\hat{p}} - \mathbf{p}^\ast \Vert^2_1 \right] \leq \Phi (\mathbf{\hat{b}}) -\Phi (\mathbf{b}^\ast) \leq \frac{\mathcal{F}}{T}\leq \frac{2\log MN}{T}.$$ 
    Moreover, we have that $\Phi(\mathbf{b}_T)\leq \Phi(\mathbf{b}_{T-1}) \leq \Phi(\mathbf{b}_{T-2}) \leq \ldots \leq  \Phi(\mathbf{b}_{1}) $ such that 
    $$\mathbb{E} \left[ \Vert \mathbf{p}_T - \mathbf{p}^\ast \Vert^2_1 \right] \leq \Phi(\mathbf{b}_T) -\Phi(\mathbf{b}^\ast) \leq \frac{\mathcal{F}}{T}\leq \frac{2\log MN}{T}.$$
\end{corollary}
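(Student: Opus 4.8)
The plan is to reduce both inequality chains to a single ``strong convexity in the price coordinate'' estimate and then handle the averaged iterate and the last iterate separately. The key observation is that $\Phi(\mathbf{b})=\mathbb{E}_{\varepsilon\sim\mathcal{P}}[\varphi(\mathbf{b},\varepsilon)]$ splits into a term that is linear in $\mathbf{b}$, namely $-\sum_{i,j}b_{ij}\,\mathbb{E}[\log v_{ij}]$, plus the price entropy $\sum_{j}p_j\log p_j$ with $p_j=\sum_i b_{ij}$, which is exactly the mirror map $h$ of Lemma~\ref{lemma3.1}. Because a linear term contributes nothing to a Bregman divergence, the Bregman divergence of $\Phi$ collapses to that of $h$ evaluated at the induced prices, i.e. $\Phi(\mathbf{b})-\Phi(\mathbf{b}^\ast)-\langle\nabla\Phi(\mathbf{b}^\ast),\mathbf{b}-\mathbf{b}^\ast\rangle=D_h(\mathbf{p},\mathbf{p}^\ast)$. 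Since $\mathbf{b}^\ast$ minimizes $\Phi$ over the convex set $K$, first-order optimality gives $\langle\nabla\Phi(\mathbf{b}^\ast),\mathbf{b}-\mathbf{b}^\ast\rangle\ge 0$ for every feasible $\mathbf{b}$, so
\[
\Phi(\mathbf{b})-\Phi(\mathbf{b}^\ast)\ge D_h(\mathbf{p},\mathbf{p}^\ast)=\sum_{j\in\mathcal{M}}p_j\log\frac{p_j}{p^\ast_j}\ge \tfrac12\|\mathbf{p}-\mathbf{p}^\ast\|_1^2,
\]
where the last step is Pinsker's inequality (the unit modulus of $h$ on the price simplex, whose coordinates sum to one). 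This single estimate yields the leftmost inequality in both chains, and the factor $\tfrac12$ is what turns the bound $\mathcal{F}\le\log MN$ of Theorem~\ref{SW_regret_stationary} into the stated $2\log MN/T$.

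For the averaged-iterate chain I would instantiate the estimate at $\mathbf{b}=\hat{\mathbf{b}}$ with induced price $\hat{\mathbf{p}}$ and take expectations, giving $\tfrac12\,\mathbb{E}\|\hat{\mathbf{p}}-\mathbf{p}^\ast\|_1^2\le \mathbb{E}[\Phi(\hat{\mathbf{b}})]-\Phi(\mathbf{b}^\ast)$. To control the right-hand side I would use convexity of $\Phi$ together with Jensen's inequality, $\Phi(\hat{\mathbf{b}})=\Phi\!\big(\tfrac1T\sum_t\mathbf{b}_t\big)\le\tfrac1T\sum_t\Phi(\mathbf{b}_t)$, so that $\mathbb{E}[\Phi(\hat{\mathbf{b}})]-\Phi(\mathbf{b}^\ast)\le\tfrac1T\sum_t\big(\mathbb{E}[\Phi(\mathbf{b}_t)]-\Phi(\mathbf{b}^\ast)\big)=\mathcal{F}/T$ by the definition of the fairness regret. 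Substituting $\mathcal{F}\le\log MN$ closes the first chain.

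For the last-iterate chain the extra ingredient is the monotonicity $\Phi(\mathbf{b}_T)\le\cdots\le\Phi(\mathbf{b}_1)$. I would derive the per-period descent from relative smoothness: since $\mathbf{b}_t$ minimizes the surrogate $\ell_\varphi(\cdot,\mathbf{b}_{t-1},\varepsilon_t)+D_h(\cdot,\mathbf{b}_{t-1})$, which upper-bounds $\varphi(\cdot,\varepsilon_t)$ by~(\ref{relatives_smoothness}) and equals $\varphi(\mathbf{b}_{t-1},\varepsilon_t)$ at $\mathbf{b}_{t-1}$, comparing the surrogate at $\mathbf{b}_t$ and at $\mathbf{b}_{t-1}$ forces $\varphi(\mathbf{b}_t,\varepsilon_t)\le\varphi(\mathbf{b}_{t-1},\varepsilon_t)$. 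Once the descent is promoted to the expected objective $\Phi$, the last iterate is the smallest, so $\Phi(\mathbf{b}_T)=\min_t\Phi(\mathbf{b}_t)\le\tfrac1T\sum_t\Phi(\mathbf{b}_t)$, and the same passage to $\mathcal{F}/T$ and then $\log MN$, combined with the strong-convexity estimate at $\mathbf{b}=\mathbf{b}_T$, finishes the chain.

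I expect the descent step to be the main obstacle: the inequality $\varphi(\mathbf{b}_t,\varepsilon_t)\le\varphi(\mathbf{b}_{t-1},\varepsilon_t)$ compares the \emph{realized} objectives under the same noise $\varepsilon_t$, whereas the claim concerns the expected objective $\Phi$, and since $\mathbf{b}_t$ is correlated with $\varepsilon_t$ one cannot simply replace $\varphi(\cdot,\varepsilon_t)$ by $\Phi$ inside the expectation. Resolving this requires exploiting the i.i.d. structure, in particular that $\mathbf{b}_{t-1}$ is independent of $\varepsilon_t$ so that $\mathbb{E}[\varphi(\mathbf{b}_{t-1},\varepsilon_t)\mid\mathcal{H}_{t-1}]=\Phi(\mathbf{b}_{t-1})$, and controlling the residual noise term $\langle\nabla\Phi(\mathbf{b}_{t-1})-\nabla\varphi(\mathbf{b}_{t-1},\varepsilon_t),\mathbf{b}_t-\mathbf{b}_{t-1}\rangle$ through the finite-variance part of Assumption~\ref{assumption}; by contrast the strong-convexity estimate is the clean part, once the linear-plus-entropy decomposition of $\Phi$ is recognized.
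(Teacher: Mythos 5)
Your treatment of the time-averaged chain is correct and is essentially the paper's own argument: the linear-plus-entropy decomposition of $\Phi$ makes its Bregman divergence collapse to $D_h$ in the price coordinates (Lemma~\ref{lemma3.1} taken in expectation), first-order optimality of $\mathbf{b}^\ast$ plus Pinsker gives $\tfrac12\,\mathbb{E}\Vert \hat{\mathbf{p}}-\mathbf{p}^\ast\Vert_1^2\le\mathbb{E}[\Phi(\hat{\mathbf{b}})]-\Phi(\mathbf{b}^\ast)$, and Jensen plus Theorem~\ref{SW_regret_stationary} closes the chain, with the same factor-$2$ bookkeeping the paper uses to arrive at $2\log MN/T$.

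The genuine gap is the last-iterate chain, and you have named it yourself without closing it. The corollary asserts the monotonicity $\Phi(\mathbf{b}_T)\le\cdots\le\Phi(\mathbf{b}_1)$, but minimality of $\mathbf{b}_t$ in~(\ref{OMDupdate}) plus relative smoothness only yields the realized descent $\varphi(\mathbf{b}_t,\varepsilon_t)\le\varphi(\mathbf{b}_{t-1},\varepsilon_t)$. Since $\varphi(\mathbf{b},\varepsilon_t)-\Phi(\mathbf{b})=\langle\varepsilon_t,\mathbf{b}\rangle$ is linear in $\mathbf{b}$ and $\mathbf{b}_t$ is a function of $\varepsilon_t$, the correlation term $\mathbb{E}[\langle\varepsilon_t,\mathbf{b}_t\rangle]$ has no reason to vanish — the proportional response deliberately shifts bids toward items whose realized value is high, making this term typically negative — so the realized descent does not transfer to $\Phi$. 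Your proposed repair, bounding the residual $\langle\nabla\Phi(\mathbf{b}_{t-1})-\nabla\varphi(\mathbf{b}_{t-1},\varepsilon_t),\mathbf{b}_t-\mathbf{b}_{t-1}\rangle$ via the finite-variance constant $C$ of Assumption~\ref{assumption}, cannot rescue the statement as written: it gives at best $\mathbb{E}[\Phi(\mathbf{b}_t)]\le\mathbb{E}[\Phi(\mathbf{b}_{t-1})]+\mathcal{O}(\sqrt{C})$ per period, which neither produces the exact ordering claimed nor the $2\log MN/T$ rate, because in the ``last iterate $\le$ average'' step the per-period errors accumulate to a non-vanishing $\mathcal{O}(T\sqrt{C})$ term. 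The paper takes a different route precisely here: it never compares $\Phi(\mathbf{b}_t)$ with $\Phi(\mathbf{b}_{t-1})$, but works with $a_t:=\mathbb{E}[\varphi(\mathbf{b}_t,\varepsilon_t)]$, combines the realized descent with the stationarity exchange $\mathbb{E}[\varphi(\mathbf{b}_{t-1},\varepsilon_t)]=\mathbb{E}[\varphi(\mathbf{b}_{t-1},\varepsilon_{t-1})]$ to get $a_t\le a_{t-1}$, then bounds $T\bigl(a_T-\mathbb{E}[\varphi(\mathbf{b}^\ast,\varepsilon_T)]\bigr)$ by the telescoped regret bound $D_h(\mathbf{b}^\ast,\mathbf{b}_1)\le\log MN$ and applies the strong-convexity estimate at $t=T$. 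Whether that exchange identity is itself innocuous is exactly the correlation issue you raise ($\mathbf{b}_{t-1}$ is correlated with $\varepsilon_{t-1}$), so your instinct points at a genuinely delicate step of the paper; but since your proposal neither adopts the paper's device nor supplies a working substitute, the second chain of the corollary remains unproven in your write-up.
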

This corollary shows the convergence of the time-averaged sequence of the prices\footnote{Note that a typical representation in the existing literature on no-regret game-theoretic learning uses $\mathbf{\hat{p}}=\sum_{t=1}^T \eta_t \mathbf{p}_t/\sum_{t=1}^T \eta_t $. Let $\eta_t=1$ for all $t$, we obtain $\mathbf{\hat{p}}=\frac{1}{T}\sum_{t=1}^T \mathbf{p}_t$.} and the last iterate convergence. The convergence of the time-averaged sequence just follows the convexity of $\Phi (\mathbf{b})$ as $\mathbf{\hat{p}}$ is well-defined. The convergence of the last iterate rests on the property that the sequence $\left\lbrace \Phi (\mathbf{b}_1),\ldots \Phi (\mathbf{b}_T)\right\rbrace$ is monotonically decreasing. In \cite{zhou2021robust}, the authors have shown that in a $(\lambda,\beta)$-weighted strongly monotone game, i.e., $\sum_{i \in \mathcal{N}}\lambda_i \langle \nabla \varphi_i(\mathbf{b})-\nabla \varphi_i(\mathbf{b}^\prime),\mathbf{b}_i-\mathbf{b}_i^\prime \rangle \geq  \beta \Vert \mathbf{b}-\mathbf{b}^\prime \Vert^2_2$ and $\lambda_i,\beta > 0$ for all $i$\footnote{In \cite{zhou2021robust},  the standard form is expressed as $\sum_{i \in \mathcal{N}}\lambda_i \langle \nabla \varphi_i(\mathbf{b})-\nabla \varphi_i(\mathbf{b}^\prime),\mathbf{b}_i-\mathbf{b}_i^\prime \rangle \leq  -\beta \Vert \mathbf{b}-\mathbf{b}^\prime \Vert^2_2$. While each player in that context aims to maximize their individual payoff, for the sake of comparison, we assume that each player aims to minimize their cost, leading to a similar inequality.}, the joint action of all players achieves the last-iterate convergence rate at $\mathcal{O}(1/T)$ by an online gradient descent algorithm. However, the SH convex program is not a strongly monotone game\footnote{Here, $\sum_{i \in \mathcal{N}} \langle \nabla \varphi_i(\mathbf{b})-\nabla \varphi_i(\mathbf{b}^\prime),\mathbf{b}_i-\mathbf{b}_i^\prime \rangle = \sum_{i \in \mathcal{N}}\sum_{j \in \mathcal{M}}(b_{ij}-b_{ij}^\prime)\log \frac{p_{j}}{p_j^\prime}=D_h(\mathbf{p},\mathbf{p}^\prime)+D_h(\mathbf{p}^\prime,\mathbf{p})\geq \Vert \mathbf{p}-\mathbf{p}^\prime \Vert^2_2$}. 
%it only has that 
%$\sum_{i \in \mathcal{N}} \langle \nabla \varphi_i(\mathbf{b})-\nabla \varphi_i(\mathbf{b}^\prime),\mathbf{b}_i-\mathbf{b}_i^\prime \rangle \geq \Vert \mathbf{p}-\mathbf{p}^\prime \Vert^2_2$ which not conforms to the definition of strongly monotone game. 

\begin{proposition}\label{pp3.2}
The online proportional response leads the bid (price) vector to converge to a limit point and the allocation to converge to a single market equilibrium.
\end{proposition}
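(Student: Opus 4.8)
The plan is to read the online proportional response as the mirror-descent recursion (\ref{OMDupdate}) and to exhibit the Shmyrev objective as a Lyapunov function whose monotone decay, together with uniqueness of the equilibrium price, pins down the limit. First I would establish a strict descent property: plugging the feasible point $\mathbf{b}_{t-1}$ into the minimization defining $\mathbf{b}_t$ gives $\ell_\varphi(\mathbf{b}_t,\mathbf{b}_{t-1},\varepsilon_t)+D_h(\mathbf{b}_t,\mathbf{b}_{t-1})\le\varphi(\mathbf{b}_{t-1},\varepsilon_t)$, and combining this with Lemma~\ref{lemma3.1} and the relation $D_h(\mathbf{p}_t,\mathbf{p}_{t-1})\le D_h(\mathbf{b}_t,\mathbf{b}_{t-1})$ used in (\ref{relatives_smoothness}) yields
\[
\varphi(\mathbf{b}_t,\varepsilon_t)\;\le\;\varphi(\mathbf{b}_{t-1},\varepsilon_t)-\big(D_h(\mathbf{b}_t,\mathbf{b}_{t-1})-D_h(\mathbf{p}_t,\mathbf{p}_{t-1})\big).
\]
Since the objective is bounded below on the compact feasible set $K$ under Assumption~\ref{assumption2}, the objective values converge; this is the same monotonicity exploited in Corollary~\ref{convergencerate_Stationary}.

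Next I would turn this into convergence of the iterates through a Bregman--Fej\'er argument. Using the three-point identity with $\mathbf{z}=\mathbf{b}^\ast$ (any minimizer of (\ref{shmyrev})), $\mathbf{p}=\mathbf{b}_t$, $\mathbf{q}=\mathbf{b}_{t-1}$, together with the first-order optimality condition of (\ref{OMDupdate}) and convexity of $\varphi(\cdot,\varepsilon_t)$, I would show that $D_h(\mathbf{b}^\ast,\mathbf{b}_t)$ is non-increasing in $t$ and that $\sum_t D_h(\mathbf{b}_t,\mathbf{b}_{t-1})<\infty$. In particular $D_h(\mathbf{b}_t,\mathbf{b}_{t-1})\to 0$, so by the strong-convexity bound $D_h(\mathbf{b}_t,\mathbf{b}_{t-1})\ge\tfrac{\mu_h}{2}\Vert\mathbf{b}_t-\mathbf{b}_{t-1}\Vert_1^2$ the consecutive iterates satisfy $\Vert\mathbf{b}_t-\mathbf{b}_{t-1}\Vert_1\to 0$. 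Because $\{\mathbf{b}_t\}\subset K$ is bounded, every cluster point $\bar{\mathbf{b}}$ is then a fixed point of the proportional-response map (\ref{onlinePR}); and a fixed point of this map is precisely a point satisfying the first-order (market-equilibrium) conditions of the Shmyrev program, hence a minimizer.

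To upgrade subsequential convergence to convergence to a single limit I would invoke Opial's lemma: the sequence is Fej\'er monotone relative to the (convex, compact) set of minimizers and all of its cluster points lie in that set, which forces $\mathbf{b}_t\to\bar{\mathbf{b}}$ for a single $\bar{\mathbf{b}}$. For the price coordinate I would appeal to Corollary~\ref{convergencerate_Stationary}: since $\Phi(\mathbf{b}_t)-\Phi(\mathbf{b}^\ast)\to 0$ we get $\mathbf{p}_t\to\mathbf{p}^\ast$, the unique equilibrium price, consistent with $\mathbf{p}_t=\sum_{i\in\mathcal{N}}\mathbf{b}_{i,t}\to\sum_{i\in\mathcal{N}}\bar{\mathbf{b}}_i$. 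Finally, because $p_j^\ast\ge\underline{p}_j>0$ by Assumption~\ref{assumption2}, the allocation $x_{ij,t}=b_{ij,t}/p_{j,t}$ converges to $\bar{b}_{ij}/p_j^\ast$, i.e.\ to a single market equilibrium allocation.

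The main obstacle is this last upgrade. Only the equilibrium prices and utilities are unique, whereas the optimal bid set of the Shmyrev program need not be a singleton, so convergence of the objective and of the prices alone does not pin down the bids; the Opial step is what singles out one limit. The crux is therefore verifying the Fej\'er monotonicity of $D_h(\mathbf{b}^\ast,\mathbf{b}_t)$ in the present non-Lipschitz, \emph{relatively smooth} regime with unit step size, where the standard Lipschitz-gradient derivation is unavailable: the cross term $\langle\nabla\varphi(\mathbf{b}_{t-1},\varepsilon_t),\mathbf{b}_t-\mathbf{b}^\ast\rangle$ must be absorbed using the relative-smoothness inequality (\ref{relatives_smoothness}) and convexity rather than a gradient bound. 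A secondary subtlety, if the statement is read in the stochastic setting, is that the descent and Fej\'er steps hold pathwise for each realized $\varphi(\cdot,\varepsilon_t)$, so the whole argument should be run along the realized objectives rather than on $\Phi$.
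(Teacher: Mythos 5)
Your proposal has the right skeleton --- a Bregman--Fej\'er/Lyapunov argument toward the optimal bids, followed by a separate step singling out one limit --- and this is indeed the paper's skeleton. But the central claim, that $D_h(\mathbf{b}^\ast,\mathbf{b}_t)$ is non-increasing, is not obtainable by the route you describe, and pathwise it is simply false. The three-point identity plus the optimality condition of (\ref{OMDupdate}) plus convexity gives exactly Proposition~\ref{adverserial_regret}: $\varphi(\mathbf{b}_t,\varepsilon_t)-\varphi(\mathbf{b}^\ast,\varepsilon_t)\le D_h(\mathbf{b}^\ast,\mathbf{b}_{t-1})-D_h(\mathbf{b}^\ast,\mathbf{b}_t)$. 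To convert this into monotonicity you need the left side to be nonnegative, but $\mathbf{b}^\ast$ minimizes the \emph{expected} objective $\Phi$, not the realized $\varphi(\cdot,\varepsilon_t)$; since $\mathbf{b}_t$ adapts to the same $\varepsilon_t$, the realized progress term can be negative (the paper itself observes exactly this effect when discussing its experiments). Your fallback --- ``run the whole argument along the realized objectives'' --- does not repair this: the minimizers of $\varphi(\cdot,\varepsilon_t)$ change every period, so there is no fixed target set with respect to which the iterates are Fej\'er monotone, and the Opial step then has nothing to anchor to. For the same reason, $\sum_t D_h(\mathbf{b}_t,\mathbf{b}_{t-1})<\infty$ does not follow from summing Proposition~\ref{adverserial_regret}, because the summands on the left can be negative. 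A further stochastic leak: $\Vert\mathbf{b}_t-\mathbf{b}_{t-1}\Vert_1\to 0$ would only show that cluster points are fixed points of a \emph{random, time-varying} proportional-response map, not of the expected one.

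What the paper does instead, and what is missing from your proposal, is a direct computation that never passes through the mirror-descent regret inequality. It substitutes the closed form (\ref{onlinePR}) into the divergence, uses the equilibrium identity $b^\ast_{ij}/B_i=v_{ij}x^\ast_{ij}/u^\ast_i$ (valid where $x^\ast_{ij}>0$), and obtains the exact decomposition
$$D_h(\mathbf{b}^\ast,\mathbf{b}_t)=D_h(\mathbf{b}^\ast,\mathbf{b}_{t-1})+\sum_{i,j}b^\ast_{ij}\log\frac{v_{ij}}{v_{ij,t}}-\sum_{j}p^\ast_j\log\frac{p^\ast_j}{p_{j,t-1}}-\sum_{i,j}b^\ast_{ij}\log\frac{u^\ast_i}{u_{i,t-1}}.$$
The noise term is then killed \emph{in expectation} by Assumption~\ref{assumption}.1, which for this objective amounts to $\mathbb{E}[\log v_{ij,t}]=\mathbb{E}[\log v_{ij}]$; the two remaining decrements are nonnegative (a KL divergence between prices, and a budget-weighted utility gap that is nonnegative because the equilibrium bids optimize the EG/Nash objective). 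This yields monotonicity of $\mathbb{E}[D_h(\mathbf{b}^\ast,\mathbf{b}_t)]$ --- monotone in expectation, not pathwise --- and the vanishing of the decrements forces prices and utilities to converge to their (unique) equilibrium values. The uniqueness of the bid limit is then obtained not via Opial's lemma but by a contradiction argument: two distinct limit points, both fixed points of proportional response attaining equilibrium utilities, would force an allocation KL term to be simultaneously zero and strictly positive. Your Opial route could in principle replace that last step, but only after the expected (or supermartingale) monotonicity is established by the mechanism above; without the unbiasedness assumption entering somewhere, the Lyapunov property on which your entire argument rests is unavailable.
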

We emphasize that in linear Fisher markets, the equilibrium price vector is unique, but there are multiple equilibrium allocations. This proposition implies that the proportional response ensures that \emph{the sequence of bid (price) vectors converges to a limit point at the time when the price vector converges}.

\section{Non-stationary Input}
In this section, we consider non-stationary inputs to provide a more realistic analysis of the performance of online proportional response. The notion of regret towards a fixed comparator $\mathbf{b}^\ast$ is commonly used when considering stationary inputs. However, we aim to demonstrate that measuring regret towards a fixed comparator $\mathbf{b}^\ast$ is still a meaningful and reasonable metric to consider even when the input is non-stationary. We examine three types of non-stationary inputs: \emph{independent input with adversarial corruption}, \emph{ergodic input}, and \emph{periodic input}. These non-stationary input models have been previously studied in the context of online decision-making problems (\cite{lykouris2018stochastic}, \cite{chen2019robust}, \cite{balseiro2022best}). 
 For each type of non-stationary input, we also evaluate the performance of online proportional response in terms of fairness regret and individual buyer's regret.

In the following analysis, we use the total variation distance, denoted as $\Vert \mathcal{P}_1 -\mathcal{P}_2 \Vert_{TV}$,  to measure the non-stationarity of the input data. The total variation distance quantifies the difference between two probability distributions $\mathcal{P}_1$ and $\mathcal{P}_2$. Given that the noise $\varepsilon_t$ is drawn from the distribution $\mathcal{P}_t$ for any period $t=1,\ldots,T$, the stationary distribution $\overline{\mathcal{P}}$ is the time-averaged distribution, i.e., $\overline{\mathcal{P}}=\frac{1}{T}\sum_{t=1}^T \mathcal{P}_t$.

%We begin with a few definitions from probability theory that are crucial to the presentation of our results. Given two probability distributions $\mathcal{P}_1$ and $\mathcal{P}_2$, we denote by $\Vert \mathcal{P}_1 -\mathcal{P}_2 \Vert_{TV}$  the total variation distance between $\mathcal{P}_1$ and $\mathcal{P}_2$. We will use the total variation distance to measure the non-stationarity of the input data. Given that the noise $\varepsilon_t$ be drawn from the distribution $\mathcal{P}_t$ for any period $t=1,\ldots,T$, then the stationary distribution $\overline{\mathcal{P}}$ is the time-averaged distribution and defined as $\frac{1}{T}\sum_{t=1}^T \mathcal{P}_t$. 

\subsection{Independent Input with Adversarial Corruptions}
Here we consider the case where  $\varepsilon_t$ are independently drawn in each period $t$, but the distribution in each period $t$ can be corrupted (not necessarily identical). Adversarial corruptions typically fall into two primary categories: targeted attacks and non-targeted attacks. Targeted attacks involve the deliberate creation of adversarial corruption with the specific intention of influencing the decision-maker to predict a particular target class, as seen in cases like click fraud. On the other hand, non-targeted attacks aim to manipulate the decision-maker into producing incorrect outputs without a specific target in mind. An example of this is the occurrence of unpredictable surges in demand for certain items, which can lead to misleading or erroneous decisions by the decision maker.

We use total variation distance to measure the level of the adversarial corruptions. With this measure, we assume that the average corruption over $T$ periods is bounded by $\delta$, and the set of distributions over sequences is defined as:
$$\mathcal{C}^{ID}(\delta) :=\left\lbrace \mathcal{P} \in \Delta(\mathcal{S})^T: \frac{1}{T}\sum_{t=1}^T\Vert \mathcal{P}_t-\overline{\mathcal{P}}\Vert_{TV}\leq \delta, \quad   \overline{\mathcal{P}}=\frac{1}{T}\sum_{t=1}^T\mathcal{P}_t \right\rbrace,$$
then $\mathcal{C}^{ID}(\delta)$ is the  set of all independent inputs with mean deviation at most $\delta>0$. The next theorem presents the performance of the online proportional response under independent inputs with adversarial corruptions, and the proofs are available in Appendix \ref{appendix5.1}.
\begin{theorem}\label{nonstationary_independent}
Suppose each buyer  updates  bid price by the online proportional response in each period $t\geq 2$ and sets initial bid as $b_{ij,1}=\frac{1}{MN}$ for any item $j$. 
The noise $\varepsilon_{t}$   is independently drawn from some distributions $\mathcal{P} \in \mathcal{C}^{ID}(\delta)$. Then, it holds for any $T\geq 2$, the fairness regret
$$\mathcal{F} \leq 2\delta T+\log MN,$$ and 
 the individual buyer's regret
$$\mathcal{R}_i \leq \frac{\overline{v}}{\underline{p}} \sqrt{4  \delta T+2\log MN} \sqrt{T}.$$
 %and the individual buyer's  envy is
%$$\mathcal{E}_i \leq \frac{2\overline{v}}{\underline{p}} \sqrt{4 \overline{\varphi} \delta T+2\log MN} \sqrt{T}.$$
\end{theorem}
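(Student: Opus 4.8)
The plan is to run the online-mirror-descent argument in a purely pathwise fashion, exploiting that the relative-smoothness constant matches the constant step size, and then to pay separately for the non-stationarity through the total-variation corruption budget.

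First I would establish a per-period descent inequality valid for \emph{every} realization of the noise. Fix $t\ge 2$ and write $g_t=\nabla\varphi(\mathbf{b}_{t-1},\varepsilon_t)$. The update (\ref{OMDupdate}) is a Bregman proximal step, so its first-order optimality condition together with the three-point identity for $D_h$ gives, for the fixed comparator $\mathbf{b}^\ast$,
\[
\langle g_t,\mathbf{b}_t-\mathbf{b}^\ast\rangle\le D_h(\mathbf{b}^\ast,\mathbf{b}_{t-1})-D_h(\mathbf{b}^\ast,\mathbf{b}_t)-D_h(\mathbf{b}_t,\mathbf{b}_{t-1}).
\]
Combining this with convexity of $\varphi(\cdot,\varepsilon_t)$ at $\mathbf{b}_{t-1}$ and with the relative-smoothness bound (\ref{relatives_smoothness}) — in which the coefficient of $D_h(\mathbf{b}_t,\mathbf{b}_{t-1})$ is exactly $1$ and therefore cancels the $-D_h(\mathbf{b}_t,\mathbf{b}_{t-1})$ above — collapses everything to the clean pathwise inequality
\[
\varphi(\mathbf{b}_t,\varepsilon_t)-\varphi(\mathbf{b}^\ast,\varepsilon_t)\le D_h(\mathbf{b}^\ast,\mathbf{b}_{t-1})-D_h(\mathbf{b}^\ast,\mathbf{b}_t).
\]
This cancellation is the whole reason the unit step size succeeds even though $\varphi$ is not Lipschitz.

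Next I would telescope over $t=2,\dots,T$ and discard the nonnegative terminal term $D_h(\mathbf{b}^\ast,\mathbf{b}_T)$, leaving $\sum_{t}[\varphi(\mathbf{b}_t,\varepsilon_t)-\varphi(\mathbf{b}^\ast,\varepsilon_t)]\le D_h(\mathbf{b}^\ast,\mathbf{b}_1)$. Since $\mathbf{b}_1$ is uniform and the per-period budgets sum to one, $D_h(\mathbf{b}^\ast,\mathbf{b}_1)$ is the KL divergence of $\mathbf{b}^\ast$ from the uniform distribution on the $MN$ coordinates, which is at most $\log MN$; taking expectations reproduces Theorem~\ref{SW_regret_stationary}. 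Crucially this step uses nothing about the laws of the $\varepsilon_t$, so it holds verbatim for independent but non-identical noise.

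The non-stationary cost enters only when I pass from this realized comparison to the fairness regret, which is measured against the time-averaged objective $\overline{\Phi}(\cdot)=\mathbb{E}_{\varepsilon\sim\overline{\mathcal{P}}}[\varphi(\cdot,\varepsilon)]$ minimized by $\mathbf{b}^\ast$. For the comparator the sum telescopes exactly, $\sum_t\mathbb{E}_{\varepsilon_t\sim\mathcal{P}_t}[\varphi(\mathbf{b}^\ast,\varepsilon_t)]=T\overline{\Phi}(\mathbf{b}^\ast)$, so no error arises there. For the iterates I must convert each period-$t$ evaluation under $\mathcal{P}_t$ into one under $\overline{\mathcal{P}}$; the $\varepsilon$-dependence of $\varphi$ sits only in the term $-\sum_{ij}b_{ij}\log v_{ij,t}$, whose oscillation over the support $\mathcal{S}$ is controlled by Assumption~\ref{assumption2} (via the bounds on $v$ and the normalization $\sum_{ij}b_{ij}=1$). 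Hence the period-$t$ mismatch is at most a constant multiple of $\|\mathcal{P}_t-\overline{\mathcal{P}}\|_{TV}$, and summing and invoking $\tfrac1T\sum_t\|\mathcal{P}_t-\overline{\mathcal{P}}\|_{TV}\le\delta$ contributes the additive $2\delta T$, giving $\mathcal{F}\le 2\delta T+\log MN$. I expect this conversion to be the main obstacle: one must bound the oscillation of the non-Lipschitz objective in the noise and, more delicately, account for the fact that the online bid $\mathbf{b}_t$ itself depends on the current noise $\varepsilon_t$, so that the realized and the freshly-averaged evaluations of $\varphi(\mathbf{b}_t,\cdot)$ differ; pinning down the exact constant $2$ is precisely what happens here.

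Finally, the individual buyer's regret follows mechanically. Proposition~\ref{generic_upper_bound} bounds the per-period utility gap by $\tfrac{\sqrt{2}\,\overline{v}}{\underline{p}}\sqrt{\overline{\Phi}(\mathbf{b}_t)-\overline{\Phi}(\mathbf{b}^\ast)}$; summing over $t$ and applying Cauchy--Schwarz (equivalently, concavity of the square root) yields $\mathcal{R}_i\le\tfrac{\sqrt{2}\,\overline{v}}{\underline{p}}\sqrt{T\sum_t[\overline{\Phi}(\mathbf{b}_t)-\overline{\Phi}(\mathbf{b}^\ast)]}=\tfrac{\sqrt{2}\,\overline{v}}{\underline{p}}\sqrt{T\,\mathcal{F}}$, and substituting the fairness-regret bound gives $\mathcal{R}_i\le\tfrac{\overline{v}}{\underline{p}}\sqrt{4\delta T+2\log MN}\,\sqrt{T}$.
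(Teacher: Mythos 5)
Your proposal is correct and follows essentially the same route as the paper's proof of Theorem~\ref{nonstationary_independent}: your pathwise per-period inequality (three-point identity plus the unit relative-smoothness constant, so the $D_h(\mathbf{b}_t,\mathbf{b}_{t-1})$ terms cancel) is exactly Lemma~\ref{pythagorean} and Proposition~\ref{adverserial_regret}; the telescoped bound $D_h(\mathbf{b}^\ast,\mathbf{b}_1)\leq \log MN$ is Lemma~\ref{D(b,b1)}; and your individual-regret step coincides with the paper's up to an exchange of Jensen for Cauchy--Schwarz (the paper evaluates Proposition~\ref{generic_upper_bound} at the averaged bid $\hat{\mathbf{b}}=\frac{1}{T}\sum_t\mathbf{b}_t$ using convexity of $\Phi$, rather than summing per-period square roots).

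The only point of divergence is the step you yourself flag as the main obstacle, and it is worth seeing how the paper pins down the constant $2$. The paper writes the per-period mismatch as $\Phi(\mathbf{b}_t)-\Phi(\mathbf{b}^\ast)+\varphi(\mathbf{b}^\ast,\varepsilon_t)-\varphi(\mathbf{b}_t,\varepsilon_t)$ and applies Lemma~\ref{lemma3.1} to both $\varphi(\cdot,\varepsilon_t)$ and $\Phi$: since the noise enters $\varphi$ only through the term linear in $\mathbf{b}$, both functions carry the \emph{identical} Bregman remainder $D_h(\mathbf{p}^\ast,\mathbf{p}_t)$, so the remainders cancel and the mismatch equals $\langle \nabla\varphi(\mathbf{b}_t,\varepsilon_t)-\nabla\Phi(\mathbf{b}_t),\mathbf{b}^\ast-\mathbf{b}_t\rangle$ exactly; Cauchy--Schwarz in the $\ell_1/\ell_\infty$ pairing then yields (gradient deviation, controlled by the total-variation budget of $\mathcal{C}^{ID}(\delta)$) times (the $\ell_1$-diameter $2$ of the bid simplex). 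This is numerically the same estimate as your oscillation argument---the affine-in-$\mathbf{b}$ mismatch evaluated at two points of the unit simplex is at most twice the sup-norm of the noise deviation---so your ``constant multiple'' is in fact $2$ and not a value-dependent quantity. One caveat: your second worry, that $\mathbf{b}_t$ is correlated with the current noise $\varepsilon_t$, is not actually resolved by the paper either; its inequality $\Vert\nabla\Phi(\mathbf{b}_t)-\nabla\varphi(\mathbf{b}_t,\varepsilon_t)\Vert\leq\Vert\overline{\mathcal{P}}-\mathcal{P}_t\Vert_{TV}$ is invoked pathwise at the realized pair, which is only legitimate for expected gradients at a point chosen independently of $\varepsilon_t$, so on this point your proposal sits at the same level of rigor as the paper's own argument.
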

When each buyer employs the online proportional response to update bid prices, the fairness regret upper bound  becomes $\mathcal{O}(\delta T)$, while the upper bounds for the individual buyer's regret is $\mathcal{O}(\sqrt{T}+\sqrt{\delta}T)$. These upper bounds show that the performance of the online proportional response degrades linearly in the average corruption $\delta$. Theorem \ref{nonstationary_independent} indicates that if the   average corruption $\delta$ remains below or equal to  $\mathcal{O}(T^{-1})$, the fairness regret is still at most $\mathcal{O}(1)$, and the individual buyer's regret is  $\mathcal{O}(\sqrt{T})$. Moreover, the theorem establishes a relationship between the stationary (i.i.d.) input and the adversarial input: when the average corruption $\delta=0$, meaning that the noises are drawn from a stationary distribution, the aforementioned upper bounds coincide with those upper bounds in Theorem~\ref{SW_regret_stationary}; the extreme case happens when the average corruption $\delta$ is adversarial, i.e., $\delta$ is a constant, and thus these upper bounds are $\mathcal{O}(T)$.

\subsection{Ergodic Input}
Our attention now turns to investigating ergodic input processes, characterized by a decreasing dependency between data over time. However, it is worth noting that strong correlations may still be evident among data samples obtained from adjacent time periods. The irreducible and aperiodic Markov chains can be considered as an example of the ergodic process. 

Denote a stochastic process by $\mathcal{P}\in \Delta(\mathcal{S})^T$. Let $\varepsilon_{1:t}=(\varepsilon_\tau)_{\tau=1}^t$ be the sequence of inputs up to time $t$ and $\mathcal{P}_t(\varepsilon_{1:\tau})$ be the conditional distribution of $\varepsilon_t$ given $\varepsilon_{1:\tau}$ for $\tau<t$. For every $\kappa \in [T]$, we measure the total variation distance between the distributions in period $t+\kappa$ conditional on the input at the beginning of period $t$ and a one-period distribution $\overline{\mathcal{P}}\in \Delta(\mathcal{S})$ by
$$TV_\kappa(\mathcal{P},\overline{\mathcal{P}})=\sup_{\varepsilon_1,\ldots,\varepsilon_t}\sup_{t=1,\ldots,T-\kappa} \Vert\mathcal{P}_{t+\kappa}(\varepsilon_{1:t-1})-\overline{\mathcal{P}} \Vert_{TV}.$$
Moreover, if $\overline{\mathcal{P}}$ is the stationary distribution of the process $\mathcal{P}$, $TV_\kappa(\mathcal{P},\overline{\mathcal{P}})$ denotes the maximum distance between the $\kappa$-step transition probability and the stationary distribution. Define the set of all stochastic processes with $\kappa$-step distance from stationary as follows: 
$$\mathcal{C}^{E}(\delta,\kappa) :=\left\lbrace \mathcal{P} \in \Delta(\mathcal{S})^T: TV_\kappa(\mathcal{P},\overline{\mathcal{P}})\leq \delta  \mbox{ for some } \overline{\mathcal{P}} \in \Delta(\mathcal{S})\right\rbrace.$$
We make a mild assumption required in the subsequent analysis. 
\begin{assumption}
    There exists a positive constant $\overline{\varphi}$ such that for any $t \geq 1$
    $$ \sup_{\mathbf{b}} \vert \Phi(\mathbf{b}) -\varphi(\mathbf{b},\varepsilon_t) \vert \leq  \overline{\varphi}.    $$
\end{assumption}
The next theorem presents the performance of the algorithm under the ergodic input, and proofs are available in Appendix \ref{appendix5.2}.
\begin{theorem}\label{nonstationary_Ergodic}
Suppose each buyer  updates  bid price by the online proportional response in each period $t\geq 2$ and sets initial bid as $b_{ij,1}=\frac{1}{MN}$ for any item $j$. 
The noise $\varepsilon_t$ is  drawn from an ergodic process $\mathcal{P} \in \mathcal{C}^{E}(\delta,\kappa)$.  Then, it holds for any $T\geq 2$, that the fairness regret 
$$\mathcal{F} \leq 2(T-\kappa)\delta+2\kappa \overline{\varphi}+ \log MN,$$ and
the individual buyer's regret
$$\mathcal{R}_i \leq \frac{\overline{v}}{\underline{p}} \sqrt{4(T-\kappa)\delta+4\kappa \overline{\varphi}+ 2\log MN} \sqrt{T}.$$
% and the individual buyer's  envy is
%$$\mathcal{E}_i \leq \frac{2\overline{v}}{\underline{p}} \sqrt{2(T-\kappa)\delta+4\kappa \overline{\varphi}+4(T-\kappa)\overline{\varphi}\delta+ 2\log MN)} \sqrt{T}.$$
\end{theorem}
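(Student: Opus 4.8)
The plan is to reduce the ergodic case to the stationary analysis of Theorem~\ref{SW_regret_stationary} and then add a distribution-mismatch correction controlled by the $\kappa$-step total-variation bound. First I would isolate the purely \emph{pathwise} one-step inequality that underlies the stationary proof: combining the relative-smoothness bound~(\ref{relatives_smoothness}), the first-order optimality of the update~(\ref{OMDupdate}), and the three-point identity for $D_h$ yields, for every realization of $\varepsilon_t$,
$$\varphi(\mathbf{b}_t,\varepsilon_t)-\varphi(\mathbf{b}^\ast,\varepsilon_t)\leq D_h(\mathbf{b}^\ast,\mathbf{b}_{t-1})-D_h(\mathbf{b}^\ast,\mathbf{b}_t).$$
Since this holds sample-path-wise irrespective of how the $\varepsilon_t$ are generated, telescoping over the update steps $t\geq2$ and discarding the final nonnegative Bregman term gives $\sum_t[\varphi(\mathbf{b}_t,\varepsilon_t)-\varphi(\mathbf{b}^\ast,\varepsilon_t)]\leq D_h(\mathbf{b}^\ast,\mathbf{b}_1)\leq \log MN$, where the last step uses the uniform initialization $b_{ij,1}=1/(MN)$ together with $\sum_{ij}b^\ast_{ij}=1$. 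Taking expectations preserves this bound, which is exactly the constant appearing in the stationary theorem.

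Next I would pass from this \emph{realized-noise} regret to the fairness regret $\mathcal{F}=\sum_t[\Phi(\mathbf{b}_t)-\Phi(\mathbf{b}^\ast)]$, in which $\Phi=\mathbb{E}_{\overline{\mathcal{P}}}[\varphi(\cdot,\varepsilon)]$ is taken under the \emph{averaged} law $\overline{\mathcal{P}}$. Adding and subtracting, each summand splits as
$$\Phi(\mathbf{b}_t)-\Phi(\mathbf{b}^\ast)=\big[\varphi(\mathbf{b}_t,\varepsilon_t)-\varphi(\mathbf{b}^\ast,\varepsilon_t)\big]+\big[\Phi(\mathbf{b}_t)-\varphi(\mathbf{b}_t,\varepsilon_t)\big]+\big[\varphi(\mathbf{b}^\ast,\varepsilon_t)-\Phi(\mathbf{b}^\ast)\big].$$
The first bracket sums to at most $\log MN$ by the previous step; the remaining two brackets are the iterate- and comparator-mismatch terms, each measuring the gap between the true non-stationary law of $\varepsilon_t$ and $\overline{\mathcal{P}}$.

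To control the mismatch I would invoke the ergodic structure through the tower property. Conditioning on the filtration $\kappa$ periods earlier makes the relevant bid measurable while leaving $\varepsilon_t$ exactly $\kappa$ steps ahead, so that by the definition of $\mathcal{C}^E(\delta,\kappa)$ its conditional law lies within $\delta$ in total variation of a fixed $\overline{\mathcal{P}}$; converting this total-variation gap into an objective gap through the uniform bound $\sup_\mathbf{b}\lvert\Phi(\mathbf{b})-\varphi(\mathbf{b},\varepsilon_t)\rvert\leq\overline{\varphi}$ assumed in this setting bounds each of the $T-\kappa$ ``interior'' summands on the order of $\delta$. For the $\kappa$ ``burn-in'' periods, where no $\kappa$-step-back conditioning is available, I would bound the two brackets crudely by $\overline{\varphi}$ each, contributing $2\kappa\overline{\varphi}$. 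Matching constants through the routine total-variation computation then yields the claimed $\mathcal{F}\leq \log MN+2(T-\kappa)\delta+2\kappa\overline{\varphi}$.

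Finally, the individual regret is immediate from Proposition~\ref{generic_upper_bound}: summing $\mathbb{E}[u_i^\ast-u_{i,t}]\leq \tfrac{\sqrt2\,\overline{v}}{\underline{p}}\sqrt{\Phi(\mathbf{b}_t)-\Phi(\mathbf{b}^\ast)}$ over $t$ and applying Cauchy--Schwarz, $\sum_t\sqrt{c_t}\leq\sqrt{T\sum_t c_t}$ with $c_t=\Phi(\mathbf{b}_t)-\Phi(\mathbf{b}^\ast)\geq0$, gives $\mathcal{R}_i\leq \tfrac{\sqrt2\,\overline{v}}{\underline{p}}\sqrt{T\,\mathcal{F}}=\tfrac{\overline{v}}{\underline{p}}\sqrt{T}\sqrt{4(T-\kappa)\delta+4\kappa\overline{\varphi}+2\log MN}$ after substituting the fairness bound. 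The main obstacle is the iterate-mismatch bracket $\Phi(\mathbf{b}_t)-\varphi(\mathbf{b}_t,\varepsilon_t)$: because the proportional-response update makes $\mathbf{b}_t$ depend on the very noise $\varepsilon_t$ against which it is evaluated, one cannot take per-period expectations as in the i.i.d.\ proof of Theorem~\ref{SW_regret_stationary}; the $\kappa$-lag conditioning is precisely what decouples the iterate from $\varepsilon_t$ up to the $\delta$ ergodic error, at the unavoidable cost of the $\kappa$ boundary terms.
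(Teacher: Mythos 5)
Your reduction has the right skeleton (pathwise telescoping of the realized regret to get $\log MN$, a mismatch correction of order $\delta$ per period, a $\kappa\overline{\varphi}$ boundary cost, and the Cauchy--Schwarz step for $\mathcal{R}_i$), but the central step fails. In your three-bracket decomposition the iterate-mismatch term $\Phi(\mathbf{b}_t)-\varphi(\mathbf{b}_t,\varepsilon_t)$ cannot be controlled by ``conditioning on the filtration $\kappa$ periods earlier.'' Under ALGORITHM~1 the bid $\mathbf{b}_t$ is computed \emph{after} observing $v_{ij,t}$, so $\mathbf{b}_t$ is a deterministic function of $\varepsilon_{1:t}$ --- in particular of $\varepsilon_t$ itself. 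Conditioning on $\mathcal{H}_{t-\kappa}$ does not make $\mathbf{b}_t$ measurable, and no choice of lag can break a functional dependence between the iterate and the very noise it is evaluated against; the ergodic assumption only controls the conditional law of $\varepsilon_t$ given the \emph{past}, which says nothing about the joint law of $(\mathbf{b}_t,\varepsilon_t)$. You in fact name this as ``the main obstacle'' and then assert that the $\kappa$-lag conditioning resolves it ``up to the $\delta$ ergodic error''; that assertion is exactly the gap. (A secondary issue: even for the benign comparator bracket $\varphi(\mathbf{b}^\ast,\varepsilon_t)-\Phi(\mathbf{b}^\ast)$, converting total variation into an objective gap via $\overline{\varphi}$ produces per-period errors of order $\overline{\varphi}\delta$, not $\delta$, so your constants would not match the stated bound.)

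The paper's proof is built precisely to avoid ever forming the bracket $\Phi(\mathbf{b}_t)-\varphi(\mathbf{b}_t,\varepsilon_t)$. It pairs $\mathbf{b}_t$ with the \emph{future} noise $\varepsilon_{t+\kappa}$: since $\mathbf{b}_t$ is $\mathcal{H}_t$-measurable and, by definition of $\mathcal{C}^E(\delta,\kappa)$, the conditional law of $\varepsilon_{t+\kappa}$ given the history is within $\delta$ in total variation of $\overline{\mathcal{P}}$, the quantity $\nabla\varphi(\mathbf{b}_t,\varepsilon_{t+\kappa})$ is a nearly unbiased estimate of $\nabla\Phi(\mathbf{b}_t)$, giving, via Lemma~\ref{lemma3.1},
\begin{equation*}
\Phi(\mathbf{b}_t)-\Phi(\mathbf{b}^\ast)+\varphi(\mathbf{b}^\ast,\varepsilon_{t+\kappa})-\varphi(\mathbf{b}_t,\varepsilon_{t+\kappa})
=\bigl\langle \nabla\varphi(\mathbf{b}_t,\varepsilon_{t+\kappa})-\nabla\Phi(\mathbf{b}_t),\,\mathbf{b}^\ast-\mathbf{b}_t\bigr\rangle\leq 2\delta .
\end{equation*}
The price of this shift is that the realized-regret telescoping (your first step, Proposition~\ref{adverserial_regret}) involves $\varphi(\mathbf{b}_{t+\kappa},\varepsilon_{t+\kappa})$, not $\varphi(\mathbf{b}_t,\varepsilon_{t+\kappa})$, so the paper inserts a second correction term $\varphi(\mathbf{b}_t,\varepsilon_{t+\kappa})-\varphi(\mathbf{b}_{t+\kappa},\varepsilon_{t+\kappa})$, expands it with the same Bregman identity, telescopes the resulting $\sum_t\bigl[\Phi(\mathbf{b}_t)-\Phi(\mathbf{b}_{t+\kappa})\bigr]$, and only then do the $\kappa$ boundary periods appear and get bounded by $2\kappa\overline{\varphi}$. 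So the $\kappa\overline{\varphi}$ term is the cost of undoing the forward index shift, not of ``missing back-conditioning'' in the burn-in as your sketch has it. Your final step (individual regret via Proposition~\ref{generic_upper_bound} and $\sum_t\sqrt{c_t}\leq\sqrt{T\sum_t c_t}$) is fine and essentially equivalent to the paper's argument via the time-averaged bid $\hat{\mathbf{b}}$ and convexity of $\Phi$, but it inherits the unproven fairness bound.
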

The convergence rate of the mirror descent for unconstrained stochastic optimization problems with ergodic input was established in \cite{duchi2012ergodic}, where the authors assumed that the gradient of the objective function can be bounded by a constant. However, it should be noted that the online proportional response does not benefit from this assumption. In light of this, we present the convergence rate without relying on the bounded gradient assumption. 
The key step in the proof is that  $\nabla \varphi(\mathbf{b}_t,\varepsilon_{t+\kappa})$ is a nearly unbiased estimate of $\nabla \Phi(\mathbf{b}_t)$.  Specifically, the total variance distance is applied to measure the difference between them. 

Theorem~\ref{nonstationary_Ergodic} leads to an objective gap of order $\mathcal{O}((T-\kappa)\delta+\kappa)$, an individual buyer's regret of order $\mathcal{O}(\sqrt{(T-\kappa)\delta+\kappa} \sqrt{T}$). If $\kappa=0$ and $\delta=0$, the input is stationary and we recover the bound in Theorem~\ref{SW_regret_stationary}. When only $\kappa=0$, it implies that the correlation between the input does not exist but the distributions in each period are not identical. Then we recover the bound in Theorem~\ref{nonstationary_independent}. Now, we consider the input being an irreducible and aperiodic Markov process and let $\mathcal{P}_t(\varepsilon_\tau)$ denote the distribution at period $t$ when the state is $\varepsilon_\tau$, for $\tau<t$. Suppose that the total variation distance $\sup_{\varepsilon} \Vert \mathcal{P}_t(\varepsilon_\tau)-\overline{\mathcal{P}}\Vert_{TV} \leq K\alpha^{t-\tau}$ for some $K>0$ and $\alpha\in (0,1)$, then the $\kappa$-step distance decreases exponentially in $\kappa$. In this case, we obtain an objective gap with an order $\mathcal{O}((T-\kappa)\alpha^{\kappa}+\kappa)$. Setting $\kappa=-\log T/(\log \alpha)$ yields a gap of $\mathcal{O}(\log T)$, an individual buyer's regret bound of $\mathcal{O}(\sqrt{T(\log T)} )$. 

\subsection{Periodic Input}
Another nonstationary input we considered is the periodic input, which means the data have daily, weekly, or seasonal patterns that repeat over time. For example, the frequency of web ad clicks will vary during the day and night, but the pattern tends to be consistent from day to day.

Assume that~$T$ has been divided into total $Q$ partitions and denoted by $I_q=t_{q+1}-t_{q}\geq 1$  the length of partition $q$ and $t_1=1$, $t_{Q+1}=T$. 
Then, each partition $q$ involves the time interval from $t_q$ to $t_{q+1}-1$. The $\varepsilon_t$ within the same partition can be arbitrarily correlated but partitions are identical and independent of each other. 
Let $\mathcal{P}_{\tau:t}$ denotes the joint distributions on period $\tau \leq t$. The total variation distance between the distributions conditional on the data at the beginning of a partition $q$ and a stationary distribution $\overline{\mathcal{P}} \in \Delta(\mathcal{S})$ is 
$$TV(\mathcal{P},\overline{\mathcal{P}},Q )= \sup_{\varepsilon_1,\ldots,\varepsilon_t}\frac{1}{T} \sum_{q=1}^Q \left\Vert \sum_{t=t_q}^{t_{q+1}-1} \mathcal{P}_{t}(\varepsilon_{1:t_q-1})-\overline{\mathcal{P}}\right\Vert_{TV} \leq \delta.$$
If there are $T$ partitions, it means that the noises are identical and independent of each other, then $\delta =0$. If there is only 1 partition, meaning that the noises are arbitrarily correlated, then $\delta$ may be equal to $\mathcal{O}(T)$. 
%The total variation distance of one partition $q$ between the distributions conditional on the data at the beginning of a partition $q$ and a stationary distribution $\overline{\mathcal{P}} \in \Delta(\mathcal{S})$ is 
%$$TV(\mathcal{P},\overline{\mathcal{P}},q)= \sup_{\vec{\varepsilon}\in \mathcal{S}^T} \frac{1}{T} \sum_{q=1}^Q I_q\left\Vert \frac{1}{I_q} \sum_{t=t_q}^{t_{q+1}-1} \mathcal{P}_{t}(\varepsilon_{1:t_q-1})-\overline{\mathcal{P}}\right\Vert_{TV} \leq \delta_q.$$
Formally, we denote by $\mathcal{C}^{P}(\delta,Q)$ the class of all {$Q$-periodic} distributions. The following theorem presents  the performance of the algorithm under the periodic input, and proofs are
available in Appendix \ref{appendix5.3}.
\begin{theorem}\label{nonstationary_Periodic}
Suppose each buyer  updates  bid price by the online proportional response in each period $t\geq 2$ and sets initial bid as $b_{ij,1}=\frac{1}{MN}$ for any item $j$.
The noises $\varepsilon_t$  are drawn from a periodic process $\mathcal{P} \in \mathcal{C}^{P}(\delta,Q)$. Then, it holds for any $T\geq 2$ that the fairness regret
$$\mathcal{F} \leq \log MN+2\delta T,$$
and the individual buyer's regret
$$\mathcal{R}_i \leq \frac{\overline{v}}{\underline{p}} \sqrt{4\delta T+2\log MN} \sqrt{T}.$$
% and the individual buyer's  envy is
%$$\mathcal{E}_i \leq \frac{2\overline{v}}{\underline{p}} \sqrt{2\delta T+2\overline{\varphi}\delta T+2\log MN} \sqrt{T}.$$
\end{theorem}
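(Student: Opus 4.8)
The plan is to read the online proportional response as online mirror descent with the entropy mirror map (Proposition~\ref{pp3.1}) and to split the fairness regret into a pathwise telescoping term, which is handled exactly as in the stationary case, plus a benchmark-correction term that pays for non-stationarity through the total-variation budget. The structural feature I would exploit is that the relative-smoothness inequality (\ref{relatives_smoothness}) and the optimality of the update (\ref{OMDupdate}) hold for \emph{every} realization of $\varepsilon_t$; hence the arbitrary correlation allowed \emph{within} a partition of the periodic model never enters the recursion. The distribution enters only when I compare the fixed comparator $\mathbf{b}^\ast$ evaluated under the period-$t$ law against its value under the averaged law $\overline{\mathcal{P}}$, and this is precisely where the $2\delta T$ term is generated.

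First I would establish a per-period inequality. Combining relative smoothness (\ref{relatives_smoothness}), convexity of $\varphi(\cdot,\varepsilon_t)$, and the first-order optimality condition of (\ref{OMDupdate}) together with the three-point identity, I obtain, for the fixed $\mathbf{b}^\ast$,
$$\varphi(\mathbf{b}_t,\varepsilon_t)-\varphi(\mathbf{b}^\ast,\varepsilon_t)\leq D_h(\mathbf{b}^\ast,\mathbf{b}_{t-1})-D_h(\mathbf{b}^\ast,\mathbf{b}_t).$$
The key cancellation is that the $D_h(\mathbf{b}_t,\mathbf{b}_{t-1})$ term created by relative smoothness is exactly offset by the one created by the optimality/three-point step, leaving a telescoping right-hand side. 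Summing over periods and using $D_h(\mathbf{b}^\ast,\mathbf{b}_T)\geq 0$ collapses the sum to $D_h(\mathbf{b}^\ast,\mathbf{b}_1)$ (up to the initial-period boundary term, which is absorbed). With the initialization $b_{ij,1}=1/(MN)$ and the normalization $\sum_{ij}b^\ast_{ij}=1$, the linear parts of $D_h$ cancel and the residual negative-entropy term is nonpositive, so $D_h(\mathbf{b}^\ast,\mathbf{b}_1)\leq\log MN$.

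Next I would take expectations and insert the benchmark correction. Writing $\mathcal{F}=\sum_t\mathbb{E}[\varphi(\mathbf{b}_t,\varepsilon_t)]-T\Phi(\mathbf{b}^\ast)$ and adding and subtracting $\sum_t\mathbb{E}_{\varepsilon_t\sim\mathcal{P}_t}[\varphi(\mathbf{b}^\ast,\varepsilon_t)]$, the telescoped bound controls the first piece by $\log MN$, and the residual is $\sum_t\bigl(\mathbb{E}_{\varepsilon_t\sim\mathcal{P}_t}[\varphi(\mathbf{b}^\ast,\varepsilon_t)]-\Phi(\mathbf{b}^\ast)\bigr)$. Since $\mathbf{b}^\ast$ is deterministic and $\varphi(\mathbf{b}^\ast,\cdot)$ is bounded under Assumption~\ref{assumption2}, each summand is at most a constant multiple of $\Vert\mathcal{P}_t-\overline{\mathcal{P}}\Vert_{TV}$; grouping these terms partition-by-partition and invoking the periodic budget $TV(\mathcal{P},\overline{\mathcal{P}},Q)\leq\delta$ bounds the total by $2\delta T$, yielding $\mathcal{F}\leq\log MN+2\delta T$. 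The individual regret then follows mechanically: Proposition~\ref{generic_upper_bound} gives $\mathbb{E}[u_i^\ast-u_{i,t}]\leq\frac{\sqrt{2}\,\overline{v}}{\underline{p}}\sqrt{\Phi(\mathbf{b}_t)-\Phi(\mathbf{b}^\ast)}$, and applying Jensen together with Cauchy--Schwarz, $\sum_t\sqrt{\Phi(\mathbf{b}_t)-\Phi(\mathbf{b}^\ast)}\leq\sqrt{T}\sqrt{\mathcal{F}}$, produces $\mathcal{R}_i\leq\frac{\overline{v}}{\underline{p}}\sqrt{4\delta T+2\log MN}\sqrt{T}$.

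The main obstacle is the benchmark-correction step under the periodic model, not the telescoping. I must argue that comparing the fixed $\mathbf{b}^\ast$ under each conditional law $\mathcal{P}_t(\varepsilon_{1:t_q-1})$ to $\overline{\mathcal{P}}$ is legitimate despite the arbitrary within-partition correlation and the conditioning on pre-partition history built into the definition of $TV(\mathcal{P},\overline{\mathcal{P}},Q)$. The point I would stress is that the bound on $\varphi(\mathbf{b}_t,\varepsilon_t)-\varphi(\mathbf{b}^\ast,\varepsilon_t)$ is pathwise, so no independence is needed there, and the only genuinely distributional comparison is for a single fixed point, where the supremum over histories and the averaging over partitions in the definition of $\delta$ match the accumulated correction exactly. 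I expect this argument to parallel that of Theorem~\ref{nonstationary_independent} very closely---indeed the two bounds coincide---with the periodic partition structure handled by summing the correction over partitions before applying the $TV$ budget.
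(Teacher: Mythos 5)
There is a genuine gap, and it sits exactly at the step you flag as unproblematic. Your decomposition starts from $\mathcal{F}=\sum_t\mathbb{E}[\varphi(\mathbf{b}_t,\varepsilon_t)]-T\Phi(\mathbf{b}^\ast)$, which silently identifies $\mathbb{E}[\Phi(\mathbf{b}_t)]$ with $\mathbb{E}[\varphi(\mathbf{b}_t,\varepsilon_t)]$. The quantity the theorem bounds is $\sum_t\mathbb{E}[\Phi(\mathbf{b}_t)]-T\Phi(\mathbf{b}^\ast)$, where $\Phi$ is the expectation under the stationary law $\overline{\mathcal{P}}$, and in the periodic model these two objects differ: the update (\ref{OMDupdate}) at period $t$ uses $\varepsilon_t$ itself, and the noises within a partition are arbitrarily correlated, so $\mathbf{b}_t$ is correlated with $\varepsilon_t$ and the mismatch term $\sum_t\mathbb{E}[\Phi(\mathbf{b}_t)-\varphi(\mathbf{b}_t,\varepsilon_t)]$ is a genuine contribution that your two-term split never accounts for. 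Your assertion that ``the only genuinely distributional comparison is for a single fixed point'' is precisely the false claim: the comparison at the random iterates $\mathbf{b}_t$ is the hard part. Nor can you bound that mismatch period-by-period: conditional on $\mathcal{H}_{t-1}$ (which includes within-partition history), the law of $\varepsilon_t$ can be arbitrarily far from $\overline{\mathcal{P}}$; the budget $TV(\mathcal{P},\overline{\mathcal{P}},Q)\leq\delta$ only controls partition-aggregated conditional laws given the \emph{pre-partition} history $\varepsilon_{1:t_q-1}$.

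The paper's proof exists to handle exactly this term, via an anchor point your proposal lacks. It writes, for each partition $q$ with anchor $\mathbf{b}_{t_q}$ (measurable with respect to pre-partition history, hence independent of the within-partition noises),
\begin{align*}
\Phi(\mathbf{b}_t)-\Phi(\mathbf{b}^\ast)
&=\bigl[\Phi(\mathbf{b}_t)-\Phi(\mathbf{b}_{t_q})+\varphi(\mathbf{b}_{t_q},\varepsilon_t)-\varphi(\mathbf{b}_t,\varepsilon_t)\bigr]
+\bigl[\Phi(\mathbf{b}_{t_q})-\varphi(\mathbf{b}_{t_q},\varepsilon_t)+\varphi(\mathbf{b}^\ast,\varepsilon_t)-\Phi(\mathbf{b}^\ast)\bigr]\\
&\quad+\bigl[\varphi(\mathbf{b}_t,\varepsilon_t)-\varphi(\mathbf{b}^\ast,\varepsilon_t)\bigr],
\end{align*}
bounds the third bracket by $\log MN$ exactly as you do (Proposition~\ref{adverserial_regret}), and collapses the first two brackets via the exact identity of Lemma~\ref{lemma3.1} into the single inner product $\langle\nabla\Phi(\mathbf{b}_{t_q})-\nabla\varphi(\mathbf{b}_{t_q},\varepsilon_t),\mathbf{b}_t-\mathbf{b}^\ast\rangle$, with all Bregman terms cancelling. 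Because the gradient difference is evaluated at the anchor $\mathbf{b}_{t_q}$, independence across partitions makes the TV budget applicable there, and Cauchy--Schwarz with the $\ell_1$-diameter $\Vert\mathbf{b}_t-\mathbf{b}^\ast\Vert_1\leq 2$ of the simplex gives the per-period bound $2\delta$, hence $2\delta T$ overall. Note also that this is where the constant $2$ comes from; your route of bounding the fixed-comparator correction by ``a constant multiple of $\Vert\mathcal{P}_t-\overline{\mathcal{P}}\Vert_{TV}$'' through boundedness of $\varphi(\mathbf{b}^\ast,\cdot)$ would produce a constant depending on $\sup|\varphi|$, not $2$, so even the piece you do treat does not deliver the claimed bound. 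Your final step (Proposition~\ref{generic_upper_bound} plus Jensen and Cauchy--Schwarz for $\mathcal{R}_i$) is fine and matches the paper.
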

The above theorem shows that the fairness regret upper bound under periodic input is $\mathcal{O}(\delta T)$. Thus, if $\delta$ is below or equal to $\mathcal{O}(T^{-1})$, the fairness regret is at most $\mathcal{O}(1)$ and the individual buyer's regret is at most $\mathcal{O}(\sqrt{T})$. Suppose that the  partition is independently and identically distributed, that is, the class of all distributions are
$\mathcal{C}^{P}(\delta,Q)=:\left\lbrace \mathcal{P} \in \Delta(\mathcal{S}^q)^{T/Q}: \mathcal{P}_{t_1:t_2}=\mathcal{P}_{t_2+1:t_3}=\ldots=\mathcal{P}_{t_{Q}+1:t_{Q+1}} \right\rbrace$, and $T$ is separated into $Q$ equal length,  In this case, the total variation distance $TV(\mathcal{P},\overline{\mathcal{P}},Q )=0$, which implies that it recovers the results under the stationary input.

\textbf{Remark.} In comparison to previous works that have considered non-stationary input, such as  \cite{duchi2012ergodic} and \cite{balseiro2022best}, our approach differs in several aspects. While both of these works ensured bounded gradients and employed the mirror descent algorithm with a time-decreasing step-size, the distance between two consecutive decisions, denoted as $\Vert \mathbf{x}_{t+1}-\mathbf{x}_t \Vert$, was proposed to be bounded by $ \leq C \eta_t $. Here, $C$  represents the upper bound of the gradient and $\eta_t$ is the step-size. The results of regret all rely on this proposition. However, in the case of the SH convex program, the gradient is unbounded, and the online proportional response employs a fixed step-size. Therefore, our results of regret significantly deviate from the previous results.

\section{Numerical Experiments}
We illustrate the online proportional response (ALGORITHM~\ref{alg:one}) with numerical examples under various synthetic inputs. The results of our experiments verify the theoretical bounds obtained in Theorem~\ref{SW_regret_stationary}, Theorem~\ref{nonstationary_independent}, Theorem~\ref{nonstationary_Ergodic} and Theorem~\ref{nonstationary_Periodic}. 

\textbf{Stationary inputs}: To numerically assess the efficacy of the online proportional response under i.i.d. inputs, we examine five distinct market instances. For each instance, we conduct 50 random trials and present the average fairness regret. In each trial, the initial value  $\mathbf{v}_{i,0}$ of each buyer $i \in \mathcal{N}$ is drawn from a uniform distribution. Subsequently, each $v_{ij,0}$ undergoes normalization, i.e., $\sum_{j \in \mathcal{M}} v_{ij,0}=1, v_{ij,0}>0$. The noise $\varepsilon_{ij,t}$ is independently and identically generated from a normal distribution with zero mean and a standard deviation $\sigma$. Notably, the value of $\sigma$ diminishes with an increase in market size, taking on values of $0.05,0.04,0.03,0.02,0.01$, respectively. 
%\begin{figure}[!ht]
%		\centering
 % \includegraphics[width=8cm]{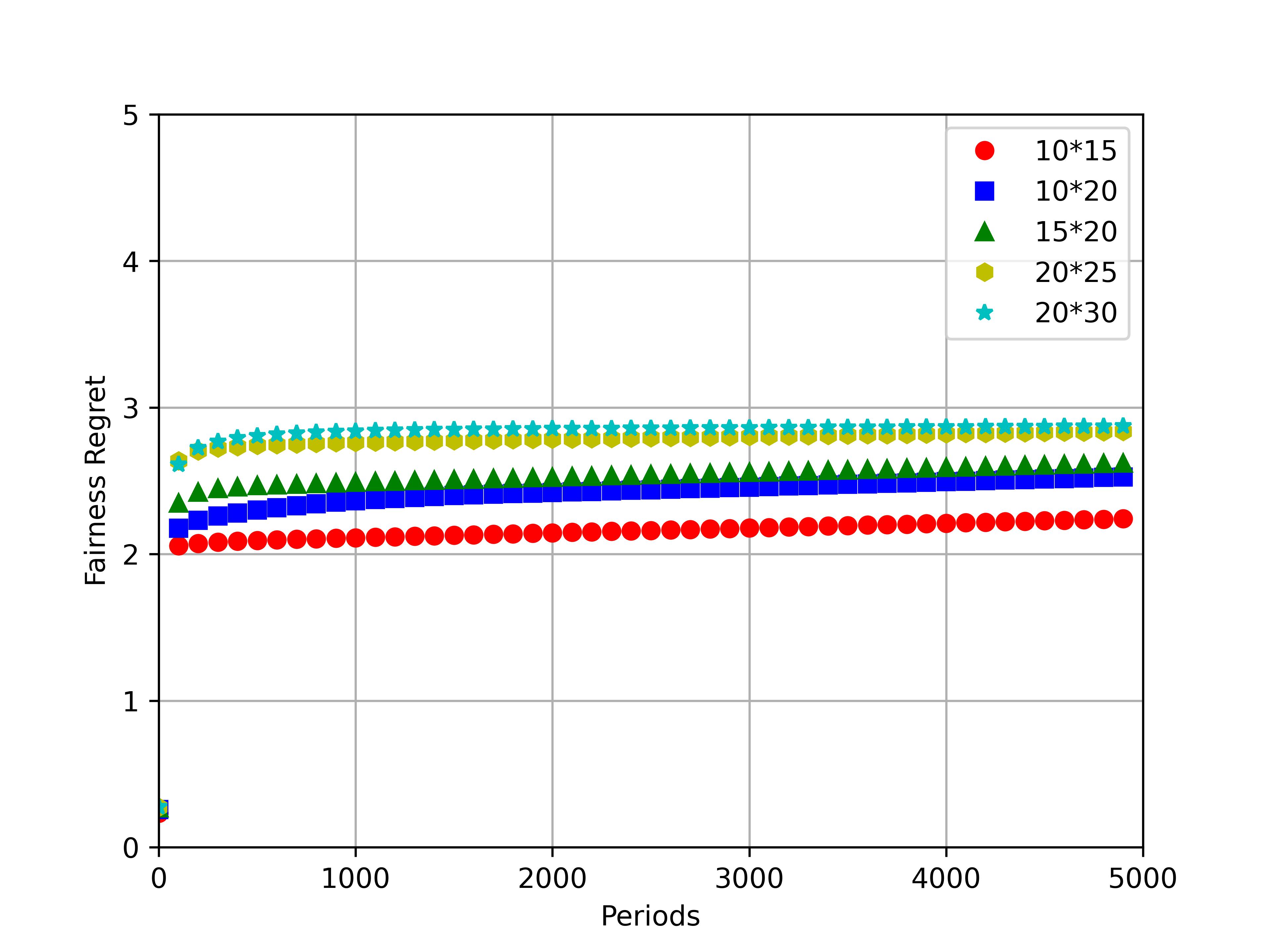}
  %\label{figure1}
 %	\caption{The fairness regret of the online proportional response under stationary inputs.}
%\end{figure}

\textbf{Non-stationary inputs:} To validate the performance of the online proportional response under non-stationary inputs, we generate noises based on the following input models: \begin{enumerate}
    \item Independent input with adversarial corruptions: for each period $t$, $\varepsilon_t \sim \mathcal{P}_t$, and $\mathcal{P}_t$ is a normal distribution with  mean $\mu_t$ and standard deviation $\sigma$, where $\mu_t$ is uniformly draw from the interval $[-0.01,0.01]$.
    \item Ergodic input: for each period $t$, the noise follows an autoregressive process  $\varepsilon_{t}=\alpha \varepsilon_{t-1}+\beta_t$ with $\alpha=0.6$ and $\beta_t$ is generated i.i.d. from a normal distribution with zero mean and standard deviation $\sigma=0.01$.
    \item Periodic input: The time horizon is equally divided into $50$ partitions,  each consisting of 100 periods. Let $\mathcal{P}_{t_q+1:t_{q+1}}$ be a set of distributions of partition $q$, where $q=1,\ldots,Q$, and each $\mathcal{P}_{t_q+1:t_{q+1}}$ is sampled randomly. The noises are generated from $\mathcal{P}_{t_q+1:t_{q+1}}$ and then subjected to a random permutation over the 
 $100$ sampled noises.
\end{enumerate}
Figure~\ref{fig:whole} illustrates the fairness regret obtained by the online proportional response under both stationary and non-stationary inputs. Each data point represents the average fairness regret across 50 random trials. In Figure~\ref{fig:sub1}, certain instances demonstrate a slight increase in fairness regret over time. This phenomenon is attributed to the myopic nature of the online proportional response, wherein each buyer's bids are susceptible to stochastic noises, resulting in bids that consistently hover around optimal values. Consequently, the expected objective value of the convex program $\Phi (\mathbf{b}_t)$ is consistently lower than the optimal value~$\Phi (\mathbf{b}^\ast)$. In Figure~\ref{fig:sub2}, we use the term ``mild'' to represent ``independent input with adversarial corruptions".  It is evident that both mild inputs and ergodic inputs exhibit a notable increase in fairness regret under i.i.d. inputs when the parameter $\delta$ (as constraining $\delta$ is computationally expensive). However, the fairness regret for periodic inputs remains nearly identical to that of stationary inputs, thereby validating Theorem~\ref{nonstationary_Periodic}.
\begin{figure}[h]
  \begin{subfigure}{0.5\textwidth}
    \centering
    \includegraphics[width=\linewidth]{stationary.jpg}
    \caption{stationary inputs}
    \label{fig:sub1}
  \end{subfigure}%
  \begin{subfigure}{0.5\textwidth}
    \centering
    \includegraphics[width=\linewidth]{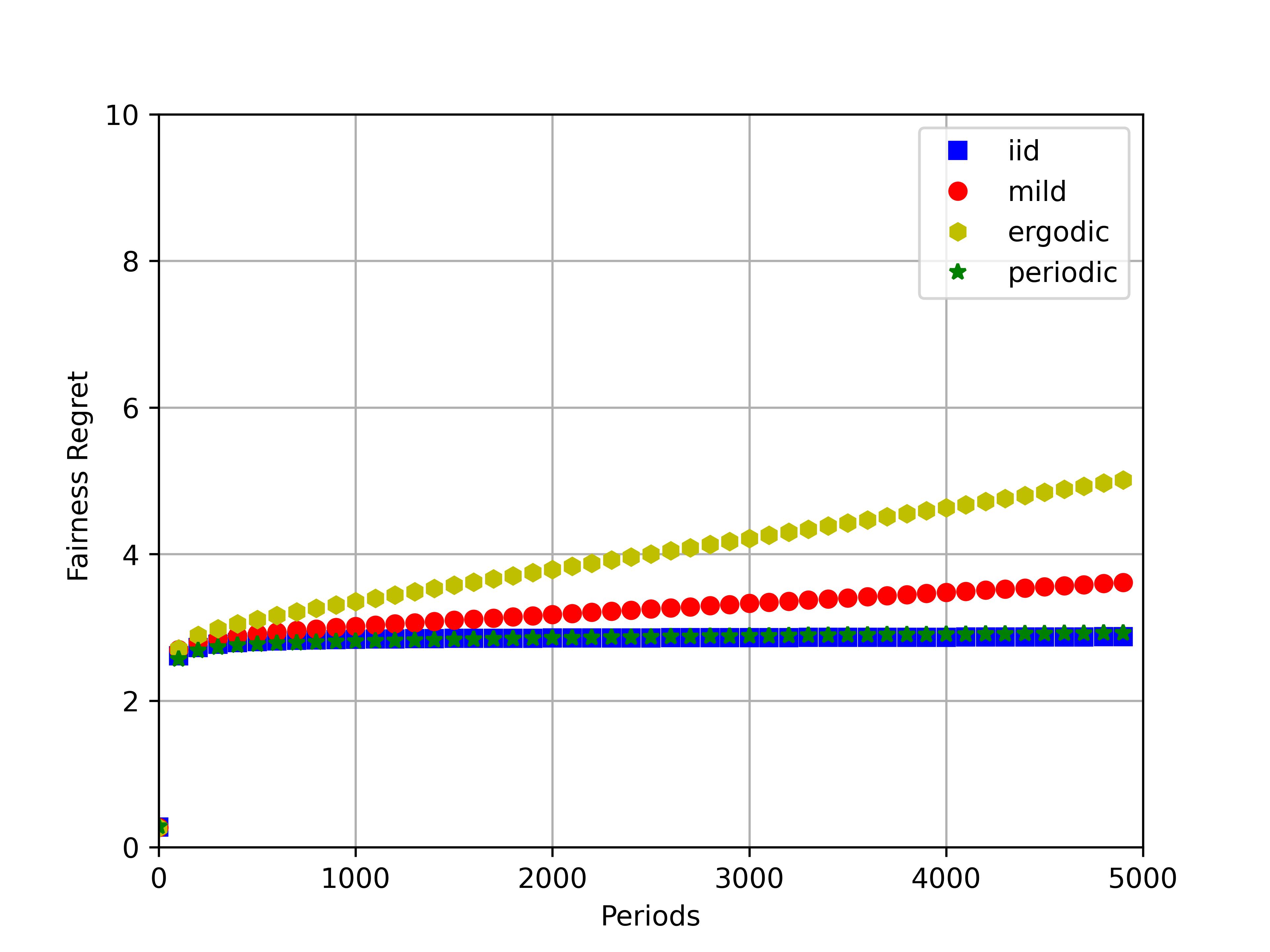}
    \caption{non-stationary inputs ($N=20, M=30$)}
    \label{fig:sub2}
  \end{subfigure}
  \caption{The fairness regret of the online proportional response (ALGORITHM~\ref{alg:one})  under various inputs.}
  \label{fig:whole}
\end{figure}

We also need to validate the individual buyer's regret in our theorems. Given that the expected maximum utility $\mathbb{E}[u_i^\ast]$ varies across trials, we compute the relative individual buyer's regret using the formula: $\mathcal{R}_i/(T\mathbb{E}[u_i^\ast])$. Theorem~\ref{SW_regret_stationary} establishes that the individual buyer's regret 
 $\mathcal{R}_i$ is bounded by $\mathcal{O}(\sqrt{T})$, implying that the logarithm of the relative individual buyer's regret should exhibit a linear relationship with $\log T$ and a slope of approximately~$-0.5$.  We perform a similar analysis for non-stationary inputs, fitting the log of the relative individual buyer's regret against $\log T$. Specifically, we focus on the results for buyer 1 and present them in  Figure~\ref{fig:individual}.  Notably, both fitted linear functions display a slope around -0.5. This consistency arises from  Proposition~\ref{generic_upper_bound}, which establishes that the individual buyer's regret is contingent on the fairness regret. Given that the disparities in fairness regret under various inputs are relatively small, the individual buyer's regret exhibits proximity under different inputs.
\begin{figure}[h]
		\centering
  \includegraphics[width=8cm]{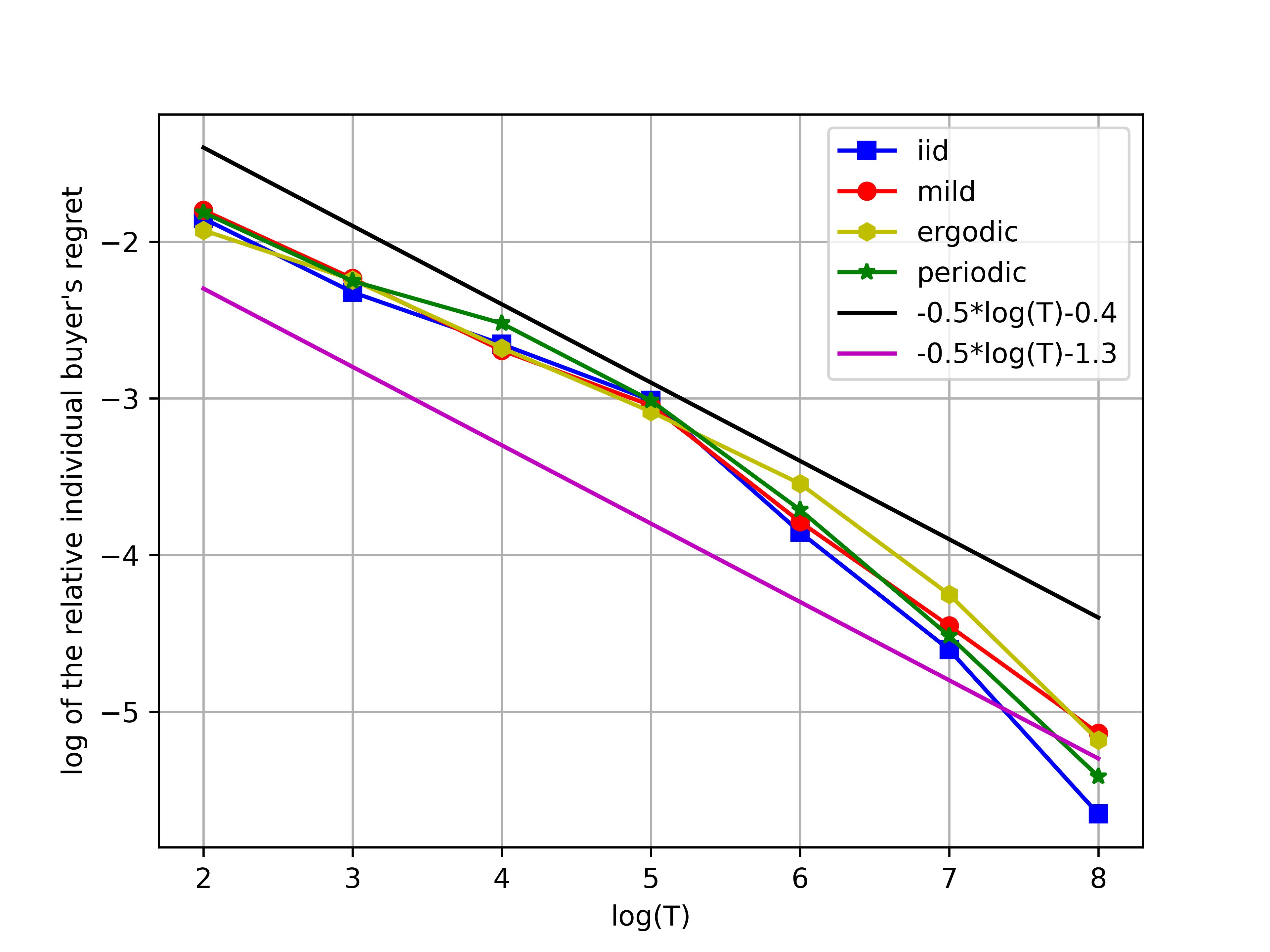}
 	\caption{The log-log plots of individual buyer's regret of the online proportional response under various inputs ($N=20$, $M=30$).}
  \label{fig:individual}
\end{figure}

Furthermore, we conducted an experiment to emphasize the significance of a fixed constant step-size ($\eta_{i,t}=1$ for all $i\in \mathcal{N}$). 
We compared it against two commonly used time-varying step-sizes in the online learning literature: for all $i\in \mathcal{N}$,  $\eta_{i,t}=\frac{1}{t}$ and $\eta_{i,t}=\frac{1}{\sqrt{t}}$. 
The closed-form expression for bid updates with time-varying step-size is:
\begin{equation} \label{stepsize_update}
    b_{ij,t}=B_{i}\frac{b_{ij,t-1}(v_{ij,t}/p_{j,t-1})^{\eta_{i,t}}}{\sum_{j \in \mathcal{M}} b_{ij,t-1}(v_{ij,t}/p_{j,t-1})^{\eta_{i,t}}}, i \in \mathcal{N}, j \in \mathcal{M}.
\end{equation}
As illustrated in Figure~\ref{fig:step}, the fairness regret exhibits a linear increase, indicating that the time-varying step-size hinders the convergence of market prices and allocations to the market equilibrium.
\begin{figure}[h]
		\centering
  \includegraphics[width=8cm]{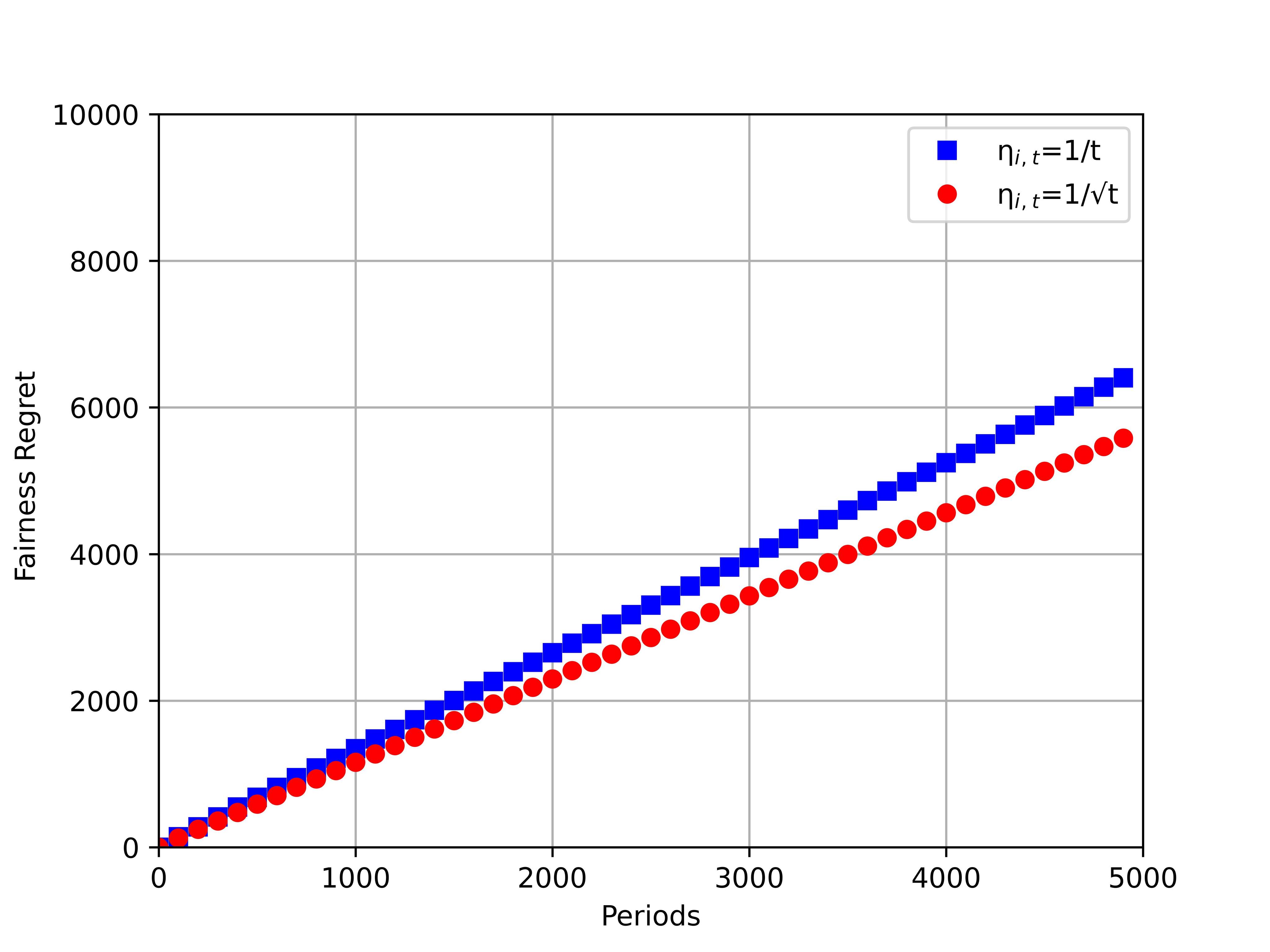}
 	\caption{The performance of bid updates with  time-varying step-size }
  \label{fig:step}
\end{figure}

\section{Conclusions and Future Works}
In this paper, we conducted a comprehensive study on online Fisher markets, characterized by continuous item supply and buyers with varying valuations. Traditional approaches to market equilibrium computation necessitate full information about buyers' valuations and budgets, posing limitations. To overcome this constraint, we introduced the online proportional response, a bidding strategy enabling buyers to adapt bids based on their observed values. Notably, this strategy facilitates continuous market clearance without buyers having to disclose private information, offering advantages for advertisers on platforms like Google and Amazon, which frequently host ad auctions.

We reiterated connections between the online proportional response and online mirror descent for online Fisher markets, analyzing the former's performance under diverse input models describing the environmental state. In the context of a stationary environment, we determined that fairness regret is upper-bounded by a constant, dependent on market size but independent of running periods. Additionally, we showed the last-iterate convergence rate of prices and the asymptotic convergence of bids to a limit point. In the  non-stationary environments, fairness regret is associated with a constant measuring the deviation from a stationary to a non-stationary environment.

Future research avenues include exploring utility functions beyond linear ones, such as Constant Elasticity of Substitution utility functions. Additionally, the setting of divisible items studied in this work may not hold in real-world scenarios where items are indivisible. An engaging avenue for research involves investigating bidding strategies for buyers under such circumstances. Furthermore, given the resemblance of the Fisher market to the Arrow-Debreu model, where buyers exchange items rather than making purchases directly from the market, it is worthwhile to examine the feasibility and relevance of applying online proportional response in the Arrow-Debreu model.

\bibliographystyle{apalike}  
\bibliography{ofm}

\begin{thebibliography}{}

\bibitem[Adsul et~al., 2010]{adsul2010nash}
Adsul, B., Babu, C.~S., Garg, J., Mehta, R., and Sohoni, M. (2010).
\newblock Nash equilibria in fisher market.
\newblock In {\em Algorithmic Game Theory: Third International Symposium, SAGT
  2010, Athens, Greece, October 18-20, 2010. Proceedings 3}, pages 30--41.
  Springer.

\bibitem[Arora et~al., 2012]{arora2012multiplicative}
Arora, S., Hazan, E., and Kale, S. (2012).
\newblock The multiplicative weights update method: a meta-algorithm and
  applications.
\newblock {\em Theory of computing}, 8(1):121--164.

\bibitem[Arrow and Debreu, 1954]{arrow1954existence}
Arrow, K.~J. and Debreu, G. (1954).
\newblock Existence of an equilibrium for a competitive economy.
\newblock {\em Econometrica: Journal of the Econometric Society}, pages
  265--290.

\bibitem[Azar et~al., 2016]{azar2016allocate}
Azar, Y., Buchbinder, N., and Jain, K. (2016).
\newblock How to allocate goods in an online market?
\newblock {\em Algorithmica}, 74(2):589--601.

\bibitem[Balseiro and Gur, 2019]{balseiro2019learning}
Balseiro, S.~R. and Gur, Y. (2019).
\newblock Learning in repeated auctions with budgets: Regret minimization and
  equilibrium.
\newblock {\em Management Science}, 65(9):3952--3968.

\bibitem[Balseiro et~al., 2022]{balseiro2022best}
Balseiro, S.~R., Lu, H., and Mirrokni, V. (2022).
\newblock The best of many worlds: Dual mirror descent for online allocation
  problems.
\newblock {\em Operations Research}.

\bibitem[Banerjee et~al., 2022]{banerjee2022online}
Banerjee, S., Gkatzelis, V., Gorokh, A., and Jin, B. (2022).
\newblock Online nash social welfare maximization with predictions.
\newblock In {\em Proceedings of the 2022 Annual ACM-SIAM Symposium on Discrete
  Algorithms (SODA)}, pages 1--19. SIAM.

\bibitem[Bateni et~al., 2022]{bateni2022fair}
Bateni, M., Chen, Y., Ciocan, D.~F., and Mirrokni, V. (2022).
\newblock Fair resource allocation in a volatile marketplace.
\newblock {\em Operations Research}, 70(1):288--308.

\bibitem[Bauschke et~al., 2017]{bauschke2017descent}
Bauschke, H.~H., Bolte, J., and Teboulle, M. (2017).
\newblock A descent lemma beyond lipschitz gradient continuity: first-order
  methods revisited and applications.
\newblock {\em Mathematics of Operations Research}, 42(2):330--348.

\bibitem[Beck and Teboulle, 2003]{beck2003mirror}
Beck, A. and Teboulle, M. (2003).
\newblock Mirror descent and nonlinear projected subgradient methods for convex
  optimization.
\newblock {\em Operations Research Letters}, 31(3):167--175.

\bibitem[Bertsimas et~al., 2011]{bertsimas2011price}
Bertsimas, D., Farias, V.~F., and Trichakis, N. (2011).
\newblock The price of fairness.
\newblock {\em Operations research}, 59(1):17--31.

\bibitem[Birnbaum et~al., 2011]{birnbaum2011distributed}
Birnbaum, B., Devanur, N.~R., and Xiao, L. (2011).
\newblock Distributed algorithms via gradient descent for fisher markets.
\newblock In {\em Proceedings of the 12th ACM Conference on Electronic
  commerce}, pages 127--136.

\bibitem[Borodin and El-Yaniv, 1998]{borodin1998online}
Borodin, A. and El-Yaniv, R. (1998).
\newblock {\em Online computation and competitive analysis}.
\newblock Cambridge University Press, USA.

\bibitem[Br{\^a}nzei et~al., 2014]{branzei2014fisher}
Br{\^a}nzei, S., Chen, Y., Deng, X., Filos-Ratsikas, A., Frederiksen, S., and
  Zhang, J. (2014).
\newblock The fisher market game: equilibrium and welfare.
\newblock In {\em Proceedings of the AAAI Conference on Artificial
  Intelligence}, volume~28.

\bibitem[Br{\^a}nzei et~al., 2017]{branzei2017nash}
Br{\^a}nzei, S., Gkatzelis, V., and Mehta, R. (2017).
\newblock Nash social welfare approximation for strategic agents.
\newblock In {\em Proceedings of the 2017 ACM Conference on Economics and
  Computation}, pages 611--628.

\bibitem[Calmon et~al., 2021]{calmon2021revenue}
Calmon, A.~P., Ciocan, F.~D., and Romero, G. (2021).
\newblock Revenue management with repeated customer interactions.
\newblock {\em Management Science}, 67(5):2944--2963.

\bibitem[Chen and Teboulle, 1993]{chen1993convergence}
Chen, G. and Teboulle, M. (1993).
\newblock Convergence analysis of a proximal-like minimization algorithm using
  bregman functions.
\newblock {\em SIAM Journal on Optimization}, 3(3):538--543.

\bibitem[Chen et~al., 2019]{chen2019robust}
Chen, X., Krishnamurthy, A., and Wang, Y. (2019).
\newblock Robust dynamic assortment optimization in the presence of outlier
  customers.
\newblock {\em arXiv preprint arXiv:1910.04183}.

\bibitem[Cole et~al., 2017]{cole2017convex}
Cole, R., Devanur, N., Gkatzelis, V., Jain, K., Mai, T., Vazirani, V.~V., and
  Yazdanbod, S. (2017).
\newblock Convex program duality, fisher markets, and nash social welfare.
\newblock In {\em Proceedings of the 2017 ACM Conference on Economics and
  Computation}, pages 459--460.

\bibitem[Duchi et~al., 2012]{duchi2012ergodic}
Duchi, J.~C., Agarwal, A., Johansson, M., and Jordan, M.~I. (2012).
\newblock Ergodic mirror descent.
\newblock {\em SIAM Journal on Optimization}, 22(4):1549--1578.

\bibitem[Eisenberg and Gale, 1959]{eisenberg1959consensus}
Eisenberg, E. and Gale, D. (1959).
\newblock Consensus of subjective probabilities: The pari-mutuel method.
\newblock {\em The Annals of Mathematical Statistics}, 30(1):165--168.

\bibitem[Gao et~al., 2021]{gao2021online}
Gao, Y., Peysakhovich, A., and Kroer, C. (2021).
\newblock Online market equilibrium with application to fair division.
\newblock {\em Advances in Neural Information Processing Systems},
  34:27305--27318.

\bibitem[Hazan et~al., 2016]{hazan2016introduction}
Hazan, E. et~al. (2016).
\newblock Introduction to online convex optimization.
\newblock {\em Foundations and Trends{\textregistered} in Optimization},
  2(3-4):157--325.

\bibitem[Jain, 2007]{jain2007polynomial}
Jain, K. (2007).
\newblock A polynomial time algorithm for computing an arrow--debreu market
  equilibrium for linear utilities.
\newblock {\em SIAM Journal on Computing}, 37(1):303--318.

\bibitem[Jalota and Ye, 2022]{jalota2022online}
Jalota, D. and Ye, Y. (2022).
\newblock Online learning in fisher markets with unknown agent preferences.
\newblock {\em arXiv preprint arXiv:2205.00825}.

\bibitem[Kelly et~al., 1998]{kelly1998rate}
Kelly, F.~P., Maulloo, A.~K., and Tan, D. K.~H. (1998).
\newblock Rate control for communication networks: shadow prices, proportional
  fairness and stability.
\newblock {\em Journal of the Operational Research Society}, 49:237--252.

\bibitem[Kolumbus et~al., 2023]{kolumbus2023asynchronous}
Kolumbus, Y., Levy, M., and Nisan, N. (2023).
\newblock Asynchronous proportional response dynamics: Convergence in markets
  with adversarial scheduling.
\newblock In {\em Thirty-seventh Conference on Neural Information Processing
  Systems}.

\bibitem[Li and Ye, 2022]{li2022online}
Li, X. and Ye, Y. (2022).
\newblock Online linear programming: Dual convergence, new algorithms, and
  regret bounds.
\newblock {\em Operations Research}, 70(5):2948--2966.

\bibitem[Liao et~al., 2022]{liao2022nonstationary}
Liao, L., Gao, Y., and Kroer, C. (2022).
\newblock Nonstationary dual averaging and online fair allocation.
\newblock {\em arXiv preprint arXiv:2202.11614}.

\bibitem[Lu et~al., 2018]{lu2018relatively}
Lu, H., Freund, R.~M., and Nesterov, Y. (2018).
\newblock Relatively smooth convex optimization by first-order methods, and
  applications.
\newblock {\em SIAM Journal on Optimization}, 28(1):333--354.

\bibitem[Lykouris et~al., 2018]{lykouris2018stochastic}
Lykouris, T., Mirrokni, V., and Paes~Leme, R. (2018).
\newblock Stochastic bandits robust to adversarial corruptions.
\newblock In {\em Proceedings of the 50th Annual ACM SIGACT Symposium on Theory
  of Computing}, pages 114--122.

\bibitem[Manshadi et~al., 2021]{manshadi2021fair}
Manshadi, V., Niazadeh, R., and Rodilitz, S. (2021).
\newblock Fair dynamic rationing.
\newblock In {\em Proceedings of the 22nd ACM Conference on Economics and
  Computation}, pages 694--695.

\bibitem[Mehta et~al., 2007]{mehta2007adwords}
Mehta, A., Saberi, A., Vazirani, U., and Vazirani, V. (2007).
\newblock Adwords and generalized online matching.
\newblock {\em Journal of the ACM (JACM)}, 54(5):22--es.

\bibitem[Mertikopoulos and Zhou, 2019]{mertikopoulos2019learning}
Mertikopoulos, P. and Zhou, Z. (2019).
\newblock Learning in games with continuous action sets and unknown payoff
  functions.
\newblock {\em Mathematical Programming}, 173(1):465--507.

\bibitem[Nedic and Lee, 2014]{nedic2014stochastic}
Nedic, A. and Lee, S. (2014).
\newblock On stochastic subgradient mirror-descent algorithm with weighted
  averaging.
\newblock {\em SIAM Journal on Optimization}, 24(1):84--107.

\bibitem[Shapley and Shubik, 1977]{shapley1977trade}
Shapley, L. and Shubik, M. (1977).
\newblock Trade using one commodity as a means of payment.
\newblock {\em Journal of political economy}, 85(5):937--968.

\bibitem[Shmyrev, 2009]{shmyrev2009algorithm}
Shmyrev, V.~I. (2009).
\newblock An algorithm for finding equilibrium in the linear exchange model
  with fixed budgets.
\newblock {\em Journal of Applied and Industrial Mathematics}, 3(4):505--518.

\bibitem[Sinclair et~al., 2022]{sinclair2022sequential}
Sinclair, S.~R., Jain, G., Banerjee, S., and Yu, C.~L. (2022).
\newblock Sequential fair allocation: Achieving the optimal envy-efficiency
  trade-off curve.
\newblock {\em Operations Research}.

\bibitem[Varian, 1974]{varian1974equity}
Varian, H.~R. (1974).
\newblock Equity, envy, and efficiency.
\newblock {\em Journal of Economic Theory}, 9(1):63--91.

\bibitem[Walras, 1900]{walras1900elements}
Walras, L. (1900).
\newblock {\em {\'E}l{\'e}ments d'{\'e}conomie politique pure: ou, Th{\'e}orie
  de la richesse sociale}.
\newblock F. Rouge.

\bibitem[Zhang, 2011]{zhang2011proportional}
Zhang, L. (2011).
\newblock Proportional response dynamics in the fisher market.
\newblock {\em Theoretical Computer Science}, 412(24):2691--2698.

\bibitem[Zhou et~al., 2021]{zhou2021robust}
Zhou, Z., Mertikopoulos, P., Moustakas, A.~L., Bambos, N., and Glynn, P.
  (2021).
\newblock Robust power management via learning and game design.
\newblock {\em Operations Research}, 69(1):331--345.

\end{thebibliography}

\appendix
\section{Preliminaries}
\subsection{Characterizations of the EG Convex Program}
\label{AppendixA.1}
We briefly introduce some basic characterizations of the EG Convex Program. %and wish these characterizations could help the readers.  
The Lagrangian function of the EG convex program (\ref{EG}) is as follows:
$$\mathcal{L}_{EG}=\sum_{i \in \mathcal{N}} B_i\log u_i +\sum_{j \in \mathcal{M}} p_j(1-\sum_{i \in \mathcal{N}}x_{ij}) +\sum_{i \in \mathcal{N}}\sum_{j \in \mathcal{M}}\lambda_{ij}x_{ij},$$
where $p_j$ and $\lambda_{ij}$ are the multipliers for the corresponding constraints. Taking derivatives with respect to $x_{ij}$, it leads to 
$$\frac{\partial \mathcal{L}_{EG}}{\partial x_{ij}}=\frac{B_iv_{ij}}{\sum_{j \in \mathcal{M}} v_{ij}x_{ij}}-p_j+\lambda_{ij}=0, \forall i,j.$$
Manipulating the above equation gives that $$\frac{B_iv_{ij}}{p_j-\lambda_{ij}}=\sum_{j \in \mathcal{M}}v_{ij}x_{ij}=u_i, \forall i,j.$$ 
By the complementary slackness condition, it holds that $p_j(1-\sum_{i \in \mathcal{N}}x_{ij})=0$ and $\lambda_{ij}x_{ij}=0$, then we can derive:\\
\begin{itemize}
    \item If $x_{ij}>0$ and $\lambda_{ij}= 0$, then $u_i=\frac{B_iv_{ij}}{p_j}$.
    \item If $x_{ij}=0$ and $\lambda_{ij}\geq 0$, then $u_i=\frac{B_iv_{ij}}{p_j-\lambda_{ij}} \geq \frac{B_iv_{ij}}{p_j}$.
\end{itemize}
These two relations implies that 
$$\frac{v_{ij}}{p_j} \leq \frac{u_i}{B_i},$$
and the equality holds when $x_{ij}>0$. The Lagrangian function of each buyer's utility maximization problem (\ref{single}) is:
$$\mathcal{L}_i=\sum_{j \in \mathcal{M}}v_{ij}x_{ij}+\beta_i(B_i-\sum_{j \in \mathcal{M}}p_jx_{ij})+\sum_{j \in \mathcal{M}} \lambda_{ij}x_{ij}.$$
We also take derivatives with respect to $x_{ij}$ and obtain that
$$\frac{\partial \mathcal{L}_i}{\partial x_{ij}}=v_{ij}-\beta_ip_j+\lambda_{ij}=0, \forall i,j,$$
where $\beta_i$ is the  multiplier of the budget constraint $\sum_{j \in \mathcal{M}} p_jx_{ij} \leq B_i$, and $\beta_i=\frac{v_{ij}+\lambda_{ij}}{p_j}$. Similarly, by
the complementary slackness condition, it follows:
\begin{itemize}
    \item If $x_{ij}>0$ and $\lambda_{ij}= 0$, then $\beta_i=\frac{v_{ij}}{p_j}$.
    \item If $x_{ij}=0$ and $\lambda_{ij}\geq 0$, then $\beta_i=\frac{v_{ij}+\lambda_{ij}}{p_j}$.
\end{itemize}
The above two relations imply that each buyer $i$ is willing to buy the items $j$ with the utility price $\frac{v_{ij}}{p_j}$. Moreover, each buyer $i$ prefers to buy the items with maximum utility price among all items, that is, buy item $j:= \arg \max \left\lbrace \frac{v_{ij}}{p_j}: \forall j \right\rbrace$. Therefore, the utility of each buyer $i$ is equal to $\beta_iB_i$, where $\beta_i = \max_j \left\lbrace \frac{v_{ij}}{p_j}: \forall j \right\rbrace$.
\subsection{Derivation of the Shmyrev Convex Program}
The dual program of the EG convex program (\ref{EG}): 
 \begin{align}\label{dualEG}
     \min &\sum_{j \in \mathcal{M}} p_j+ \sum_{i \in \mathcal{N}}B_i\log \beta_i ,\\\notag
     \mbox{subject to \,} & p_j\beta_i \geq v_{ij}, \forall i,\forall j,
 \end{align}
 In here, $x_{ij}$ is the dual variable of the constraints $p_j\beta_i \geq v_{ij}, \forall i,j$.  Taking logarithm on both sides of the constraints $p_j \beta_i\geq v_{ij}, \forall i,\forall j$, it leads to 
 $$\log p_j +\log {\beta_i}\geq \log v_{ij}, \forall i,j.$$
 Let $q_j=\log p_j$ and $\delta_i=\log \beta_i$ , hence the dual program of EG becomes:
  \begin{align}
     \min &\sum_{j \in \mathcal{M}} e^{q_j}+ \sum_{i \in \mathcal{N}}B_i\delta_i, \\\notag
    \mbox{subject to} & \; \delta_i+q_j \geq \log v_{ij}, \forall i,\forall j.
 \end{align}
 Now we write down the dual program of (\ref{dualEG}), which is called as Shymrev program. 
 We transform the dual of this program in two steps: first, we write down the conjugate duality of $e^{q_j}$ which is $p_j\log {p_j}-p_j$.
 Let $f(x)=ce^{x}$, then it has $\nabla f(x)=ce^{x}=y$, therefore $x=\log \frac{y}{c}$, the conjugate of $f(x)$ is $f^\ast (y)=xy-f(x)=y\log \frac{y}{c}-ce^{x}=y\log \frac{y}{c}-y$, it also has $\nabla f^\ast (y) = \log y+1-\log c-1=\log \frac{y}{c}=x$. The rest could be transformed by the techniques of linear duality. 
 The Shmyrev program is:
  \begin{align}\label{shmyrev}
     \max &\sum_{i \in \mathcal{N}}\sum_{j \in \mathcal{M}} b_{ij} \log v_{ij}- \sum_{j \in \mathcal{M}}p_j\log p_j \\\notag
     \mbox{subject to} & \sum_{i \in \mathcal{N}}b_{ij}=p_j, \forall j \in \mathcal{M},\\\notag
     & \sum_{j \in \mathcal{M}}b_{ij}=B_i, \forall i \in \mathcal{N}. \notag
 \end{align}
We use 
$\psi(\mathbf{b})=-\sum_{i \in \mathcal{N}} B_i\log u_i= -\sum_{i \in \mathcal{N}} B_i\log \left(\sum_{j \in \mathcal{M}}\frac{{v_{ij}b_{ij}}}{p_j}\right)$ 
such the objective of (\ref{EG}). By Lemma 19 in \cite{birnbaum2011distributed}, it shows that $$\psi(\mathbf{b}) \leq \varphi(\mathbf{b})-B_i\log B_i \quad \mbox{and} \quad \psi(\mathbf{b}^{\ast}) \leq \varphi(\mathbf{b}^{\ast})-B_i\log B_i,$$
such that 
\begin{equation} \label{EGlessSH}
    \psi(\mathbf{b})-\psi(\mathbf{b}^\ast) \leq \varphi(\mathbf{b})-\varphi(\mathbf{b}^\ast).
\end{equation}
\section{Proofs in Section 3}
\subsection{Proof of Lemma \ref{lemma3.1}}
\begin{proof}
Recall that  $D_h(\mathbf{p}, \mathbf{q})=\sum_{j \in \mathcal{M}}p_{j}\log(\frac{p_j}{q_j})$, then we have
    \begin{align*}
    &\varphi(\mathbf{b}_t,\varepsilon_t)-\ell_\varphi(\mathbf{b}_t,\mathbf{b}_{t-1},\varepsilon_t)\\=& \varphi(\mathbf{b}_t,\varepsilon_t)- \varphi(\mathbf{b}_{t-1},\varepsilon_t)-\langle\nabla \varphi(\mathbf{b}_{t-1},\varepsilon_t),\mathbf{b}_t-\mathbf{b}_{t-1}\rangle\\
    =&-\sum_{i \in \mathcal{N}}\sum_{j \in \mathcal{M}}b_{ij,t}\log\left(\frac{v_{ij,t}}{p_{j,t}}\right)+\sum_{i \in \mathcal{N}}\sum_{j \in \mathcal{M}}b_{ij,t-1}\log\left(\frac{v_{ij,t}}{p_{j,t-1}}\right)-\sum_{i \in \mathcal{N}}\sum_{j \in \mathcal{M}}\left(1-\log\left(\frac{v_{ij,t}}{p_{j,t-1}}\right)\right)(b_{ij,t}-b_{ij,t-1})\\
    =&-\sum_{i \in \mathcal{N}}\sum_{j \in \mathcal{M}}b_{ij,t}\log\left(\frac{v_{ij,t}}{p_{j,t}}\right)+\sum_{i \in \mathcal{N}}\sum_{j \in \mathcal{M}}b_{ij,t-1}\log\left(\frac{v_{ij,t}}{p_{j,t-1}}\right)+\sum_{i \in \mathcal{N}}\sum_{j \in \mathcal{M}}b_{ij,t}\log\left(\frac{v_{ij,t}}{p_{j,t-1}}\right)-\sum_{i \in \mathcal{N}}\sum_{j \in \mathcal{M}}b_{ij,t-1}\log\left(\frac{v_{ij,t}}{p_{j,t-1}}\right)\\
    &\;\;\;-\sum_{i \in \mathcal{N}}\sum_{j \in \mathcal{M}}b_{ij,t}+\sum_{i \in \mathcal{N}}\sum_{j \in \mathcal{M}}b_{ij,t-1}\\
    =&\sum_{i \in \mathcal{N}}\sum_{j \in \mathcal{M}}b_{ij,t}\log\left(\frac{p_{j,t}}{p_{j,t-1}}\right)=D_h(\mathbf{p}_t, \mathbf{p}_{t-1}).
\end{align*}
The first equation uses the definition of $\ell_\varphi(\mathbf{b}_t,\mathbf{b}_{t-1},\varepsilon_t)$. The fourth equation uses $\sum_{i \in \mathcal{N}}\sum_{j \in \mathcal{M}}b_{ij,t}=\sum_{i \in \mathcal{N}}\sum_{j \in \mathcal{M}}b_{ij,t-1}$. 
\end{proof}

%\subsection{Proof of Proposition \ref{pp3.1}}
%\begin{proof}
%Using Lemma \ref{pythagorean}, it has that
%$$\langle \nabla \varphi(\mathbf{b}_{t-1},\varepsilon_t), \mathbf{b}_t-\mathbf{b}_{t-1}\rangle+D_h(\mathbf{b}_t,\mathbf{b}_{t-1})\leq \langle \nabla \varphi(\mathbf{b}_{t-1},\varepsilon_t), \mathbf{b}^\ast-\mathbf{b}_{t-1}\rangle+D_h(\mathbf{b}^\ast,\mathbf{b}_{t-1})-D_h(\mathbf{b}^\ast,\mathbf{b}_{t}). $$
%Adding $\varphi(\mathbf{b}_{t-1},\varepsilon_t)$ on both sides, we obtain that 
%$$\ell_{\varphi}(\mathbf{b}_t,\mathbf{b}_{t-1},\varepsilon_t)+D_h(\mathbf{b}_t, \mathbf{b}_{t-1})  \leq \ell_{\varphi}(\mathbf{b}^\ast,\mathbf{b}_{t-1},\varepsilon_t)+D_h(\mathbf{b}^\ast,\mathbf{b}_{t-1})-D_h(\mathbf{b}^\ast, \mathbf{b}_{t}).$$
%By the convexity of $\varphi(\cdot)$, we have that $\varphi(\mathbf{b}^\ast,\varepsilon_t) \geq \varphi(\mathbf{b}_{t-1},\varepsilon_t)+\langle  \nabla \varphi(\mathbf{b}_{t-1},\varepsilon_t),\mathbf{b}^\ast - \mathbf{b}_{t-1}, \rangle = \ell_{\varphi}(\mathbf{b}^\ast,\mathbf{b}_{t-1},\varepsilon_t)$. Using the relative smoothness condition, we have that 
%\begin{align} \label{obj_diff_proof}
 %   \varphi(\mathbf{b}_t,\varepsilon_t) &\leq \ell_{\varphi}(\mathbf{b}_t,\mathbf{b}_{t-1},\varepsilon_t)+D_h(\mathbf{b}_t, \mathbf{b}_{t-1}) \notag \\
  %  &\leq \ell_{\varphi}(\mathbf{b}^\ast,\mathbf{b}_{t-1},\varepsilon_t)+D_h(\mathbf{b}^\ast,\mathbf{b}_{t-1})-D_h(\mathbf{b}^\ast, \mathbf{b}_{t}) \notag \\
   % &\leq \varphi(\mathbf{b}^\ast,\varepsilon_t)+D_h(\mathbf{b}^\ast,\mathbf{b}_{t-1})-D_h(\mathbf{b}^\ast, \mathbf{b}_{t}).
%\end{align}
%\end{proof}

\subsection{Proof of Lemma \ref{lemma3.2}}
\begin{proof}
The update rule of $\mathbf{b}_{i,t}$ is
%{$$\mathbf{b}_{i,t}=\argmin\limits_{\sum_{j \in \mathcal{M}} b_{ij}=B_i,b_{ij,t}\geq 0} \left\lbrace \eta_{i,t}\langle  \nabla \varphi(\mathbf{b}_{t-1},\varepsilon_t) ,\mathbf{b}_{i}\rangle+\sum_{j \in \mathcal{M}} b_{ij}\log \left(\frac{b_{ij,t}}{b_{ij,t-1}}\right)\right\rbrace.$$}
$$\mathbf{b}_{i,t}=\argmin\limits_{\sum_{j \in \mathcal{M}} b_{ij}=B_i,b_{ij,t}\geq 0} \left\lbrace \eta_{i,t}\langle  \nabla \varphi(\mathbf{b}_{i,t-1},\varepsilon_t) ,\mathbf{b}_{i}-\mathbf{b}_{i,t-1}\rangle+\sum_{j \in \mathcal{M}} b_{ij}\log \left(\frac{b_{ij,t}}{b_{ij,t-1}}\right)\right\rbrace.$$

 The Lagrangian function is
%$$\mathcal{L}_i=\eta_{i,t} \langle \nabla \varphi(b_{t-1},\varepsilon_t), \mathbf{b}_{i} \rangle+\sum_{j \in \mathcal{M}}b_{ij,t}\log \left(\frac{b_{ij,t}}{b_{ij,t-1}}\right)+\lambda_{i,t}(\sum_{j \in \mathcal{M}} b_{ij,t}-B_i)+\mu_{ij,t}b_{ij,t}.$$
$$\mathcal{L}_{i,t}=\eta_{i,t} \langle \nabla \varphi(b_{i,t-1},\varepsilon_t), \mathbf{b}_{i}-\mathbf{b}_{i,t-1} \rangle+\sum_{j \in \mathcal{M}}b_{ij,t}\log \left(\frac{b_{ij,t}}{b_{ij,t-1}}\right)+\lambda_{i,t}(\sum_{j \in \mathcal{M}} b_{ij,t}-B_i)+\mu_{ij,t}b_{ij,t}.$$
By the complementary slackness condition, it holds that $\lambda_{i,t}(\sum_{j \in \mathcal{M}} b_{ij,t}-B_i)=0$ and $\mu_{ij,t}b_{ij,t}=0$ for $i \in \mathcal{N},j\in \mathcal{M}$. Since $\sum_{j \in \mathcal{M}} b_{ij,t}=B_i$, so $\lambda_{i,t} \geq 0$, and  $b_{ij,t}$ must be greater than 0  due to the logarithm function, so $\mu_{ij,t}=0$ for  $j\in \mathcal{M}$. Recall that $\nabla \varphi(b_{t-1},\varepsilon_t)_{ij}=1-\log (v_{ij,t}/p_{j,t-1})$, taking derivatives with respect to $b_{ij,t}$, then the first order derivative condition of the  Lagrangian function is 
$$\eta_{i,t}-\eta_{i,t}\log (v_{ij,t}/p_{j,t-1})+\log b_{ij,t}+1-\log b_{ij,t-1}+\lambda_{i,t}=0, $$
therefore, it has that  
$b_{ij,t}=\frac{b_{ij,t-1}}{e^{\lambda_{i,t}+\eta_{i,t}+1}} \left(\frac{v_{ij,t}}{p_{j,t-1}}\right)^{\eta_{i,t}}$. %\textcolor{red}{I thaught that all $B_{i,t}$ need to be replaced by $B_{i}$?}
Since $\sum_{j \in \mathcal{M}} b_{ij,t}={B_{i}}$, then 
${B_{i}}=\sum_{j \in \mathcal{M}} \frac{b_{ij,t-1}}{e^{\lambda_{i,t}}+\eta_{i,t}+1} \left(\frac{v_{ij,t}}{p_{j,t-1}}\right)^{\eta_{i,t}}$, it leads to 
$$b_{ij,t}={B_{i}}\frac{b_{ij,t-1}(v_{ij,t}/p_{j,t-1})^{\eta_{i,t}}}{\sum_{j \in \mathcal{M}} b_{ij,t-1}(v_{ij,t}/p_{j,t-1})^{\eta_{i,t}}}.$$
Set $\eta_{i,t}=1$ for all $i,t$, we conclude this lemma.
\end{proof}
\subsection{Proof of Lemma \ref{lemma3.3} }
Taking expectations on both sides of (\ref{EGlessSH}), we obtain the results.

\subsection{Proof of Proposition \ref{generic_upper_bound}}
%\emph{The regret}: 
From Lemma \ref{lemma3.3}, we have that 
$$\Psi (\mathbf{b}_t) -\Psi (\mathbf{b}^\ast) \leq \Phi (\mathbf{b}_t) -\Phi (\mathbf{b}^\ast) $$
%For any period $t$, (\ref{EGlessSH}) would be 
%$$\psi(\mathbf{b}_t,\mathbf{v}_t)-\psi(\mathbf{b}^{\ast},\mathbf{v}_t) \leq \varphi(\mathbf{b}_t,\mathbf{v}_t)-\varphi(\mathbf{b}^\ast,\mathbf{v}_t).$$
From Appendix \ref{AppendixA.1}, at the equilibrium price $\mathbf{p}^\ast$, each buyer purchases the item with the maximum bang-per-buck ratio.  Define $\beta_i^\ast =\max_j \left\lbrace \frac{\mathbb{E}[v_{ij}]}{p_j} :\forall j\right\rbrace$, then the maximum expected utility of each buyer in each period $\mathbb{E}[u_i^\ast]=\beta_i^\ast B_{i}$. We derive that
$$\Psi (\mathbf{b}) -\Psi (\mathbf{b}^\ast)=\mathbb{E}\left[\sum_{i \in \mathcal{N}} B_{i} \log \left(\frac{u_i^\ast}{u_{i,t}}\right)\right]=\mathbb{E}\left[\sum_{i \in \mathcal{N}} \frac{u_i^\ast}{\beta_i^{\ast}} \log \left(\frac{u_i^\ast}{u_{i,t}}\right)\right] \geq \mathbb{E}\left[\frac{\underline{p}}{\overline{v}} \sum_{i \in \mathcal{N}} u_i^\ast\log \left(\frac{u_i^\ast}{u_{i,t}}\right)\right]=\mathbb{E}\left[\frac{\underline{p}}{\overline{v}} D_h(\mathbf{u}^{\ast},\mathbf{u}_t))\right].$$
We obtain that 
$$\mathbb{E}\left[\frac{\underline{p}}{\overline{v}} D_h(\mathbf{u}^{\ast},\mathbf{u}_t))\right] \leq \Psi (\mathbf{b}_t) -\Psi (\mathbf{b}^\ast).$$
Note that $u_i^\ast \leq \overline{u}_i$ for any $i$, then $h(u_i)=\sum_{i \in \mathcal{N}} (u_i\log u_i-u_i)$ is $1/(\sum_{i \in \mathcal{N}} \overline{u}_i)$-strongly convex with respect to the $\Vert \cdot \Vert_1$ norm over $u_i >0$ (see Lemma 2 in \cite{balseiro2022best}). It has that
$$\mathbb{E}\left[\frac{\underline{p}}{2\overline{v}(\sum_{i \in \mathcal{N}} \overline{u}_i)}\Vert\mathbf{u}^{\ast}-\mathbf{u}_t \Vert^2 \right] \leq \mathbb{E}\left[ \frac{\underline{p}}{\overline{v}} D_h(\mathbf{u}^{\ast},\mathbf{u}_t) \right] \leq \Psi (\mathbf{b}_t) -\Psi (\mathbf{b}^\ast) \leq \Phi (\mathbf{b}_t) -\Phi (\mathbf{b}^\ast),$$
since $\overline{u}_i=(B_i\overline{v}_i)/\underline{p}_j$ and $\sum_{i \in \mathcal{N}}B_i=1$, it turns out $\sum_{i \in \mathcal{N}} \overline{u}_i \leq \sum_{i \in \mathcal{N}}(B_i\overline{v})/\underline{p} =\overline{v}/\underline{p}$, then
$$ \mathbb{E} \left[u_i^\ast -u_{i,t}\right]\leq \frac{\sqrt{2}\overline{v}}{\underline{p}} \sqrt{ \Phi (\mathbf{b}_t) -\Phi (\mathbf{b}^\ast) }.$$

%\emph{The envy}: For any buyer $i$, if she substitutes other's allocation into hers at period $t$, she will see a budget-weighted utility, that is
%\begin{align*}
 %   \frac{\sum_{j \in \mathcal{M}} v_{ij,t}x_{kj,t}}{B_{k,t}}&\leq \frac{\beta_{i,t}^{\ast}\sum_{j \in \mathcal{M}} p_{j,t}^{\ast} x_{kj,t}}{B_{k,t}}\\
  %  &=\frac{\beta_{i}^{\ast}(B_{k,t}+\sum_{j \in \mathcal{M}} (p_{j,t}^{\ast}-p_{j,t})x_{kj,t})}{B_{k,t}}\\
   % &=\frac{u_{i}^{\ast}}{B_{i,t}}\left(1+\frac{\sum_{j \in \mathcal{M}} (p_{j}^{\ast}-p_{j,t})x_{kj,t}}{B_{k,t}}\right)\\
    %&=\frac{(u_{i,t}+u_{i,t}^{\ast}-u_{i,t})}{B_{i,t}}+\frac{u_{i}^{\ast}\sum_{j \in \mathcal{M}} (p_{j}^{\ast}-p_{j,t})x_{kj,t}}{B_{i,t}B_{k,t}} \\
    %&\leq \frac{(u_{i,t}+u_{i,t}^{\ast}-u_{i,t})}{B_{i,t}}+\frac{u_{i,t}^{\ast}\sum_{j \in \mathcal{M}} (p_{j}^{\ast}-p_{j,t})}{B_i^{t}B_k^{t}}.
%\end{align*}
%The first line comes from $p_{j,t}^{\ast} \beta_{i}^{\ast} \geq v_{ij,t}$. The second line is because of 
%$$\sum_{j \in \mathcal{M}} p_{j}^{\ast} x_{kj,t}=\sum_{j \in \mathcal{M}} (p_{j}^{\ast}-p_{j,t}+p_{j,t})x_{kj,t}=B_{k,t}+\sum_{j \in \mathcal{M}} (p_{j}^{\ast}-p_{j,t})x_{kj,t}.$$
%The third line uses the fact $u_{i}^{\ast}=B_{i,t}\beta_i^{\ast}$. The last line is because of $x_{kj,t} \leq 1$.
%The algorithm set $B_{i,t}=B_{k,t}=1/N$, therefore, the envy of buyer $i$ is 
%$$\max_k\left(\sum_{j \in \mathcal{M}} v_{ij,t}x_{kj,t}-u_{i,t}\right) \leq  (u_i^{\ast}-u_{i,t})+ N \overline{u}_i\Vert \mathbf{p}^{\ast}-\mathbf{p}_{t}\Vert \leq (u_i^{\ast}-u_{i,t})+ \frac{\overline{u}_i}{\underline{p}}\Vert \mathbf{p}^{\ast}-\mathbf{p}_{t}\Vert. $$

\section{Proofs in Section 4} \label{sec:app_c}

\subsection{Proof of Theorem 4.1}
We start to analyze the gap between the objective value of (\ref{SHHindsight}) obtained from $\mathbf{b}_t$ and $\mathbf{b}^\ast$ by recalling Proposition~\ref{adverserial_regret}, it has that
\begin{align*}
    \varphi(\mathbf{b}_t,\varepsilon_t)&\leq \ell_{\varphi}(\mathbf{b}_t,\mathbf{b}_{t-1},\varepsilon_t)+D_h(\mathbf{b}_t, \mathbf{b}_{t-1}) \\
    &\leq \varphi(\mathbf{b}_{t-1},\varepsilon_t)+\langle  \nabla \varphi(\mathbf{b}_{t-1},\varepsilon_t),\mathbf{b}^\ast - \mathbf{b}_{t-1}\rangle+D_h(\mathbf{b}^\ast,\mathbf{b}_{t-1})-D_h(\mathbf{b}^\ast, \mathbf{b}_{t}).
\end{align*}
Since $\mathcal{H}_t$ is the history of the algorithm,  by taking the expectation conditioned on $\mathcal{H}_{t-1}$, since $\mathbf{b}_{t-1} \in \mathcal{H}_{t-1}$, we have that
\begin{align*}
&\mathbb{E}\left[\varphi(\mathbf{b}_t,\varepsilon_t)\vert \mathcal{H}_{t-1} \right]\\
   &\leq \mathbb{E}\left[\varphi(\mathbf{b}_{t-1},\varepsilon_t)\vert \mathcal{H}_{t-1} \right]+ \langle  \mathbb{E}\left[\nabla \varphi(\mathbf{b}_{t-1},\varepsilon_t)\vert \mathcal{H}_{t-1} \right], \mathbf{b}^\ast - \mathbf{b}_{t-1}\rangle+ \mathbb{E}\left[ D_h(\mathbf{b}^\ast,\mathbf{b}_{t-1})-D_h(\mathbf{b}^\ast, \mathbf{b}_{t}) \vert \mathcal{H}_{t-1}\right].
\end{align*}
Construct a process $Z_t=\sum_{\tau=1}^t\langle  \nabla \varphi(\mathbf{b}_{\tau-1},\varepsilon_\tau),\mathbf{b}^\ast - \mathbf{b}_{\tau-1} \rangle-\langle  \mathbb{E}\left[\nabla \varphi(\mathbf{b}_{\tau-1},\varepsilon_\tau)\vert \mathcal{H}_{\tau-1} \right], \mathbf{b}^\ast - \mathbf{b}_{\tau-1}\rangle$, which is a martingale with respect to $\mathcal{H}_{t-1}$, the  Optional Stopping Theorem implies that $\mathbb{E}[Z_T]=0$. Using a similar martingale argument for $\varphi(\mathbf{b}_t,\varepsilon_t)$ , we could have that 
\begin{equation}\label{func_diff}
    \mathbb{E} [\varphi(\mathbf{b}_t,\varepsilon_t)-\varphi(\mathbf{b}^\ast,\varepsilon_t) ]\leq \mathbb{E}[D_h(\mathbf{b}^\ast,\mathbf{b}_{t-1})]-\mathbb{E}[D_h(\mathbf{b}^\ast, \mathbf{b}_{t})].
\end{equation}
Summing over $t$, we obtain that 
$$\sum_{t=1}^T \mathbb{E} [\varphi(\mathbf{b}_t,\varepsilon_t)-\varphi(\mathbf{b}^\ast,\varepsilon_t) ] \leq D_h(\mathbf{b}^\ast,\mathbf{b}_1),$$
thus, we could say $\mathbb{E}[\mathcal{F}^\pi] \leq D_h(\mathbf{b}^\ast,\mathbf{b}_1).$\\

\textbf{Individual's regret}: 
Using the convexity of $\Phi$, we obtain that 
$$T\mathbb{E}\left[\Phi(\mathbf{\hat{b}})-\Phi(\mathbf{b}^\ast) \right]\leq \mathbb{E}\left[\sum_{t=1}^T\Phi(\mathbf{b}_t)-\sum_{t=1}^T\Phi(\mathbf{b}^\ast) \right]\leq \log MN,$$
where $\mathbf{\hat{b}} =\frac{1}{T}\sum_{t=1}^T \mathbf{b}_t. $ Consequently, 
$$\mathbb{E}\left[\Phi(\mathbf{\hat{b}})-\Phi(\mathbf{b}^\ast) \right] \leq \frac{\log MN}{T}.$$
According to Proposition~\ref{generic_upper_bound}, the regret for any buyer~$i$ is 
\begin{align*} \label{regret_nonstationary_ID}
   \sum_{t=1}^T \mathbb{E}[ u_i^{\ast}-u_{i,t} ]&= T\mathbb{E}[ u_i^{\ast}- \hat{u}_i ]\\
   &\leq \frac{\sqrt{2}\overline{v}T}{\underline{p}}\mathbb{E}\left[\sqrt{\Phi({\mathbf{\hat{b}}})-\Phi(\mathbf{b}^\ast)}\right] \\
   &\leq \frac{\sqrt{2}\overline{v}T}{\underline{p}}\sqrt{ \frac{\log MN}{T}}\\
   &=  \frac{\overline{v}}{\underline{p}} \sqrt{ 2\log MN} \sqrt{T}.
\end{align*}

%According to Proposition~\ref{generic_upper_bound}, under the stationary input, the equation (\ref{regretbound}) becomes
%\begin{equation}\label{regret_stationary}
 %   \mathbb{E}[u_i^{t,\ast}-u_i^t]= \mathbb{E}[u_i^{\ast}-u_i^t]\leq \left(\Vert \mathbf{b}_i^{t,\ast}-\mathbf{b}^t_i \Vert+ \Vert \mathbf{p}^t -\mathbf{p}^{t,\ast}  \Vert \right) \leq \frac{(2+\sqrt{2})\overline{v}_i}{\underline{p}} \sqrt{\frac{\log mn}{t}}
%\end{equation}
%For the envy, the equation (\ref{envybound}) becomes, 
%\begin{align}\label{envy_stationary}
 %   \max_k \mathbb{E}\left[\sum_{j \in \mathcal{M}} v_{ij}^tx_{kj}^{t}-u_i^t\right] &\leq  \mathbb{E}\left[( u_i^{t,\ast}-u_i^{t})+  \overline{u}_i\Vert \mathbf{p}^{t,\ast}-\mathbf{p}^{t}\Vert  )\right]\notag\\
    %&\leq \frac{(2+\sqrt{2})\overline{v}_i}{\underline{p}} \sqrt{\frac{\log mn}{t}}+n\overline{u}_i\sqrt{\frac{2\log mn}{t}}\notag\\
    %&=\left(\frac{(2+\sqrt{2})\overline{v}_i}{\underline{p}}+\sqrt{2}n \overline{u})_i\right)\sqrt{\frac{\log mn}{t}}
%\end{align}
%Summing up (\ref{regret_stationary}) and (\ref{envy_stationary}) from $t=1$ to $t=T$, we arrive at
%$$\sup_{\mathcal{P}\in \Delta(\mathcal{S})} \mathbb{E}_{\gamma \sim \mathcal{P}^T} [\mathcal{R}_{i}], \sup_{\mathcal{P}\in \Delta(\mathcal{S})} \mathbb{E}_{\gamma \sim \mathcal{P}^T} [\mathcal{E}_{i}] \leq \mathcal{O}(\sqrt{T}).$$
\subsection{Proof of Corollary \ref{convergencerate_Stationary}}
\begin{proof}
\textbf{Time-averaged sequence convergence}:
Let $\mathbf{\hat{p}}=\frac{1}{T}\sum_{t=1}^T \mathbf{p}_t$, then $$\hat{p}_j =\frac{1}{T}\sum_{t=1}^T p_{j,t}=\frac{1}{T}\sum_{t=1}^T \sum_{i\in \mathcal{N}} b_{ij,t}=  \sum_{i\in \mathcal{N}} \hat{b}_{ij}, $$
it leads to
$$\mathbb{E}\left[ \Vert\mathbf{\hat{p}}-\mathbf{p}^\ast \Vert^2\right] \leq 2 D_h(\mathbf{\hat{p}},\mathbf{p}^\ast) \leq 2\mathbb{E}\left[\Phi(\mathbf{\hat{b}})-\Phi(\mathbf{b}^\ast) \right] \leq \frac{2 \log MN}{T}. $$
    \textbf{Last-iterate convergence}: 
At first,  we have that 
$$\varphi(\mathbf{b})=\varphi(\mathbf{b}_{t})+\langle \nabla \varphi(\mathbf{b}_{t}),\mathbf{b} - \mathbf{b}_{t}  \rangle+ D_h(\mathbf{p}, \mathbf{p}_t),$$
thus $\varphi(\mathbf{b})$ is 1-\emph{strongly convex} w.r.t $D_h(\mathbf{p}, \mathbf{p}_t)$. Under the stationary input, taking expectation on both sides and replacing $\mathbf{b}^\ast$ into $\mathbf{b}$ such that $\langle \mathbb{E}[\nabla  \varphi(\mathbf{b}^\ast,\varepsilon_t)], \mathbf{b}_t-\mathbf{b}^\ast \rangle \geq 0$, and $D_h(\mathbf{p}, \mathbf{p}_t)\geq \frac{1}{2}\Vert \mathbf{p}-\mathbf{p}_t \Vert^2$, we obtain that 
\begin{equation}\label{stronglyconvex_stationary}
    \mathbb{E}[\varphi(\mathbf{b}_t,\varepsilon_t)]-\mathbb{E}[\varphi(\mathbf{b}^\ast,\varepsilon_t)] =\langle \mathbb{E}[\nabla  \varphi(\mathbf{b}^\ast,\varepsilon_t)], \mathbf{b}_t-\mathbf{b}^\ast \rangle+ \mathbb{E}[D_h(\mathbf{p}^\ast, \mathbf{p}_t)]\geq  \frac{1}{2} \mathbb{E}[\Vert \mathbf{p}^\ast-\mathbf{p}_t \Vert^2].
\end{equation}
%Summing up the above inequality from $t=1$ to $t=T$, we obtain that 
%\begin{equation}\label{sum_difference_price}
 %   \frac{1}{2}\sum_{t=1}^T \mathbb{E}[\Vert \mathbf{p}^\ast-\mathbf{p}^t \Vert^2] \leq \sum_{t=1}^T \mathbb{E}[\varphi^t(\mathbf{b}^t,\varepsilon^t)]-\mathbb{E}[\varphi^t(\mathbf{b}^\ast,\varepsilon^t)] \leq D_h(\mathbf{b}^\ast,\mathbf{b}^1) \leq \log mn,
%\end{equation}

Since $\mathbf{b}_t$ is the minimum of (\ref{OMDupdate}), then
$$\varphi(\mathbf{b}_t,\varepsilon_t) \leq \ell_{\varphi}(\mathbf{b}_t,\mathbf{b}_{t-1},\varepsilon_t)+D_h(\mathbf{b}_t, \mathbf{b}_{t-1})\leq \ell_{\varphi}[(\mathbf{b}_{t-1};\mathbf{b}_{t-1}),\varepsilon_t]+D_h(\mathbf{b}_{t-1}, \mathbf{b}_{t-1})=\varphi(\mathbf{b}_{t-1},\varepsilon_t), $$
 taking expectations on both sides, we obtain that 
\begin{equation}\label{objective_time_relation}
\mathbb{E}[\varphi(\mathbf{b}_t,\varepsilon_t)] \leq \mathbb{E}[\varphi(\mathbf{b}_{t-1},\varepsilon_t)]=\mathbb{E}[\varphi(\mathbf{b}_{t-1},\varepsilon_{t-1})].
\end{equation}
Therefore, we have that 
$$\frac{T}{2} \mathbb{E}[\Vert \mathbf{p}^\ast-\mathbf{p}_T \Vert^2]\leq T(\mathbb{E}[\varphi(\mathbf{b}_T,\varepsilon_T)] -\mathbb{E}[\varphi(\mathbf{b}^\ast,\varepsilon_T)])\leq \sum_{t=1}^T \mathbb{E} [\varphi(\mathbf{b}_t,\varepsilon_t)-\varphi(\mathbf{b}^\ast,\varepsilon_t) ] \leq D_h(\mathbf{b}^\ast,\mathbf{b}_1),$$
thus, we obtain the last-iterate convergence rate of the price
$$ \mathbb{E}[\Vert \mathbf{p}^\ast-\mathbf{p}_T \Vert^2] \leq \frac{2D_h(\mathbf{b}^\ast,\mathbf{b}_1)}{T} \leq \frac{2\log MN}{T}, $$
where the last inequality comes from the Lemma \ref{D(b,b1)}. 
\end{proof}
\subsection{Proof of Proposition \ref{pp3.2}}
\begin{proof}
At first, we characterize the convergence behavior of the bid price $\mathbf{b}_t$ generated by the proportional response. Recall that the bid price $\mathbf{b}^\ast$ makes the equilibrium utilities $u_i^\ast$ for any buyer $i$, then an market equilibrium is $(\mathbf{x}^\ast,\mathbf{p}^\ast)$. For any $x_{ij}>0$, it holds that $\frac{v_{ij}x_{ij}^\ast}{u_i^\ast}=\frac{b_{ij}^\ast}{B_i}$, and the online proportional response has that $b_{ij,t}={B_{i}}\frac{v_{ij,t}x_{ij,t-1}}{u_{i,t-1}}$, then
\begin{align} \label{converge_behavior}
   \mathbb{E}\left[ \sum_{i \in \mathcal{N}}\sum_{j \in \mathcal{M}}b_{ij}^\ast \log \frac{b_{ij}^\ast}{b_{ij,t}}\right]&= \mathbb{E}\left[  \sum_{i \in \mathcal{N}}\sum_{j \in \mathcal{M}} b_{ij}^\ast \log \frac{b_{ij}^\ast u_{i,t-1} }{{B_{i}}v_{ij,t}x_{ij,t-1}} \right]\notag \\
    &=\mathbb{E}\left[ \sum_{i \in \mathcal{N}}\sum_{j \in \mathcal{M}} b_{ij}^\ast \log \frac{v_{ij}x_{ij}^\ast u_{i,t-1}}{v_{ij,t}x_{ij,t-1}u_i^\ast} \right]\notag\\
    &=\mathbb{E}\left[ \sum_{i \in \mathcal{N}}\sum_{j \in \mathcal{M}} b_{ij}^\ast \log \frac{b_{ij}^\ast}{b_{ij,t-1}}\right] + \mathbb{E}\left[ \sum_{i \in \mathcal{N}}\sum_{j \in \mathcal{M}} b_{ij}^\ast \log \frac{v_{ij}}{v_{ij,t}}\right]\notag\\
    &-\mathbb{E}\left[ \sum_{i \in \mathcal{N}}\sum_{j \in \mathcal{M}} b_{ij}^\ast \log\frac{p_j^\ast}{p_{j,t-1}}\right]-\mathbb{E}\left[\sum_{i \in \mathcal{N}}\sum_{j \in \mathcal{M}} b_{ij}^\ast \log \frac{u_i^\ast}{u_{i,t-1}} \right]\notag\\
    &=\mathbb{E}\left[\sum_{i \in \mathcal{N}}\sum_{j \in \mathcal{M}} b_{ij}^\ast \log \frac{b_{ij}^\ast}{b_{ij,t-1}}\right]-\mathbb{E}\left[\sum_{j \in \mathcal{M}} p_{j}^\ast \log\frac{p_j^\ast}{p_{j,t-1}}\right]-\mathbb{E}\left[\sum_{i \in \mathcal{N}}\sum_{j \in \mathcal{M}} b_{ij}^\ast \log \frac{u_i^\ast}{u_{i,t-1}}\right].
\end{align}
According to Assumption~\ref{assumption}.1, it has that $\mathbb{E}[\nabla\varphi(\mathbf{b},\varepsilon_t) \vert \mathcal{H}_{t-1}] =\nabla \Phi(\mathbf{b})$, it leads to $\mathbb{E}[1-\log (v_{ij,t}/p_j) \vert \mathcal{H}_{t-1}]=1-\log (v_{ij}/p_j)$, taking expectation on both sides, we get $\mathbb{E}[\log v_{ij,t}]=\mathbb{E} [\log v_{ij}]$. Therefore,  we obtain that
 $\mathbb{E}\left[ \sum_{i \in \mathcal{N}}\sum_{j \in \mathcal{M}} b_{ij}^\ast \log \frac{v_{ij}}{v_{ij,t}}\right] =0 $.
Since the KL divergence $\sum_{j \in \mathcal{M}} p_{j}^\ast \log\frac{p_j^\ast}{p_{j,t-1}}$ is nonnegative and $u_i^\ast \geq u_{i,t-1}$, we conclude that the bid price generated by the proportional response will converge because of
$$\sum_{i \in \mathcal{N}}\sum_{j \in \mathcal{M}}b_{ij}^\ast \log \frac{b_{ij}^\ast}{b_{ij,t}} \leq \sum_{i \in \mathcal{N}}\sum_{j \in \mathcal{M}} b_{ij}^\ast \log \frac{b_{ij}^\ast}{b_{ij,t-1}},$$
and the equality holds if and only if $\sum_{j \in \mathcal{M}} p_{j}^\ast \log\frac{p_j^\ast}{p_{j,t-1}}=0$ and $\sum_{i \in \mathcal{N}}\sum_{j \in \mathcal{M}} b_{ij}^\ast \log \frac{u_i^\ast}{u_{i,t-1}}=0$. In other words, if price and utilities converges, then the bid price also converges.

Now we argue that the allocation will converge to an equilibrium allocation.
Suppose that the bid price converges to a limit point $\hat{\mathbf{b}}$ that gives the equilibrium utilities such that the allocation corresponding to $\hat{\mathbf{b}}$ is an equilibrium allocation $\hat{\mathbf{x}}$, and the corresponding price is $\hat{\mathbf{p}}$. Assume that $\hat{\mathbf{x}}=\mathbf{x}^\ast$, thus we have 
$$\frac{b_{ij}^\ast}{\hat{b}_{ij}}=\frac{x_{ij}^\ast p_j^\ast}{\hat{x}_{ij}\hat{p_j}}=\frac{p_j^\ast}{\hat{p_j}}.$$
For an item $j$, it has 
$$\sum_{i \in \mathcal{N}} b_{ij}^\ast \log \frac{b_{ij}^\ast}{\hat{b}_{ij}}=\sum_{i \in \mathcal{N}} b_{ij}^\ast \log \frac{p_{j}^\ast}{\hat{p}_{j}}= p_{j}^\ast \log \frac{p_{j}^\ast}{\hat{p}_{j}}.$$
 For the sake of contradiction, suppose that the bid price may converge to another limit point $\tilde{\mathbf{b}}\neq \hat{\mathbf{b}}$ and the corresponding allocation is $\tilde{\mathbf{x}} \neq \hat{\mathbf{x}}$. Fix the same item $j$, we have that 
 \begin{align*}
     \sum_{i \in \mathcal{N}} b_{ij}^\ast\log \frac{b_{ij}^\ast}{\tilde{b}_{ij}}&=\sum_{i \in \mathcal{N}} b_{ij}^\ast\log \frac{x_{ij}^\ast p_j^\ast}{\tilde{x}_{ij}\tilde{p}_j}\\
     &=\sum_{i \in \mathcal{N}} b_{ij}^\ast\log \frac{p_j^\ast}{\tilde{p}_j}+\sum_{i \in \mathcal{N}} b_{ij}^\ast\log\frac{x_{ij}^\ast}{\tilde{x}_{ij}}\\
     &=p_j^\ast\log \frac{p_j^\ast}{\tilde{p}_j} +\sum_{i \in \mathcal{N}} p_j^\ast x_{ij}^\ast\log\frac{x_{ij}^\ast}{\tilde{x}_{ij}}\\
     &=p_j^\ast\log \frac{p_j^\ast}{\tilde{p}_j} +p_j^\ast\sum_{i \in \mathcal{N}}  x_{ij}^\ast\log\frac{x_{ij}^\ast}{\tilde{x}_{ij}}
 \end{align*}
 Since a market equilibrium is a fixed point of proportional response, thus we could say
 $$\sum_{i \in \mathcal{N}}\sum_{j \in \mathcal{M}}b_{ij}^\ast \log \frac{b_{ij}^\ast}{\hat{b}_{ij}}=\sum_{i \in \mathcal{N}}\sum_{j \in \mathcal{M}}b_{ij}^\ast\log \frac{b_{ij}^\ast}{\tilde{b}_{ij}},$$
 then
 $$\sum_{j \in \mathcal{M}} p_{j}^\ast \log \frac{p_{j}^\ast}{\hat{p}_{j}}=\sum_{j \in \mathcal{M}} p_j^\ast\log \frac{p_j^\ast}{\tilde{p}_j} +\sum_{j \in \mathcal{M}} p_j^\ast\sum_{i \in \mathcal{N}}  x_{ij}^\ast\log\frac{x_{ij}^\ast}{\tilde{x}_{ij}}.$$
 If $\hat{\mathbf{b}}$ and $\mathbf{b}^\ast$ is the pair that minimize $\sum_{j \in \mathcal{M}} p_j^\ast \log (p_j^\ast/\hat{p_j})$, then 
 $$\sum_{j \in \mathcal{M}} p_j^\ast\log \frac{p_j^\ast}{\tilde{p}_j} \geq \sum_{j \in \mathcal{M}} p_{j}^\ast \log \frac{p_{j}^\ast}{\hat{p}_{j}}=\sum_{j \in \mathcal{M}} p_j^\ast\log \frac{p_j^\ast}{\tilde{p}_j} +\sum_{j \in \mathcal{M}} p_j^\ast\sum_{i \in \mathcal{N}}  x_{ij}^\ast\log\frac{x_{ij}^\ast}{\tilde{x}_{ij}}.$$
 Since the KL divergence is always nonnegative, the term $x_{ij}^\ast\log\frac{x_{ij}^\ast}{\tilde{x}_{ij}}$ must be greater or equal to 0. But the inequality holds if and only if $x_{ij}^\ast\log\frac{x_{ij}^\ast}{\tilde{x}_{ij}}=0$, this means $\tilde{x}=\hat{x}$, and therefore this is contradiction, so $\tilde{\mathbf{b}}=\hat{\mathbf{b}}$. \\
 
 Hence, we conclude that the proportional response leads the bid price to converge to a limit point and the corresponding allocation converges to a single market equilibrium allocation.
\end{proof}
\section{Proofs of Section 5}
\subsection{Proofs of Theorem \ref{nonstationary_independent}} \label{appendix5.1}
\begin{proof}
We start the analysis with the gap in the expected objective value. 
%\begin{align*}
%\mathcal{G}^\pi&=\sup\limits_{\mathcal{P} \in \mathcal{C}} \left\lbrace \mathbb{E}\limits_{\vec{\varepsilon} \in \mathcal{P}} \left[OPT_{SH}(\vec{\varepsilon})-ALG_{SH}^\pi(\vec{\varepsilon}) \right]\right\rbrace\\
 %  &\leq \sup_{\mathcal{P} \in \mathcal{C}} \left\lbrace \underbrace{\sum_{t=1}^T\mathbb{E}_{\varepsilon^t \sim \mathcal{P}^t }[\varphi(\mathbf{b}^\ast,\varepsilon^t)]-\sum_{t=1}^T\mathbb{E}_{\varepsilon^t\sim \overline{\mathcal{P}} }[\varphi(\mathbf{b}^\ast,\varepsilon^t)]}_{\uppercase\expandafter{\romannumeral1}}+\underbrace{\sum_{t=1}^T\mathbb{E}_{\varepsilon^t\sim \overline{\mathcal{P}} }[\varphi(\mathbf{b}^{t,\ast},\varepsilon^t)]-\sum_{t=1}^T\mathbb{E}_{\varepsilon^t \sim \mathcal{P}^t}[\varphi(\mathbf{b}^{t,\ast},\varepsilon^t)]}_{\uppercase\expandafter{\romannumeral2}} \right\rbrace+\log mn.
%\end{align*}

\begin{align*}
        \sum_{t=1}^T\Phi(\mathbf{b}_t)-\sum_{t=1}^T\Phi(\mathbf{b}^\ast)
    &=\underbrace{\sum_{t=1}^T\Phi(\mathbf{b}_t)- \sum_{t=1}^T\Phi(\mathbf{b}^\ast)+\sum_{t=1}^T\varphi(\mathbf{b}^\ast,\varepsilon_t)-\sum_{t=1}^T\varphi(\mathbf{b}_t,\varepsilon_t)}_{\uppercase\expandafter{\romannumeral1}} \\
&+\underbrace{\sum_{t=1}^T\varphi(\mathbf{b}_t,\varepsilon_t)-\sum_{t=1}^T\varphi(\mathbf{b}^\ast,\varepsilon_t)}_{\uppercase\expandafter{\romannumeral2}}.
\end{align*}
Given the history $\mathcal{H}_{t-1}$, let  $\mathcal{P}_t(\cdot \vert \varepsilon_{1:t-1})$ be the  conditional distribution of $\varepsilon_t$, 
we have that 
$$\left\Vert \nabla\Phi(\mathbf{b}_t)-\nabla \varphi(\mathbf{b}_t,\varepsilon_{t}) \right\Vert \leq \Vert \overline{\mathcal{P}} -\mathcal{P}_{t} (\cdot \vert \varepsilon_{1:t-1}) \Vert_{TV}=\delta$$

%$$\left\Vert \nabla\Phi(\mathbf{b}_t)-\nabla \varphi(\mathbf{b}_t,\varepsilon_{t}) \right\Vert \leq \mathbb{E} \left[\int \vert d\overline{\mathcal{P}}- d\mathcal{P}_{t} (\cdot \vert \varepsilon_{1:t-1}) \vert \right] \leq \Vert \overline{\mathcal{P}} -\mathcal{P}_{t} (\cdot \vert \varepsilon_{1:t-1}) \Vert_{TV}=\delta. $$
\textsc{Bounding} \uppercase\expandafter{\romannumeral1}: According to Lemma \ref{lemma3.1}, we can derive that 
$$\Phi(\mathbf{b}_t) -\Phi(\mathbf{b}^\ast) =-\langle \nabla \Phi(\mathbf{b}_t),\mathbf{b}^\ast-\mathbf{b}_t\rangle-D_h(\mathbf{p}^\ast,\mathbf{p}_t),$$
and 
$$\varphi(\mathbf{b}^\ast,\varepsilon_t)-\varphi(\mathbf{b}_t,\varepsilon_t)=\langle \nabla  \varphi(\mathbf{b}_t,\varepsilon_t),\mathbf{b}^\ast-\mathbf{b}_t\rangle+D_h(\mathbf{p}^\ast,\mathbf{p}_t),$$
combining these two equations, we obtain that 
$$\Phi(\mathbf{b}_t) -\Phi(\mathbf{b}^\ast)+\varphi(\mathbf{b}^\ast,\varepsilon_t)-\varphi(\mathbf{b}_t,\varepsilon_t) =\langle\nabla \varphi(\mathbf{b}_t,\varepsilon_t) -\nabla \Phi(\mathbf{b}_t),\mathbf{b}^\ast-\mathbf{b}_t \rangle \leq \Vert  \nabla \varphi(\mathbf{b}_t,\varepsilon_t) -\nabla \Phi(\mathbf{b}_t) \Vert \Vert \mathbf{b}^\ast-\mathbf{b}_t\Vert\leq 2\delta. $$

%\begin{align*}
%&\left\vert \mathbb{E}\left[\Phi(\mathbf{b}_{t})-\varphi(\mathbf{b}_{t},\varepsilon_{t}) \right]\right\vert\\
%=&\left\vert \mathbb{E}\left[\mathbb{E}\left[ \Phi(\mathbf{b}_{t})-\varphi(\mathbf{b}_{t},\varepsilon_{t})\vert \mathcal{H}_{t-1} \right] \right]\right\vert \\=&\left\vert\mathbb{E}\left[ \mathbb{E} \left[\int\varphi(\mathbf{b}_{t},\varepsilon)\overline{\mathcal{P}}d(\varepsilon) - \int\varphi(\mathbf{b}_{t},\varepsilon)\mathcal{P}_{t}(d \varepsilon \vert \varepsilon_{1:t-1}) \vert \mathcal{H}_{t-1}\right] \right]\right\vert \\
%\leq &\mathbb{E} \left[\mathbb{E} \left[ \left\vert\int\varphi(\mathbf{b}_{t},\varepsilon)\overline{\mathcal{P}}d(\varepsilon) - \int\varphi(\mathbf{b}_{t},\varepsilon)\mathcal{P}_{t}(d \varepsilon \vert \varepsilon_{1:t-1}) \right\vert \vert \mathcal{H}_{t-1} \right] \right]\\
   % \leq& \overline{\varphi} \mathbb{E} \left[\int \vert d\overline{\mathcal{P}}- d\mathcal{P}_{t} (\cdot \vert \varepsilon_{1:t-1}) \vert \right]\\
    %\leq &  \overline{\varphi} \Vert \overline{\mathcal{P}} -\mathcal{P}_{t} (\cdot \vert \varepsilon_{1:t-1}) \Vert_{TV}.
%\end{align*}
Moreover, $\Vert \mathbf{b}^\ast-\mathbf{b}_t\Vert \leq 2$.
Then, sum over $t=1,\ldots,T$, we get that $ \uppercase\expandafter{\romannumeral1} \leq  2\delta T$
%$$ \vert \mathbb{E}[ \uppercase\expandafter{\romannumeral1}] \leq \overline{\varphi} \sum_{t=1}^T \Vert \overline{\mathcal{P}} -\mathcal{P}_{t} (\cdot \vert \varepsilon_{1:t-1}) \Vert_{TV} = \overline{\varphi} \delta T. $$

%where the last equality follows  $\frac{1}{T}\sum_{t=1}^T\Vert \mathcal{P}_t-\overline{\mathcal{P}}\Vert_{TV}\leq \delta$. \\
\textsc{Bounding} \uppercase\expandafter{\romannumeral2}:
Using  Proposition \ref{adverserial_regret}, we have that 
$$\sum_{t=1}^T \varphi(\mathbf{b}_t,\varepsilon_t)-\sum_{t=1}^T \varphi(\mathbf{b}^\ast,\varepsilon_t) \leq \log MN.$$
Therefore, combing the upper bounds of \uppercase\expandafter{\romannumeral1} and \uppercase\expandafter{\romannumeral2}, it follows that 
$$ \sum_{t=1}^T\Phi(\mathbf{b}_t)-\sum_{t=1}^T\Phi(\mathbf{b}^\ast) \leq 2\delta T +\log MN$$
%\textsc{Bounding} \uppercase\expandafter{\romannumeral2}: Similar arguments show that 
%$$\sum_{t=1}^T \vert \mathbb{E}[\Phi(\mathbf{b}^\ast) -\varphi(\mathbf{b}^\ast,\varepsilon_t)] \vert \leq   \overline{\varphi} \delta T,$$
% then use  Proposition \ref{pp3.1},  we obtain that 
%$$ \vert \mathbb{E}[ \uppercase\expandafter{\romannumeral1}] \vert \leq \sum_{t=1}^T\varphi(\mathbf{b}_t,\varepsilon_t) - \sum_{t=1}^T\varphi(\mathbf{b}^\ast,\varepsilon_t) +  \overline{\varphi} \delta T \leq \log MN +  \overline{\varphi} \delta T.$$
%Combining the expected bounds for \uppercase\expandafter{\romannumeral1} and \uppercase\expandafter{\romannumeral2}, it holds that 
%$$\mathbb{E}\left[\sum_{t=1}^T\Phi(\mathbf{b}_t)-\sum_{t=1}^T\Phi(\mathbf{b}^\ast) \right]\leq 2 \overline{\varphi} \delta T+\log MN.$$ 
Using the convexity of $\Phi$, we obtain that 
%$$T\mathbb{E}\left[\Phi(\mathbf{\hat{b}})-\Phi(\mathbf{b}^\ast) \right]\leq \mathbb{E}\left[\sum_{t=1}^T\Phi(\mathbf{b}_t)-\sum_{t=1}^T\Phi(\mathbf{b}^\ast) \right]\leq 2 \overline{\varphi} \delta T+\log MN,$$
$$T\mathbb{E}\left[\Phi(\mathbf{\hat{b}})-\Phi(\mathbf{b}^\ast) \right]\leq \mathbb{E}\left[\sum_{t=1}^T\Phi(\mathbf{b}_t)-\sum_{t=1}^T\Phi(\mathbf{b}^\ast) \right]\leq 2  \delta T+\log MN,$$
where $\mathbf{\hat{b}} =\frac{1}{T}\sum_{t=1}^T \mathbf{b}_t. $ Consequently, 
$$\mathbb{E}\left[\Phi(\mathbf{\hat{b}})-\Phi(\mathbf{b}^\ast) \right] \leq \frac{2 \delta T+\log MN}{T}.$$
Let $\mathbf{\hat{p}}=\frac{1}{T}\sum_{t=1}^T \mathbf{p}_t$, then $$\hat{p}_j =\frac{1}{T}\sum_{t=1}^T p_{j,t}=\frac{1}{T}\sum_{t=1}^T  \sum_{i\in \mathcal{N}}b_{ij,t}= \sum_{i\in \mathcal{N}} \hat{b}_{ij}, $$
it leads to
$$\mathbb{E}\left[ \Vert\mathbf{\hat{p}}-\mathbf{p}^\ast \Vert^2\right] \leq 2 D_h(\mathbf{\hat{p}},\mathbf{p}^\ast) \leq 2\mathbb{E}\left[\Phi(\mathbf{\hat{b}})-\Phi(\mathbf{b}^\ast) \right] \leq \frac{2 (2\delta T+\log MN)}{T}. $$

\textbf{Individual's regret}: Notice that we measure the regret under the stationary distribution $\overline{\mathcal{P}}$, let $\mathbb{E}[\hat{u}_i]=\frac{1}{T} \sum_{t=1}^T \mathbb{E}[u_{i,t}]$, we can have 
$$ \sum_{t=1}^T \mathbb{E}[ u_i^{\ast}-u_{i,t} ] =T\mathbb{E}\left[ u_i^{\ast}- \frac{1}{T} \sum_{t=1}^Tu_{i,t} \right] = T\mathbb{E}[ u_i^{\ast}- \hat{u}_i ].  $$
Moreover, 
$$\mathbb{E}[\hat{u}_i]=\frac{1}{T} \sum_{t=1}^T\sum_{j \in \mathcal{M}} \mathbb{E}[v_{i} x_{ij,t}] =\frac{1}{T} \sum_{t=1}^T\sum_{j \in \mathcal{M}} \mathbb{E}\left[v_{i} \frac{\hat{b}_{ij}}{\hat{p}_j}\right]=\sum_{{j\in \mathcal{M}}}  \mathbb{E}\left[v_{i} \frac{\hat{b}_{ij}}{\hat{p}_j}\right]. $$
According to Proposition~\ref{generic_upper_bound}, the regret for any buyer $i$ is 
\begin{align*} %\label{regret_nonstationary_ID}
   \sum_{t=1}^T \mathbb{E}[ u_i^{\ast}-u_{i,t} ]&= T\mathbb{E}[ u_i^{\ast}- \hat{u}_i ]\\
   &\leq \frac{\sqrt{2}\overline{v}T}{\underline{p}}\mathbb{E}\left[\sqrt{\Phi({\mathbf{\hat{b}}})-\Phi(\mathbf{b}^\ast)}\right] \\
   &\leq \frac{\sqrt{2}\overline{v}T}{\underline{p}}\sqrt{ \frac{2 \delta T+\log MN}{T}}\\
   &=  \frac{\overline{v}}{\underline{p}} \sqrt{ 4\delta T+2\log MN} \sqrt{T}.
\end{align*}
%For the envy, the equation (\ref{envybound}) becomes, 
%\begin{align}\label{envy_nonstationary_ID}
 %   \sum_{t=1}^T\max_k \mathbb{E}\left[\sum_{j \in \mathcal{M}} v_{ij,t}x_{kj,t}-u_{i,t}\right] &\leq  \sum_{t=1}^T \mathbb{E}[ u_i^{\ast}-u_{i,t} ]+ \sum_{t=1}^T  \overline{u}_i\Vert \mathbf{p}^{\ast}-\mathbf{p}_{t}\Vert)]\notag\\
%&\leq\frac{\overline{v}}{\underline{p}} \sqrt{ 4\overline{\varphi} \delta T+2\log MN} \sqrt{T} + \frac{\overline{v}}{\underline{p}} \sqrt{4 \overline{\varphi} \delta T+2\log MN} \sqrt{T}\notag\\
%&=\frac{2\overline{v}}{\underline{p}} \sqrt{4 \overline{\varphi} \delta T+2\log MN} \sqrt{T}.
%\end{align}

%\begin{align}\label{envy_nonstationary_ID}
 %   \max_k \mathbb{E}\left[\sum_{j \in \mathcal{M}} v_{ij}^tx_{kj}^{t}-u_i^t\right] &\leq  \mathbb{E}\left[( u_i^{t,\ast}-u_i^{t})+  \overline{u}_i\Vert \mathbf{p}^{t,\ast}-\mathbf{p}^{t}\Vert  )\right]\notag\\
%&\leq\left(\frac{(2+\sqrt{2})\overline{v}_i}{\underline{p}}+2n \overline{u})_i\right)\sqrt{\frac{\log mn}{t}+12 \overline{\varphi}\delta}
%\end{align}
%Summing up (\ref{regret_nonstationary_ID}) and (\ref{envy_nonstationary_ID}) from $t=1$ to $t=T$, we arrive at
%$$\sup_{\mathcal{P}\in \Delta(\mathcal{S})} \mathbb{E}_{\gamma \sim \mathcal{P}^T} [\mathcal{R}_{i}], \sup_{\mathcal{P}\in \Delta(\mathcal{S})} \mathbb{E}_{\gamma \sim \mathcal{P}^T} [\mathcal{E}_{i}] \leq \mathcal{O}(\sqrt{T} +\sqrt{\delta}T).$$
\end{proof}
\subsection{Proofs of Theorem \ref{nonstationary_Ergodic}} \label{appendix5.2}
\begin{proof}
We start the analysis with the gap in expected objective value. At first, we derive that
\begin{align*}
    &\sum_{t=1}^T \Phi(\mathbf{b}_t)- \sum_{t=1}^T\Phi(\mathbf{b}^\ast)\\ \notag
    =&\underbrace{\sum_{t=1}^{T-\kappa} \Phi(\mathbf{b}_t)-\sum_{t=1}^{T-\kappa} \Phi(\mathbf{b}^\ast)-\sum_{t=1}^{T-\kappa} \varphi(\mathbf{b}_t,\varepsilon_{t+\kappa}) +\sum_{t=1}^{T-\kappa}\varphi(\mathbf{b}^\ast,\varepsilon_{t+\kappa}) }_{\uppercase\expandafter{\romannumeral1}}\\ \notag 
    +&\underbrace{\sum_{t=1}^{T-\kappa}\varphi(\mathbf{b}_{t},\varepsilon_{t+\kappa})-\sum_{t=1}^{T-\kappa}\varphi(\mathbf{b}_{t+\kappa},\varepsilon_{t+\kappa})+\sum_{t=1}^{T} \varphi(\mathbf{b}_t,\varepsilon_t)- \sum_{t=1}^{T} \varphi(\mathbf{b}^\ast,\varepsilon_t)}_{\uppercase\expandafter{\romannumeral2}}\\ \notag
    + &\underbrace{\sum_{t=T-\kappa+1}^T \Phi(\mathbf{b}_t) -\sum_{t=T-\kappa+1}^T \Phi(\mathbf{b}^\ast)-\sum_{t=1}^\kappa \varphi(\mathbf{b}_t,\varepsilon_t) +\sum_{t=1}^\kappa \varphi(\mathbf{b}^\ast,\varepsilon_t) }_{\uppercase\expandafter{\romannumeral3}} 
\end{align*} 
At first, $\varepsilon_{t+\kappa}$  is almost independent of $\mathcal{H}_{t-1}$ and $\nabla \Phi(\mathbf{b})$ is the gradient of the expected value of $\varphi(\mathbf{b},\varepsilon)$ under the stationary distribution, thus
%$$\left\vert \mathbb{E}\left[ \nabla\Phi(\mathbf{b}_{t})-\nabla \varphi(\mathbf{b}_{t},\varepsilon_{t+\kappa}) \right]\right\vert \leq \mathbb{E} \left[\int \vert d\overline{\mathcal{P}}- d\mathcal{P}_{t} (\cdot \vert \varepsilon_{1:t-1}) \vert \right] \leq \Vert \overline{\mathcal{P}} -\mathcal{P}_{t} (\cdot \vert \varepsilon_{1:t-1}) \Vert_{TV}=\delta. $$
$$\left\Vert \nabla\Phi(\mathbf{b}_{t})-\nabla \varphi(\mathbf{b}_{t},\varepsilon_{t+\kappa}) \right\Vert\leq \delta. $$
\textsc{Bounding} \uppercase\expandafter{\romannumeral1}: 
According to Lemma \ref{lemma3.1}, we can derive that 
$$\Phi(\mathbf{b}_t) -\Phi(\mathbf{b}^\ast) =-\langle \nabla \Phi(\mathbf{b}_t),\mathbf{b}^\ast-\mathbf{b}_t\rangle-D_h(\mathbf{p}^\ast,\mathbf{p}_t),$$
and 
$$\varphi(\mathbf{b}^\ast,\varepsilon_{t+\kappa})-\varphi(\mathbf{b}_t,\varepsilon_{t+\kappa})=\langle \nabla  \varphi(\mathbf{b}_t,\varepsilon_{t+\kappa}),\mathbf{b}^\ast-\mathbf{b}_t\rangle+D_h(\mathbf{p}^\ast,\mathbf{p}_t),$$
combining these two equations, we obtain that 
$$\Phi(\mathbf{b}_t) -\Phi(\mathbf{b}^\ast)+\varphi(\mathbf{b}^\ast,\varepsilon_{t+\kappa})-\varphi(\mathbf{b}_t,\varepsilon_{t+\kappa}) =\langle\nabla \varphi(\mathbf{b}_t,\varepsilon_{t+\kappa}) -\nabla \Phi(\mathbf{b}_t),\mathbf{b}^\ast-\mathbf{b}_t \rangle.
$$ 
Summing over $t=1,\ldots,T-\kappa$, we obtain that 
$$\uppercase\expandafter{\romannumeral1} = \sum_{t=1}^{T-\kappa} \langle\nabla \varphi(\mathbf{b}_t,\varepsilon_{t+\kappa}) -\nabla \Phi(\mathbf{b}_t),\mathbf{b}^\ast-\mathbf{b}_t \rangle. $$
\textsc{Bounding} \uppercase\expandafter{\romannumeral2}:
Using  Proposition \ref{adverserial_regret}, we have that 
$$\sum_{t=1}^T \varphi(\mathbf{b}_t,\varepsilon_t)-\sum_{t=1}^T \varphi(\mathbf{b}^\ast,\varepsilon_t) \leq \log MN.$$ 
Using Lemma \ref{lemma3.1} again, we have that 
$$\varphi(\mathbf{b}_t,\varepsilon_{t+\kappa})-\varphi(\mathbf{b}_{t+\kappa},\varepsilon_{t+\kappa})= \langle\nabla \varphi(\mathbf{b}_{t},\varepsilon_{t+\kappa}) , \mathbf{b}_t- \mathbf{b}_{t+\kappa}\rangle -D_h (\mathbf{p}_{t+\kappa},\mathbf{p}_{t}).$$
Together with the upper bound of \uppercase\expandafter{\romannumeral1}, it follows that 
\begin{align*}
    &\uppercase\expandafter{\romannumeral1} +\uppercase\expandafter{\romannumeral2} \\
    & \leq \sum_{t=1}^{T-\kappa} \langle\nabla \varphi(\mathbf{b}_t,\varepsilon_{t+\kappa}) -\nabla \Phi(\mathbf{b}_t),\mathbf{b}^\ast-\mathbf{b}_t \rangle + \sum_{t=1}^{T-\kappa} \langle\nabla \varphi(\mathbf{b}_{t},\varepsilon_{t+\kappa}), \mathbf{b}_t- \mathbf{b}_{t+\kappa}\rangle- \sum_{t=1}^{T-\kappa}D_h (\mathbf{p}_{t+\kappa},\mathbf{p}_{t}) +\log MN \\
    &= \sum_{t=1}^{T-\kappa} \langle\nabla \varphi(\mathbf{b}_t,\varepsilon_{t+\kappa}) -\nabla \Phi(\mathbf{b}_t),\mathbf{b}^\ast-\mathbf{b}_t \rangle + \sum_{t=1}^{T-\kappa} \langle\nabla \varphi(\mathbf{b}_{t},\varepsilon_{t+\kappa}), \mathbf{b}_t- \mathbf{b}_{t+\kappa}\rangle \\
    &-\sum_{t=1}^{T-\kappa}  \langle\nabla \Phi(\mathbf{b}_t),\mathbf{b}^\ast-\mathbf{b}_{t+\kappa} \rangle +\sum_{t=1}^{T-\kappa}  \langle\nabla \Phi(\mathbf{b}_t),\mathbf{b}^\ast-\mathbf{b}_{t+\kappa} \rangle-\sum_{t=1}^{T-\kappa}D_h (\mathbf{p}_{t+\kappa},\mathbf{p}_{t})+\log MN\\
    &=\sum_{t=1}^{T-\kappa} \langle\nabla \varphi(\mathbf{b}_t,\varepsilon_{t+\kappa}) -\nabla \Phi(\mathbf{b}_t),\mathbf{b}^\ast-\mathbf{b}_{t+\kappa} \rangle + \sum_{t=1}^{T-\kappa}\langle\nabla \Phi(\mathbf{b}_t),\mathbf{b}_t-\mathbf{b}_{t+\kappa} \rangle-\sum_{t=1}^{T-\kappa}D_h (\mathbf{p}_{t+\kappa},\mathbf{p}_{t})+\log MN\\
    & \leq \sum_{t=1}^{T-\kappa} \Vert \nabla \varphi(\mathbf{b}_t,\varepsilon_{t+\kappa}) -\nabla \Phi(\mathbf{b}_t) \Vert \Vert\mathbf{b}^\ast-\mathbf{b}_{t+\kappa} \Vert + \sum_{t=1}^{T-\kappa} \Phi(\mathbf{b}_t) -\sum_{t=1}^{T-\kappa} \Phi  (\mathbf{b}_{t+\kappa}) +\log MN \\
    & \leq 2(T-\kappa)\delta + \sum_{t=1}^\kappa \Phi(\mathbf{b}_t) -\sum_{t=T-\kappa+1}^T  \Phi(\mathbf{b}_t) +\log MN,
\end{align*}
where the second inequality follows the Cauchy-Schwarz inequality in $\ell_1$-norm (i.e., $\vert \mathbf{x},\mathbf{y}\vert \leq \Vert \mathbf{x} \Vert \Vert \mathbf{y} \Vert$); note that the diameter of an unit simplex in $\ell_1$-norm is 2 such that $\Vert \mathbf{b}^\ast- \mathbf{b}_{t+\kappa}\Vert \leq 2$ and $\Phi(\mathbf{b}_{t})+\langle \nabla\Phi (\mathbf{b}_t), \mathbf{b}_{t+\kappa}-\mathbf{b}_t\rangle+D_h(\mathbf{p}_{t+\kappa},\mathbf{p}_t) =\Phi(\mathbf{b}_{t+\kappa})$, we obtain the last inequality. %At first, $\varepsilon_{t+\kappa}$  is almost independent of $\mathcal{H}_{t-1}$ and $\nabla \Phi(\mathbf{b})$ is the gradient of the expected value of $\varphi(\mathbf{b},\varepsilon)$ under the stationary distribution, thus
Add $\uppercase\expandafter{\romannumeral1}$, $\uppercase\expandafter{\romannumeral 2}$ and $\uppercase\expandafter{\romannumeral 3}$, we get 
\begin{align*}
&\uppercase\expandafter{\romannumeral1}+\uppercase\expandafter{\romannumeral2} +\uppercase\expandafter{\romannumeral3} \\
    & \leq \sum_{t=T-\kappa+1}^T \Phi(\mathbf{b}_t) -\sum_{t=T-\kappa+1}^T \Phi(\mathbf{b}^\ast)-\sum_{t=1}^\kappa \varphi(\mathbf{b}_t,\varepsilon_t) +\sum_{t=1}^\kappa \varphi(\mathbf{b}^\ast,\varepsilon_t) \\
    &+ \sum_{t=1}^\kappa \Phi(\mathbf{b}_t) -\sum_{t=T-\kappa+1}^T  \Phi(\mathbf{b}_t) +2(T-\kappa)\delta +\log MN \\
    &= \sum_{t=1}^\kappa \Phi(\mathbf{b}_t) -\sum_{t=T-\kappa+1}^T \Phi(\mathbf{b}^\ast)-\sum_{t=1}^\kappa \varphi(\mathbf{b}_t,\varepsilon_t) +\sum_{t=1}^\kappa \varphi(\mathbf{b}^\ast,\varepsilon_t)+2(T-\kappa)\delta+ \log MN \\
    &=\sum_{t=1}^\kappa \Phi(\mathbf{b}_t) -\sum_{t=1}^\kappa \Phi(\mathbf{b}^\ast)-\sum_{t=1}^\kappa \varphi(\mathbf{b}_t,\varepsilon_t) +\sum_{t=1}^\kappa \varphi(\mathbf{b}^\ast,\varepsilon_t)+2(T-\kappa)\delta+ \log MN \\
    &\leq 2(T-\kappa)\delta+ 2\kappa \overline{\varphi} +\log MN, 
\end{align*}
where the second equality comes from the fact that $\sum_{t=T-\kappa+1}^T \Phi(\mathbf{b}^\ast)=\sum_{t=1}^\kappa \Phi(\mathbf{b}^\ast)$, the last inequality uses the assumption of $\overline{\varphi}$. 
Therefore, we obtain 
$$\sum_{t=1}^T\Phi(\mathbf{b}_t)-\sum_{t=1}^T\Phi(\mathbf{b}^\ast) \leq 2(T-\kappa)\delta+2\kappa \overline{\varphi}+ \log MN.$$
Using the convexity of $\Phi$, we obtain that 
$$T\mathbb{E}\left[\Phi(\mathbf{\hat{b}})-\Phi(\mathbf{b}^\ast) \right]\leq \mathbb{E}\left[\sum_{t=1}^T\Phi(\mathbf{b}_t)-\sum_{t=1}^T\Phi(\mathbf{b}^\ast) \right]\leq 2(T-\kappa)\delta+2\kappa \overline{\varphi}+ \log MN,$$
Let $\mathbf{\hat{p}}=\frac{1}{T}\sum_{t=1}^T \mathbf{p}_t$, then $$\hat{p}_j =\frac{1}{T}\sum_{t=1}^T p_{j,t}=\frac{1}{T}\sum_{t=1}^T \sum_{i\in \mathcal{N}}  b_{ij,t}= \sum_{i\in \mathcal{N}} \hat{b}_{ij}, $$
it leads to
$$\mathbb{E}\left[ \Vert\mathbf{\hat{p}}-\mathbf{p}^\ast \Vert^2\right] \leq 2 D_h(\mathbf{\hat{p}},\mathbf{p}^\ast) \leq 2\mathbb{E}\left[\Phi(\mathbf{\hat{b}})-\Phi(\mathbf{b}^\ast) \right] \leq \frac{2 (2(T-\kappa)\delta+2\kappa \overline{\varphi}+ \log MN)}{T}. $$
\textbf{Individual's regret}:
According to Proposition~\ref{generic_upper_bound},
\begin{align*} \label{regret_nonstationary_Ergodic}
   \sum_{t=1}^T \mathbb{E}[ u_i^{\ast}-u_{i,t} ]&= T\mathbb{E}[ u_i^{\ast}- \hat{u}_i ]\\
   &\leq \frac{\sqrt{2}\overline{v}T}{\underline{p}}\mathbb{E}\left[\sqrt{\Phi({\mathbf{\hat{b}}})-\Phi(\mathbf{b}^\ast)}\right] \\
   &\leq \frac{\sqrt{2}\overline{v}T}{\underline{p}}\sqrt{ \frac{2(T-\kappa)\delta+2\kappa \overline{\varphi}+ \log MN}{T}}\\
   &=  \frac{\overline{v}}{\underline{p}} \sqrt{4(T-\kappa)\delta+4\kappa \overline{\varphi}+ 2\log MN} \sqrt{T}.
\end{align*}

%For the envy, the equation (\ref{envybound}) becomes, 
%\begin{align}\label{envy_nonstationary_Ergodic}
 %    \sum_{t=1}^T\max_k \mathbb{E}\left[\sum_{j \in \mathcal{M}} v_{ij,t}x_{kj,t}-u_{i,t}\right] &\leq  \sum_{t=1}^T \mathbb{E}[ u_i^{\ast}-u_{i,t} ]+ \sum_{t=1}^T  \overline{u}_i\Vert \mathbf{p}^{\ast}-\mathbf{p}_{t}\Vert)]\notag\\
%&\leq \frac{2\overline{v}}{\underline{p}} \sqrt{2(T-\kappa)\delta+4\kappa \overline{\varphi}+4(T-\kappa)\overline{\varphi}\delta+ 2\log MN)} \sqrt{T}.
%\end{align}
\end{proof}
\subsection{Proofs of Theorem \ref{nonstationary_Periodic}} \label{appendix5.3}
\begin{proof}
Fix a partition $q$, we can derive that 
\begin{align*}
    &\sum_{t=t_q}^{t_{q+1}-1}\Phi(\mathbf{b}_t) -\sum_{t=t_q}^{t_{q+1}-1}\Phi(\mathbf{b}^\ast)\\
    &=\underbrace{\sum_{t=t_q}^{t_{q+1}-1}\Phi(\mathbf{b}_t)- \sum_{t=t_q}^{t_{q+1}-1}\Phi(\mathbf{b}_{t_q})+\sum_{t=t_q}^{t_{q+1}-1}\varphi(\mathbf{b}_{t_q},\varepsilon_t)-\sum_{t=t_q}^{t_{q+1}-1}\varphi(\mathbf{b}_{t},\varepsilon_t)}_{\uppercase\expandafter{\romannumeral1}} \\
    &+\underbrace{\sum_{t=t_q}^{t_{q+1}-1}\Phi(\mathbf{b}_{t_q})- \sum_{t=t_q}^{t_{q+1}-1}\varphi(\mathbf{b}_{t_q},\varepsilon_t) +\sum_{t=t_q}^{t_{q+1}-1} \varphi (\mathbf{b}^\ast,\varepsilon_t)-\sum_{t=t_q}^{t_{q+1}-1}\Phi(\mathbf{b}^\ast)}_{\uppercase\expandafter{\romannumeral2}} \\
    &+\underbrace{\sum_{t=t_q}^{t_{q+1}-1}\varphi(\mathbf{b}_{t},\varepsilon_t)-\sum_{t=t_q}^{t_{q+1}-1}\varphi(\mathbf{b}^\ast,\varepsilon_t)}_{\uppercase\expandafter{\romannumeral3}}
\end{align*}
The noises $\varepsilon_t$ within
the same partition can be arbitrarily correlated but partitions are independently and identically distributed. 
Given the history $\mathcal{H}_{t_q-1}$,  the bid  $\mathbf{b}_{t_q}$ is independent of $\mathcal{H}_{t_q-1}$ because the joint distribution of each partition is independent with each other, thus, 
$$\vert  \nabla\Phi(\mathbf{b}_{t_q})-\nabla \varphi(\mathbf{b}_{t_q},\varepsilon_{t}) \vert \leq \Vert \overline{\mathcal{P}} -\mathcal{P}_{t}  \Vert_{TV}=\delta.$$
%Fix a partition $q$, we have that 
%\begin{align*}
 %    &\sum_{t=t_q}^{t_{q+1}-1}\Phi(\mathbf{b}_t)\\  =&\underbrace{\sum_{t=t_q}^{t_{q+1}-1}\Phi(\mathbf{b}_t)- \sum_{t=t_q}^{t_{q+1}-1}\Phi(\mathbf{b}_{t_q})+\sum_{t=t_q}^{t_{q+1}-1}\varphi(\mathbf{b}_{t_q},\varepsilon_t)-\sum_{t=t_q}^{t_{q+1}-1}\varphi(\mathbf{b}_{t},\varepsilon_t)}_{\uppercase\expandafter{\romannumeral1}} \\
  %   +& \underbrace{\sum_{t=t_q}^{t_{q+1}-1}\Phi(\mathbf{b}_{t_q})- \sum_{t=t_q}^{t_{q+1}-1}\varphi(\mathbf{b}_{t_q},\varepsilon_t)}_{\uppercase\expandafter{\romannumeral2}}+\sum_{t=t_q}^{t_{q+1}-1}\varphi(\mathbf{b}_{t},\varepsilon_t)
%\end{align*}
Bounding \uppercase\expandafter{\romannumeral1}+\uppercase\expandafter{\romannumeral2}: According to Lemma \ref{lemma3.1}, we have that, 
%$$\Phi(\mathbf{b}_t)=\Phi(\mathbf{b}_{t_q})+\langle \nabla \Phi(\mathbf{b}_{t_q}),\mathbf{b}_t-\mathbf{b}_{t_q}\rangle+D_h(\mathbf{p}_t,\mathbf{p}_{t_q}),$$
%therefore, 
$$\Phi(\mathbf{b}_t)-\Phi(\mathbf{b}_{t_q})=\langle \nabla \Phi(\mathbf{b}_{t_q}),\mathbf{b}_t-\mathbf{b}_{t_q}\rangle+D_h(\mathbf{p}_t,\mathbf{p}_{t_q}) $$
%\begin{align*}
 %  &\Phi(\mathbf{b}_t)-\Phi(\mathbf{b}_{t_q})\\
  % =&\langle \nabla \Phi(\mathbf{b}_{t_q}),\mathbf{b}_t-\mathbf{b}_{t_q}\rangle+D_h(\mathbf{p}_t,\mathbf{p}_{t_q}) \\
   %=& \langle \nabla \Phi(\mathbf{b}_{t_q}),\mathbf{b}_t-\mathbf{b}_{t_q}\rangle+D_h(\mathbf{p}_t,\mathbf{p}_{t_q})-\langle \nabla \varphi(\mathbf{b}_{t_q},\varepsilon_t),\mathbf{b}_t-\mathbf{b}_{t_q}\rangle+\langle \nabla \varphi(\mathbf{b}_{t_q},\varepsilon_t),\mathbf{b}_t-\mathbf{b}_{t_q}\rangle.
%\end{align*}
Similarly, 
%$$\varphi(\mathbf{b}_{t_q},\varepsilon_t)-\varphi(\mathbf{b}_{t},\varepsilon_t)=\langle \nabla \varphi(\mathbf{b}_{t},\varepsilon_t),\mathbf{b}_{t_q}-\mathbf{b}_t\rangle+D_h(\mathbf{p}_{t_q},\mathbf{p}_t).$$
$$\varphi(\mathbf{b}_{t_q},\varepsilon_t)-\varphi(\mathbf{b}_{t},\varepsilon_t)=\langle \nabla \varphi(\mathbf{b}_{t_q},\varepsilon_t),\mathbf{b}_{t_q}-\mathbf{b}_t\rangle-D_h(\mathbf{p}_{t},\mathbf{p}_{t_q}).$$
Thus, 
$$ \Phi(\mathbf{b}_t)-\Phi(\mathbf{b}_{t_q})+\varphi(\mathbf{b}_{t_q},\varepsilon_t)-\varphi(\mathbf{b}_{t},\varepsilon_t)=\langle \nabla \Phi(\mathbf{b}_{t_q})-\nabla \varphi(\mathbf{b}_{t_q},\varepsilon_t),\mathbf{b}_t-\mathbf{b}_{t_q}\rangle.$$
%Bounding \uppercase\expandafter{\romannumeral2}:
%Given the history $\mathcal{H}_{t_q-1}$,  the bid  $\mathbf{b}_{t_q}$ is independent of $\mathcal{H}_{t_q-1}$ because the joint distribution of each partition is independent with each other, thus,
%\begin{align*}
%&\left\vert\mathbb{E}\left[ \mathbb{E}[\varphi(\mathbf{b}_{t_q},\varepsilon_{t})]-\varphi(\mathbf{b}_{t_q},\varepsilon_{t})\vert \mathcal{H}_{t_q-1} ] \right]\right\vert \\=&\left\vert\mathbb{E}\left[ \mathbb{E} \left[\int\varphi(\mathbf{b}_{t_q},\varepsilon)\overline{\mathcal{P}}d(\varepsilon) - \int\varphi(\mathbf{b}_{t_q},\varepsilon)\mathcal{P}_{t}(d \varepsilon) \vert \mathcal{H}_{t_q-1}\right] \right]\right\vert \\
%\leq &\mathbb{E} \left[\mathbb{E} \left[ \left\vert\int\varphi(\mathbf{b}_{t_q},\varepsilon)\overline{\mathcal{P}}d(\varepsilon) - \int\varphi(\mathbf{b}_{t_q},\varepsilon)\mathcal{P}_{t}(d \varepsilon) \right\vert \vert \mathcal{H}_{t_q-1} \right] \right]\\
 %   \leq& \overline{\varphi} \mathbb{E} \left[\int \vert d\overline{\mathcal{P}}- d\mathcal{P}_{t} \vert \right]\\
   % \leq &  \overline{\varphi} \delta.
%\end{align*}
%Hence, $\mathbb{E}[\uppercase\expandafter{\romannumeral2}] \leq I_q \overline{\varphi} \delta_q.$\\
 We also have that 
\begin{align*}
   &\Phi(\mathbf{b}_{t_q})-\Phi(\mathbf{b}^\ast)+\varphi(\mathbf{b}^\ast,\varepsilon_t)-\varphi(\mathbf{b}_{t_q},\varepsilon_t) \\
   &=\langle \nabla \Phi(\mathbf{b}_{t_q}),\mathbf{b}_{t_q}-\mathbf{b}^\ast \rangle- D_h(\mathbf{b}^\ast, \mathbf{b}_{t_q}) +\langle \nabla \varphi(\mathbf{b}_{t_q},\varepsilon_t), \mathbf{b}^\ast-\mathbf{b}_{t_q} \rangle+D_h(\mathbf{b}^\ast, \mathbf{b}_{t_q}) \\
   & = \langle \nabla \Phi(\mathbf{b}_{t_q})-\nabla \varphi(\mathbf{b}_{t_q},\varepsilon_t),\mathbf{b}_{t_q}-\mathbf{b}^\ast \rangle. 
\end{align*}
Thus, we obtain that 
\begin{align*}
    &\Phi(\mathbf{b}_t)-\Phi(\mathbf{b}_{t_q})+\varphi(\mathbf{b}_{t_q},\varepsilon_t)-\varphi(\mathbf{b}_{t},\varepsilon_t)+\Phi(\mathbf{b}_{t_q})-\Phi(\mathbf{b}^\ast)+\varphi(\mathbf{b}^\ast,\varepsilon_t)-\varphi(\mathbf{b}_{t_q},\varepsilon_t) \\
   &= \langle \nabla \Phi(\mathbf{b}_{t_q})-\nabla \varphi(\mathbf{b}_{t_q},\varepsilon_t),\mathbf{b}_t-\mathbf{b}_{t_q}\rangle +\langle \nabla \Phi(\mathbf{b}_{t_q})-\nabla \varphi(\mathbf{b}_{t_q},\varepsilon_t),\mathbf{b}_{t_q}-\mathbf{b}^\ast \rangle \\
   &= \langle \nabla \Phi(\mathbf{b}_{t_q})-\nabla \varphi(\mathbf{b}_{t_q},\varepsilon_t),\mathbf{b}_t -\mathbf{b}^\ast \rangle \\
   & \leq \Vert \nabla  \Phi(\mathbf{b}_{t_q})-\nabla \varphi(\mathbf{b}_{t_q},\varepsilon_t) \Vert \Vert \mathbf{b}_t -\mathbf{b}^\ast \Vert \\
   & \leq 2\delta,
\end{align*}
where the first inequality follows the Cauchy-Schwarz inequality in $\ell_1$ norm (i.e., $\vert \mathbf{x},\mathbf{y}\vert \leq \Vert \mathbf{x} \Vert \Vert \mathbf{y} \Vert$) and the  fact that the diameter of an unit simplex in $\ell_1$ norm is 2 such that $\Vert \mathbf{b}_t- \mathbf{b}^\ast\Vert \leq 2$.  Summing over $t=t_q,\ldots,t_{q+1}-1$, we have that 
$$\uppercase\expandafter{\romannumeral1} +\uppercase\expandafter{\romannumeral2} \leq 2I_q\delta.$$
Therefore, we obtain that 
 $\mathbb{E}[\uppercase\expandafter{\romannumeral1}+\uppercase\expandafter{\romannumeral2}] \leq 2I_q \delta$. \\

Combing the bounds of \uppercase\expandafter{\romannumeral1} , \uppercase\expandafter{\romannumeral2} and  \uppercase\expandafter{\romannumeral3}, we obtain
$$\sum_{t=1}^T\Phi(\mathbf{b}_t) -\sum_{t=1}^T\Phi(\mathbf{b}^\ast) = 2\delta T +\log MN, $$
where the upper bound of \uppercase\expandafter{\romannumeral3} follows 
the result in  Proposition \ref{adverserial_regret}.  
%$$\sum_{t=t_q}^{t_{q+1}-1}\Phi(\mathbf{b}_t) -\sum_{t=t_q}^{t_{q+1}-1}\Phi(\mathbf{b}_t) $$
%$$\mathbb{E}\left[\sum_{t=t_q}^{t_{q+1}-1}\Phi(\mathbf{b}_t) \right]\leq I_q (2\delta+\overline{\varphi} \delta) + \sum_{t=t_q}^{t_{q+1}-1}\varphi(\mathbf{b}_{t},\varepsilon_t)$$
%Since $\sum_{t=t_q}^{t_{q+1}-1}\Phi(\mathbf{b}^\ast)\geq \sum_{t=t_q}^{t_{q+1}-1}\varphi(\mathbf{b}^\ast,\varepsilon_t)-I_q\overline{\varphi}\delta$, sum over $q=1,\ldots,Q$,
%we obtain that
%$$\mathbb{E}\left[\sum_{t=1}^T\Phi(\mathbf{b}_t)-\sum_{t=1}^T\Phi(\mathbf{b}^\ast) \right]\leq \log MN+ 2\delta T+\overline{\varphi}\delta T.$$
Using the convexity of $\Phi$, we obtain that 
$$T\mathbb{E}\left[\Phi(\mathbf{\hat{b}})-\Phi(\mathbf{b}^\ast) \right]\leq \mathbb{E}\left[\sum_{t=1}^T\Phi(\mathbf{b}_t)-\sum_{t=1}^T\Phi(\mathbf{b}^\ast) \right]\leq \log MN+ 2\delta T.$$
Let $\mathbf{\hat{p}}=\frac{1}{T}\sum_{t=1}^T \mathbf{p}_t$, then $$\hat{p}_j =\frac{1}{T}\sum_{t=1}^T p_{j,t}=\frac{1}{T}\sum_{t=1}^T\sum_{i\in \mathcal{N}}  b_{ij,t}= \sum_{i\in \mathcal{N}}  \hat{b}_{ij}, $$
it leads to
$$\mathbb{E}\left[ \Vert\mathbf{\hat{p}}-\mathbf{p}^\ast \Vert^2\right] \leq 2 D_h(\mathbf{\hat{p}},\mathbf{p}^\ast) \leq 2\mathbb{E}\left[\Phi(\mathbf{\hat{b}})-\Phi(\mathbf{b}^\ast) \right] \leq \frac{2 \log MN+ 4\delta T}{T}. $$
\textbf{Individual's regret}:
According to Proposition~\ref{generic_upper_bound},
\begin{align*} %\label{regret_nonstationary_Ergodic}
   \sum_{t=1}^T \mathbb{E}[ u_i^{\ast}-u_{i,t} ]&= T\mathbb{E}[ u_i^{\ast}- \hat{u}_i ]\\
   &\leq \frac{\sqrt{2}\overline{v}T}{\underline{p}}\mathbb{E}\left[\sqrt{\Phi({\mathbf{\hat{b}}})-\Phi(\mathbf{b}^\ast)}\right] \\
   &\leq \frac{\sqrt{2}\overline{v}T}{\underline{p}}\sqrt{ \frac{2\delta T+\log MN}{T}}\\
   &=  \frac{\overline{v}}{\underline{p}} \sqrt{4\delta T+2\log MN} \sqrt{T}.
\end{align*}

%For the envy, the equation (\ref{envybound}) becomes, 
%\begin{align}\label{envy_nonstationary_Ergodic}
 %    \sum_{t=1}^T\max_k \mathbb{E}\left[\sum_{j \in \mathcal{M}} v_{ij,t}x_{kj,t}-u_{i,t}\right] &\leq  \sum_{t=1}^T \mathbb{E}[ u_i^{\ast}-u_{i,t} ]+ \sum_{t=1}^T  \overline{u}_i\Vert \mathbf{p}^{\ast}-\mathbf{p}_{t}\Vert)]\notag\\
%&\leq \frac{2\overline{v}}{\underline{p}} \sqrt{2\delta T+2\overline{\varphi}\delta T+2\log MN} \sqrt{T}.
%\end{align}
\end{proof}
\section{Supplementary Lemmas}
Here we restate the update rule (\ref{OMDupdate}) for ease of reading
$$\mathbf{b}_{i,t}=\argmin_{\sum_{j \in \mathcal{M}} b_{ij,t}=B_i} \left\lbrace  \langle \nabla \varphi(\mathbf{b}_{t-1},\varepsilon_t),\mathbf{b}_{i} {-\mathbf{b}_{i,t-1}}\rangle+\sum_{j \in \mathcal{M}} b_{ij}\log \left(\frac{b_{ij}}{b_{ij,t-1}}\right)\right\rbrace. $$
\begin{lemma}\label{pythagorean}
    For any $t$, if $\mathbf{b}_t$ is a minimum of (\ref{OMDupdate}), then for any $\mathbf{b}^\ast$
    $$\langle \nabla \varphi(\mathbf{b}_{t-1},\varepsilon_t), \mathbf{b}_t-\mathbf{b}_{t-1}\rangle+D_h(\mathbf{b}_t,\mathbf{b}_{t-1})\leq \langle \nabla \varphi(\mathbf{b}_{t-1},\varepsilon_t), \mathbf{b}^\ast-\mathbf{b}_{t-1}\rangle+D_h(\mathbf{b}^\ast,\mathbf{b}_{t-1})-D_h(\mathbf{b}^\ast,\mathbf{b}_{t}). $$
\end{lemma}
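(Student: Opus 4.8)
The plan is to recognize this as the standard ``generalized Pythagorean'' inequality for a single mirror-descent step, and to prove it from the first-order optimality condition of the minimization defining $\mathbf{b}_t$ combined with the three-point identity stated earlier in the excerpt. The computations involved are entirely routine; the content of the lemma is just the interplay between optimality of $\mathbf{b}_t$ and the Bregman geometry.

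First I would set $\Theta(\mathbf{b}) := \langle \nabla\varphi(\mathbf{b}_{t-1},\varepsilon_t), \mathbf{b}-\mathbf{b}_{t-1}\rangle + D_h(\mathbf{b},\mathbf{b}_{t-1})$ for the (strictly convex, inheriting strict convexity from $h$) objective minimized over the convex feasible set $K$. Since $\nabla_{\mathbf{b}} D_h(\mathbf{b},\mathbf{b}_{t-1}) = \nabla h(\mathbf{b}) - \nabla h(\mathbf{b}_{t-1})$, the variational characterization of the minimizer $\mathbf{b}_t$, namely $\langle \nabla\Theta(\mathbf{b}_t), \mathbf{b}^\ast - \mathbf{b}_t\rangle \geq 0$ for every feasible $\mathbf{b}^\ast$, reads
$$\langle \nabla\varphi(\mathbf{b}_{t-1},\varepsilon_t) + \nabla h(\mathbf{b}_t) - \nabla h(\mathbf{b}_{t-1}),\, \mathbf{b}^\ast - \mathbf{b}_t\rangle \geq 0,$$
i.e. $\langle \nabla\varphi(\mathbf{b}_{t-1},\varepsilon_t),\, \mathbf{b}^\ast - \mathbf{b}_t\rangle \geq \langle \nabla h(\mathbf{b}_t) - \nabla h(\mathbf{b}_{t-1}),\, \mathbf{b}_t - \mathbf{b}^\ast\rangle$. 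Next I would eliminate the $\nabla h$ terms using the three-point identity with $(\mathbf{z},\mathbf{p},\mathbf{q}) = (\mathbf{b}^\ast,\mathbf{b}_t,\mathbf{b}_{t-1})$, which gives
$$\langle \nabla h(\mathbf{b}_t) - \nabla h(\mathbf{b}_{t-1}),\, \mathbf{b}_t - \mathbf{b}^\ast\rangle = D_h(\mathbf{b}^\ast,\mathbf{b}_t) + D_h(\mathbf{b}_t,\mathbf{b}_{t-1}) - D_h(\mathbf{b}^\ast,\mathbf{b}_{t-1}),$$
so that $\langle \nabla\varphi(\mathbf{b}_{t-1},\varepsilon_t),\, \mathbf{b}^\ast - \mathbf{b}_t\rangle \geq D_h(\mathbf{b}^\ast,\mathbf{b}_t) + D_h(\mathbf{b}_t,\mathbf{b}_{t-1}) - D_h(\mathbf{b}^\ast,\mathbf{b}_{t-1})$. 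Finally I would rewrite the target inequality as a single sign claim: its right-hand side minus its left-hand side equals exactly $\langle \nabla\varphi(\mathbf{b}_{t-1},\varepsilon_t),\, \mathbf{b}^\ast-\mathbf{b}_t\rangle + D_h(\mathbf{b}^\ast,\mathbf{b}_{t-1}) - D_h(\mathbf{b}^\ast,\mathbf{b}_t) - D_h(\mathbf{b}_t,\mathbf{b}_{t-1})$, and substituting the lower bound just derived makes this difference $\geq 0$, which is precisely the assertion.

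The only genuine obstacle is justifying the first-order optimality condition rather than the algebra around it: $\Theta$ is minimized over the affine-plus-nonnegativity constraint set $K$, and the entropic term $\sum_{i,j} b_{ij}\log(b_{ij}/b_{ij,t-1})$ has a gradient that blows up as any coordinate approaches zero. I would therefore argue first that the minimizer lies in the relative interior (all $b_{ij,t}>0$, which is guaranteed by Assumption~\ref{assumption2} forcing $v_{ij,t}>0$ and hence strictly positive bids through the closed-form update), so that $\Theta$ is differentiable at $\mathbf{b}_t$ and the inequality $\langle \nabla\Theta(\mathbf{b}_t), \mathbf{b}^\ast-\mathbf{b}_t\rangle\geq 0$ is legitimate. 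I would also keep careful track of signs when invoking the three-point identity, since $D_h$ is not symmetric and a single transposition would flip the direction of the final bound.
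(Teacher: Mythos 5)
Your proof is correct and follows essentially the same route as the paper's: the first-order optimality condition for the constrained minimizer, combined with the three-point identity applied to $(\mathbf{z},\mathbf{p},\mathbf{q})=(\mathbf{b}^\ast,\mathbf{b}_t,\mathbf{b}_{t-1})$, and a final rearrangement of terms. Your additional care in justifying the variational inequality---noting that the minimizer lies in the relative interior because the entropic term forces strictly positive bids---is a refinement that the paper's proof silently assumes rather than a departure from it.
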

\proof{This lemma is a quite standard property for mirror descent, see \cite{chen1993convergence, beck2003mirror}, etc. We present the proofs here for the readers to better understand this property in an online mirror descent version.
Since $\mathbf{b}_t$ is the minimum of (\ref{OMDupdate}), using the optimality condition,
$$\langle \nabla \varphi(\mathbf{b}_{t-1},\varepsilon_t)+\nabla h(\mathbf{b}_t)- \nabla h(\mathbf{b}_{t-1}),\mathbf{b}^\ast-\mathbf{b}_t\rangle \geq 0.$$
Given the ``three-point identity" for the Bregman Divergences, 
$$D_h(c,a)+D_h(a,b)-D_h(c,b)=\langle\nabla h(b)-\nabla h(a),c-a\rangle,$$
we replace  $a=\mathbf{b}_t$, $b=\mathbf{b}_{t-1}$, $c=\mathbf{b}^\ast$ and obtain that 
\begin{equation}\label{threepoint}
    \langle \nabla \varphi(\mathbf{b}_{t-1},\varepsilon_t),\mathbf{b}^\ast-\mathbf{b}_t \rangle \geq  D_h(\mathbf{b}_t,\mathbf{b}_{t-1})+D_h(\mathbf{b}^\ast,\mathbf{b}_{t})-D_h(\mathbf{b}^\ast,\mathbf{b}_{t-1}).
\end{equation}
%Define the first order approximation of $\varphi(\cdot,\varepsilon^t)$ as 
%$$\ell_\varphi(\mathbf{p};\mathbf{q},\varepsilon^t)=\varphi(\mathbf{q},\varepsilon^t)+\langle\nabla \varphi(\mathbf{q},\varepsilon^t),\mathbf{p}-\mathbf{q}\rangle.$$
Rearranging (\ref{threepoint}) leads to
$$\langle  \nabla \varphi(\mathbf{b}_{t-1},\varepsilon_t),\mathbf{b}_t\rangle + D_h(\mathbf{b}_t,\mathbf{b}_{t-1}) \leq  \langle  \nabla \varphi(\mathbf{b}_{t-1},\varepsilon_t),\mathbf{b}^\ast \rangle + D_h(\mathbf{b}^\ast,\mathbf{b}_{t-1})- D_h(\mathbf{b}^\ast,\mathbf{b}_{t}).$$
By subtracting $ \langle  \nabla \varphi(\mathbf{b}_{t-1},\varepsilon_t),\mathbf{b}_{t-1} \rangle $ on both sides, we obtain that 
 $$\langle \nabla \varphi(\mathbf{b}_{t-1},\varepsilon_t), \mathbf{b}_t-\mathbf{b}_{t-1}\rangle+D_h(\mathbf{b}_t,\mathbf{b}_{t-1})\leq \langle \nabla \varphi(\mathbf{b}_{t-1},\varepsilon_t), \mathbf{b}^\ast-\mathbf{b}_{t-1}\rangle+D_h(\mathbf{b}^\ast,\mathbf{b}_{t-1})-D_h(\mathbf{b}^\ast,\mathbf{b}_{t}). $$
}
\begin{proposition} \label{adverserial_regret}
If every buyer $i \in \mathcal{N}$ uses (\ref{OMDupdate}) to update bids, then   for any feasible $\mathbf{b}^\ast$,  we have
\begin{equation}\label{objective_diff}
    \varphi(\mathbf{b}_t,\varepsilon_t)-\varphi(\mathbf{b}^\ast,\varepsilon_t) \leq  D_h(\mathbf{b}^\ast,\mathbf{b}_{t-1})-D_h(\mathbf{b}^\ast,\mathbf{b}_{t}).
\end{equation}
Summing over $t=1,\ldots,T$, we obtain that 
\begin{equation}\label{staticregret}
    \sum_{t=1}^T \varphi(\mathbf{b}_t, \varepsilon_t)- \sum_{t=1}^T\varphi(\mathbf{b}^\ast,\varepsilon_t) \leq  D_h(\mathbf{b}^\ast,\mathbf{b}_{1}).
\end{equation}
\end{proposition}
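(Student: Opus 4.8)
The plan is to combine three ingredients already available in the excerpt: the relative smoothness inequality (\ref{relatives_smoothness}), the convexity of $\varphi(\cdot,\varepsilon_t)$, and the optimality estimate of Lemma~\ref{pythagorean}. The key feature of the argument is that it never invokes a bound on the gradient norm (which is impossible here, since $\varphi$ fails to be Lipschitz continuous); instead, all the work is carried by the Bregman-divergence terms, which eventually telescope.

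First I would fix a period $t$ and start from the relative smoothness bound in (\ref{relatives_smoothness}), expanding the linearization:
$$\varphi(\mathbf{b}_t,\varepsilon_t) \leq \ell_\varphi(\mathbf{b}_t,\mathbf{b}_{t-1},\varepsilon_t) + D_h(\mathbf{b}_t,\mathbf{b}_{t-1}) = \varphi(\mathbf{b}_{t-1},\varepsilon_t) + \langle \nabla\varphi(\mathbf{b}_{t-1},\varepsilon_t), \mathbf{b}_t-\mathbf{b}_{t-1}\rangle + D_h(\mathbf{b}_t,\mathbf{b}_{t-1}).$$
The crucial observation is that the last two terms are exactly the left-hand side of Lemma~\ref{pythagorean}. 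Substituting that lemma's conclusion yields
$$\varphi(\mathbf{b}_t,\varepsilon_t) \leq \varphi(\mathbf{b}_{t-1},\varepsilon_t) + \langle \nabla\varphi(\mathbf{b}_{t-1},\varepsilon_t), \mathbf{b}^\ast - \mathbf{b}_{t-1}\rangle + D_h(\mathbf{b}^\ast,\mathbf{b}_{t-1}) - D_h(\mathbf{b}^\ast,\mathbf{b}_t).$$

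Next I would recognize that the first two terms on the right form $\ell_\varphi(\mathbf{b}^\ast,\mathbf{b}_{t-1},\varepsilon_t)$, the first-order approximation of $\varphi(\cdot,\varepsilon_t)$ at $\mathbf{b}^\ast$ taken around $\mathbf{b}_{t-1}$. Since $\varphi(\cdot,\varepsilon_t)$ is convex, this linearization underestimates the function, so $\ell_\varphi(\mathbf{b}^\ast,\mathbf{b}_{t-1},\varepsilon_t) \leq \varphi(\mathbf{b}^\ast,\varepsilon_t)$. This gives the per-period inequality (\ref{objective_diff}),
$$\varphi(\mathbf{b}_t,\varepsilon_t) - \varphi(\mathbf{b}^\ast,\varepsilon_t) \leq D_h(\mathbf{b}^\ast,\mathbf{b}_{t-1}) - D_h(\mathbf{b}^\ast,\mathbf{b}_t).$$
Finally, summing over $t$ (with the updates running for $t \geq 2$) makes the right-hand side telescope to $D_h(\mathbf{b}^\ast,\mathbf{b}_1) - D_h(\mathbf{b}^\ast,\mathbf{b}_T)$, and dropping the nonnegative term $D_h(\mathbf{b}^\ast,\mathbf{b}_T)\geq 0$ delivers (\ref{staticregret}).

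The main obstacle is conceptual rather than computational: in the standard online mirror descent analysis one controls the linearized regret $\langle \nabla\varphi(\mathbf{b}_{t-1},\varepsilon_t), \mathbf{b}_t - \mathbf{b}^\ast\rangle$ via a Lipschitz/gradient-norm bound, which is unavailable in this setting. The device that resolves this is that relative smoothness supplies a $+D_h(\mathbf{b}_t,\mathbf{b}_{t-1})$ term that combines with the linearized term into precisely the quantity bounded by Lemma~\ref{pythagorean}; that lemma in turn rests on the inequality $D_h(\mathbf{p}_t,\mathbf{p}_{t-1}) \leq D_h(\mathbf{b}_t,\mathbf{b}_{t-1})$ underlying (\ref{relatives_smoothness}). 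I would therefore be careful to flag exactly where relative smoothness and the optimality condition enter, since it is their interplay under the constant step-size $\eta_{i,t}=1$ — and not any gradient bound — that closes the argument.
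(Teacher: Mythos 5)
Your proof is correct and follows essentially the same route as the paper's own argument: both combine the relative smoothness inequality (\ref{relatives_smoothness}), the optimality estimate of Lemma~\ref{pythagorean}, and the convexity bound $\ell_\varphi(\mathbf{b}^\ast,\mathbf{b}_{t-1},\varepsilon_t) \leq \varphi(\mathbf{b}^\ast,\varepsilon_t)$ into the per-period inequality, then telescope. The only difference is presentational (you start from relative smoothness and substitute the lemma, while the paper starts from the lemma and adds $\varphi(\mathbf{b}_{t-1},\varepsilon_t)$ to both sides), which is immaterial.
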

\begin{proof}
Using Lemma \ref{pythagorean}, it holds that
$$\langle \nabla \varphi(\mathbf{b}_{t-1},\varepsilon_t), \mathbf{b}_t-\mathbf{b}_{t-1}\rangle+D_h(\mathbf{b}_t,\mathbf{b}_{t-1})\leq \langle \nabla \varphi(\mathbf{b}_{t-1},\varepsilon_t), \mathbf{b}^\ast-\mathbf{b}_{t-1}\rangle+D_h(\mathbf{b}^\ast,\mathbf{b}_{t-1})-D_h(\mathbf{b}^\ast,\mathbf{b}_{t}). $$
Adding $\varphi(\mathbf{b}_{t-1},\varepsilon_t)$ on both sides, we obtain that 
$$\ell_{\varphi}(\mathbf{b}_t,\mathbf{b}_{t-1},\varepsilon_t)+D_h(\mathbf{b}_t, \mathbf{b}_{t-1})  \leq \ell_{\varphi}(\mathbf{b}^\ast,\mathbf{b}_{t-1},\varepsilon_t)+D_h(\mathbf{b}^\ast,\mathbf{b}_{t-1})-D_h(\mathbf{b}^\ast, \mathbf{b}_{t}).$$
By the convexity of $\varphi(\cdot)$, we have that $\varphi(\mathbf{b}^\ast,\varepsilon_t) \geq \varphi(\mathbf{b}_{t-1},\varepsilon_t)+\langle  \nabla \varphi(\mathbf{b}_{t-1},\varepsilon_t),\mathbf{b}^\ast - \mathbf{b}_{t-1}, \rangle = \ell_{\varphi}(\mathbf{b}^\ast,\mathbf{b}_{t-1},\varepsilon_t)$. Using the relative smoothness condition, we have that 
\begin{align} \label{obj_diff_proof}
    \varphi(\mathbf{b}_t,\varepsilon_t) &\leq \ell_{\varphi}(\mathbf{b}_t,\mathbf{b}_{t-1},\varepsilon_t)+D_h(\mathbf{b}_t, \mathbf{b}_{t-1}) \notag \\
    &\leq \ell_{\varphi}(\mathbf{b}^\ast,\mathbf{b}_{t-1},\varepsilon_t)+D_h(\mathbf{b}^\ast,\mathbf{b}_{t-1})-D_h(\mathbf{b}^\ast, \mathbf{b}_{t}) \notag \\
    &\leq \varphi(\mathbf{b}^\ast,\varepsilon_t)+D_h(\mathbf{b}^\ast,\mathbf{b}_{t-1})-D_h(\mathbf{b}^\ast, \mathbf{b}_{t}).
\end{align}
\end{proof}

\begin{lemma} If $b_{ij,1}= \frac{B_i}{M}$, then $D_h(\mathbf{b}^\ast,\mathbf{b}_1) \leq \log MN$. \label{D(b,b1)}
\end{lemma}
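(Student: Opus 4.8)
The plan is to expand the Bregman divergence directly and exploit the fact that both $\mathbf{b}^\ast$ and $\mathbf{b}_1$ are feasible bid vectors. First I would recall that with $h(\mathbf{b})=\sum_{i\in\mathcal{N}}\sum_{j\in\mathcal{M}}(b_{ij}\log b_{ij}-b_{ij})$ the induced divergence is $D_h(\mathbf{b},\mathbf{b}')=\sum_{i\in\mathcal{N}}\sum_{j\in\mathcal{M}}\big(b_{ij}\log\tfrac{b_{ij}}{b'_{ij}}-b_{ij}+b'_{ij}\big)$. Since both $\mathbf{b}^\ast$ and $\mathbf{b}_1$ satisfy the per-buyer budget constraint $\sum_{j\in\mathcal{M}}b_{ij}=B_i$, the linear terms telescope: $\sum_{i\in\mathcal{N}}\sum_{j\in\mathcal{M}}(-b^\ast_{ij}+b_{ij,1})=\sum_{i\in\mathcal{N}}(-B_i+B_i)=0$. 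Hence $D_h(\mathbf{b}^\ast,\mathbf{b}_1)=\sum_{i\in\mathcal{N}}\sum_{j\in\mathcal{M}}b^\ast_{ij}\log\tfrac{b^\ast_{ij}}{b_{ij,1}}$, the generalized KL expression on bids already used throughout the paper.

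Next I would substitute the initial bids $b_{ij,1}=B_i/M$, giving $D_h(\mathbf{b}^\ast,\mathbf{b}_1)=\sum_{i\in\mathcal{N}}\sum_{j\in\mathcal{M}}b^\ast_{ij}\log\tfrac{M b^\ast_{ij}}{B_i}$. The key elementary bound is $b^\ast_{ij}\le B_i$, which holds because $b^\ast_{ij}\ge 0$ and $\sum_{j\in\mathcal{M}}b^\ast_{ij}=B_i$; this forces $M b^\ast_{ij}/B_i\le M$ and therefore $\log(M b^\ast_{ij}/B_i)\le \log M$ for every $i,j$. Pulling this pointwise bound out of the nonnegatively-weighted sum yields $D_h(\mathbf{b}^\ast,\mathbf{b}_1)\le \log M\cdot\sum_{i\in\mathcal{N}}\sum_{j\in\mathcal{M}}b^\ast_{ij}$, and invoking the budget normalization $\sum_{i\in\mathcal{N}}B_i=1$ collapses the double sum to $\sum_{i\in\mathcal{N}}B_i=1$. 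This gives $D_h(\mathbf{b}^\ast,\mathbf{b}_1)\le \log M$, and since $N\ge 1$ we conclude $D_h(\mathbf{b}^\ast,\mathbf{b}_1)\le \log M\le \log MN$.

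As an alternative to the pointwise bound I could set $q_{ij}:=b^\ast_{ij}/B_i$, note $\sum_{j\in\mathcal{M}}q_{ij}=1$, and use $\sum_{j\in\mathcal{M}}q_{ij}\log(M q_{ij})=\log M-H(\mathbf{q}_i)\le\log M$, where $H(\mathbf{q}_i)\ge 0$ is the Shannon entropy; either route gives the same constant. The computation is entirely routine, so there is no genuine obstacle. The only points requiring care are verifying that $\mathbf{b}_1$ is feasible (so the linear terms cancel and $D_h$ reduces to the clean KL expression) and correctly invoking the normalization $\sum_{i\in\mathcal{N}}B_i=1$; the resulting bound $\log M$ is in fact slightly tighter than the stated $\log MN$, which therefore holds a fortiori.
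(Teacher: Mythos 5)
Your proof is correct, and it takes a sharper bounding route than the paper's. Both arguments begin identically: write $D_h(\mathbf{b}^\ast,\mathbf{b}_1)$ in generalized KL form (you justify the cancellation of the linear terms via feasibility of both bid vectors, a step the paper leaves implicit) and substitute $b_{ij,1}=B_i/M$ to obtain $\sum_{i\in\mathcal{N}}\sum_{j\in\mathcal{M}}b^\ast_{ij}\log\bigl(Mb^\ast_{ij}/B_i\bigr)$. From there the paper splits the logarithm into three sums, $\log M+\log b^\ast_{ij}-\log B_i$, and bounds the last two \emph{separately}: $\sum_{i,j}b^\ast_{ij}\log b^\ast_{ij}\le 0$ and $-\sum_{i}B_i\log B_i\le \log N$, which is exactly where the extra $\log N$ enters. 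You instead keep the ratio $b^\ast_{ij}/B_i$ intact and use the pointwise bound $b^\ast_{ij}\le B_i$ (equivalently, nonnegativity of the Shannon entropy of your normalized per-buyer distributions $\mathbf{q}_i$), so that contribution is $\le 0$ and you land on $\log M$. Indeed, the combined quantity that the paper bounds by $\log N$ equals $-\sum_i B_i H(\mathbf{q}_i)\le 0$, a negated conditional entropy, so the paper's term-by-term treatment is precisely what costs the slack; your argument shows it is unnecessary. The payoff is more than cosmetic: since this lemma supplies the constant $D_h(\mathbf{b}^\ast,\mathbf{b}_1)$ that drives the regret analysis, your version would sharpen the fairness-regret bounds in Theorem \ref{SW_regret_stationary}, Corollary \ref{convergencerate_Stationary}, and Theorems \ref{nonstationary_independent}--\ref{nonstationary_Periodic} from $\log MN$ to $\log M$ (with the corresponding improvement inside the square roots of the individual buyer's regret bounds), while of course still implying the stated inequality $D_h(\mathbf{b}^\ast,\mathbf{b}_1)\le\log MN$ a fortiori.
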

\proof{
\begin{align*}
D_h(\mathbf{b}^\ast,\mathbf{b}_1)&=\sum_{i \in \mathcal{N}}\sum_{j \in \mathcal{M}}b_{ij}^\ast \log \frac{b_{ij}^\ast}{b_{ij,1}}=\sum_{i \in \mathcal{N}}\sum_{j \in \mathcal{M}}b_{ij}^\ast \log \frac{Mb_{ij}^\ast}{B_i}\\
    &=\sum_{i \in \mathcal{N}}\sum_{j \in \mathcal{M}}b_{ij}^\ast\log M +\sum_{i \in \mathcal{N}}\sum_{j \in \mathcal{M}}b_{ij}^\ast \log b_{ij}^\ast-\sum_{i \in \mathcal{N}}\sum_{j \in \mathcal{M}}b_{ij}^\ast\log B_i\\
    &\leq \log M+\log N.
\end{align*}
The last inequality comes from the fact that 
$\sum_{i \in \mathcal{N}}\sum_{j \in \mathcal{M}} b_{ij}^\ast =1$, $\log b_{ij}^\ast \leq 0$ and $-\sum_{i \in \mathcal{N}}\sum_{j \in \mathcal{M}}b_{ij}^\ast\log B_i=-\sum_{i \in \mathcal{N}} B_i \log B_i \leq \log N$.
This lemma follows Lemma 13 in \cite{birnbaum2011distributed}.
}
\begin{lemma} For any $t=1,\ldots,T$, $\mathbf{b}_t$ is a minimum of (\ref{OMDupdate}), we have that 
$$\mathbf{b}_{t}:= \argmin \langle \nabla \varphi(\mathbf{b}_{t-1},\varepsilon_t),\mathbf{b}_t-\mathbf{b}_{t-1} \rangle]+D_h(\mathbf{b}_t,\mathbf{b}_{t-1}),$$
and 
$$ \varphi(\mathbf{b}_t,\varepsilon_t) \leq \varphi(\mathbf{b}_{t-1},\varepsilon_t).$$
\end{lemma}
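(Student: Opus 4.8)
The plan is to exploit two facts already established in the excerpt: the relative smoothness inequality (\ref{relatives_smoothness}), which gives $\varphi(\mathbf{b}_t,\varepsilon_t)\le \ell_\varphi(\mathbf{b}_t,\mathbf{b}_{t-1},\varepsilon_t)+D_h(\mathbf{b}_t,\mathbf{b}_{t-1})$, and the defining property that $\mathbf{b}_t$ is, by construction, the minimizer over $K$ of the surrogate objective $g(\mathbf{b}):=\ell_\varphi(\mathbf{b},\mathbf{b}_{t-1},\varepsilon_t)+D_h(\mathbf{b},\mathbf{b}_{t-1})$ appearing in the update rule (\ref{OMDupdate}). The first displayed equation in the statement is merely a restatement of this update rule, so no argument is needed there; the genuine content is the monotonicity inequality $\varphi(\mathbf{b}_t,\varepsilon_t)\le\varphi(\mathbf{b}_{t-1},\varepsilon_t)$.

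First I would observe that the previous iterate $\mathbf{b}_{t-1}$ is itself a feasible point of $K$, so the optimality of $\mathbf{b}_t$ immediately yields $g(\mathbf{b}_t)\le g(\mathbf{b}_{t-1})$. Next I would evaluate the surrogate at $\mathbf{b}_{t-1}$: since $\ell_\varphi(\mathbf{b}_{t-1},\mathbf{b}_{t-1},\varepsilon_t)=\varphi(\mathbf{b}_{t-1},\varepsilon_t)+\langle\nabla\varphi(\mathbf{b}_{t-1},\varepsilon_t),\mathbf{b}_{t-1}-\mathbf{b}_{t-1}\rangle=\varphi(\mathbf{b}_{t-1},\varepsilon_t)$ and $D_h(\mathbf{b}_{t-1},\mathbf{b}_{t-1})=0$, the Bregman divergence vanishing on the diagonal, one gets $g(\mathbf{b}_{t-1})=\varphi(\mathbf{b}_{t-1},\varepsilon_t)$.

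Chaining these, the relative smoothness bound gives $\varphi(\mathbf{b}_t,\varepsilon_t)\le g(\mathbf{b}_t)\le g(\mathbf{b}_{t-1})=\varphi(\mathbf{b}_{t-1},\varepsilon_t)$, which is exactly the claim. I do not foresee a real obstacle here: each step is a one-line consequence of a property stated earlier, namely relative smoothness, the defining optimality of the mirror-descent update, and the elementary identities $D_h(\mathbf{q},\mathbf{q})=0$ and $\ell_\varphi(\mathbf{q},\mathbf{q},\cdot)=\varphi(\mathbf{q},\cdot)$. The only point deserving care is confirming that the step-size choice $\eta_{i,t}=1$ is what makes the surrogate $g$ coincide (up to the additive constant $\varphi(\mathbf{b}_{t-1},\varepsilon_t)$, which does not affect the $\argmin$) with the bracketed objective in (\ref{OMDupdate}), so that the relative smoothness constant $L=1$ enters cleanly; this is precisely the content of (\ref{relatives_smoothness}) and requires no new work.
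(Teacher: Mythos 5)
Your proposal is correct and follows essentially the same route as the paper's own proof: the chain $\varphi(\mathbf{b}_t,\varepsilon_t)\le \ell_\varphi(\mathbf{b}_t,\mathbf{b}_{t-1},\varepsilon_t)+D_h(\mathbf{b}_t,\mathbf{b}_{t-1})\le \ell_\varphi(\mathbf{b}_{t-1},\mathbf{b}_{t-1},\varepsilon_t)+D_h(\mathbf{b}_{t-1},\mathbf{b}_{t-1})=\varphi(\mathbf{b}_{t-1},\varepsilon_t)$, obtained from relative smoothness, the optimality of $\mathbf{b}_t$ against the feasible point $\mathbf{b}_{t-1}$, and the identities $\ell_\varphi(\mathbf{q},\mathbf{q},\cdot)=\varphi(\mathbf{q},\cdot)$, $D_h(\mathbf{q},\mathbf{q})=0$, is exactly the argument in the paper. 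Your handling of the first displayed equation (the additive constant $\varphi(\mathbf{b}_{t-1},\varepsilon_t)$ not affecting the $\argmin$) also matches the paper's reasoning.
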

\proof{Since $\mathbf{b}_t$ is a minimum of (\ref{OMDupdate}), then
\begin{align*}
    \mathbf{b}_{t}&:= \argmin \left\lbrace\ell_\varphi(\mathbf{b}_t,\mathbf{b}_{t-1},\varepsilon_t) + D_h(\mathbf{b}_t,\mathbf{b}_{t-1}) \right\rbrace\\ \notag
  &:=\argmin \left\lbrace \varphi(\mathbf{b}_{t-1},\varepsilon_t)+\langle \nabla \varphi(\mathbf{b}_{t-1},\varepsilon_t),\mathbf{b}_t-\mathbf{b}_{t-1} \rangle+D_h(\mathbf{b}_t,\mathbf{b}_{t-1}) \right\rbrace \\ \notag
  &:= \argmin \left\lbrace \langle \nabla \varphi(\mathbf{b}_{t-1},\varepsilon_t),\mathbf{b}_t-\mathbf{b}_{t-1} \rangle]+D_h(\mathbf{b}_t,\mathbf{b}_{t-1})\right\rbrace.
\end{align*}
The last equality holds due to $\varphi(\mathbf{b}_{t-1},\varepsilon_t)$ being independent of $\mathbf{b}_t$. And we have that 
\begin{align*}
    \varphi(\mathbf{b}_t,\varepsilon_t) & \leq  \ell_\varphi(\mathbf{b}_t,\mathbf{b}_{t-1},\varepsilon_t) + D_h(\mathbf{b}_t,\mathbf{b}_{t-1}) \\
    & \leq \ell_\varphi(\mathbf{b}_{t-1},\mathbf{b}_{t-1},\varepsilon_t) + D_h(\mathbf{b}_{t-1},\mathbf{b}_{t-1}) \\
    &=\varphi(\mathbf{b}_{t-1},\varepsilon_t).
\end{align*}
The first inequality comes from the relative smoothness condition. The second inequality comes from that $\mathbf{b}_t$ is a minimum of (\ref{OMDupdate}).}
\end{document}